\newtheorem{thm}{Theorem}
\newtheorem*{thm*}{Theorem}
\newtheorem{definition}{Definition}
\newtheorem{prop}{Proposition}
\newtheorem*{prop*}{Proposition}
\newtheorem{lem}{Lemma}
\newtheorem*{lem*}{Lemma}
\newtheorem{cor}{Corollary}
\newtheorem*{cor*}{Corollary}
\newtheorem{remark}{Remark}
\definecolor{KKgreen}{RGB}{0,100,0}
\newcommand{\ot}{\otimes}
\def\supp{{\rm supp}}
\def\cA{{\mathcal A}}
\def\cB{{\mathcal B}}
\def\cE{{\mathcal E}}
\def\cF{{\mathcal F}}
\def\cH{{\mathcal H}}
\def\cM{{\mathcal M}}
\def\cN{{\mathcal N}}
\def\cR{{\mathcal R}}
\def\cS{{\mathcal S}}
\def\cZ{{\mathcal Z}}
\def\tr{{\rm tr}}
\def\id{{\rm id}}
\def\C{\mathbb{C}}
\def\>{\rangle}
\def\<{\langle}
\def\supp{{\rm supp}}
\def\spann{{\rm span}}
\def\alg{{\rm Alg}}
\def\dim{{\rm dim}}
\newcommand{\ket}[1]{\left|#1\right\rangle}
\newcommand{\bra}[1]{\left\langle#1\right|}
\newcommand{\ketr}[1]{\left|#1\right)}
\newcommand{\brar}[1]{\left(#1\right|}
\newcommand{\bk}[1]{\left(#1\right)}
\newcommand{\lr}[1]{\left #1\right}
\begin{document}

\title{Exact renormalization group flow for matrix product density operators}

\author{Kohtaro Kato}
\affiliation{Department of Mathematical Informatics, \\
Graduate School of Informatics, \\
Nagoya University, Nagoya 464-0814, Japan}

\begin{abstract}
Matrix product density operator (MPDO) provides an efficient tensor network representation of mixed states on one-dimensional quantum many-body systems. We study a real-space renormalization group transformation of MPDOs represented by a circuit of local quantum channels. We require that the renormalization group flow is exact, in the sense that it exactly preserves the correlation between the coarse-grained sites and is therefore invertible by another circuit of local quantum channels. Unlike matrix product states (MPS), which always have a well-defined isometric renormalization transformation, we show that general MPDOs do not necessarily admit a converging exact renormalization group flow. We then introduce a subclass of MPDOs with a well-defined renormalization group flow, and show the structure of the MPDOs in the subclass as a representation of a pre-bialgebra structure. As a result, such MPDOs obey generalized symmetry represented by matrix product operator algebras associated with the pre-bialgebra. We also discuss implications with the classification of mixed-state quantum phases. 
\end{abstract}
\maketitle
\section{Introduction}
Tensor networks provide a powerful and versatile framework for studying quantum many-body systems, offering efficient descriptions that capture essential features of entanglement and correlations. For one-dimensional (1D) systems, matrix product states (MPSs)~\cite{DMRG1,DMRG2,affleck2004rigorous,mps07} serve as an efficient and compact tensor network representation of the pure ground states of gapped local Hamiltonians~\cite{Hastings_2007}. Moreover, any MPS has a corresponding gapped, local, and frustration-free parent Hamiltonian such that the given MPS is a ground state of this Hamiltonian~\cite{mps07,fannes1992finitely}. This equivalence provides a physical interpretation of the descriptive capability of MPS.

One of the fundamental problems in many-body physics is classifying various phases of matter. Tensor networks can represent ground states of several intriguing quantum phases, such as symmetry-protected topological (SPT) phases~\cite{affleck2004rigorous,PhysRevB.81.064439} and topologically ordered phases~\cite{Pepslevinwen1,Pepslevinwen2,SCHUCH20102153,MPOinjectivePEPS}.  
For gapped systems, two systems are in the same gapped quantum phase if their ground states can be connected by a short-depth local unitary circuit ~\cite{LU1,LU2}. 

Real-space renormalization group (RG) flow is a robust approach for understanding large-scale physics, as it allows the exclusion of short-range details while capturing the universal properties of a system. MPSs have been demonstrated to exhibit a well-defined isometric renormalization flow~\cite{PhysRevLett.94.140601}, which can be interpreted as a short-depth local unitary circuit. As this circuit is exactly invertible, MPSs lie within the same gapped quantum phase as the fixed point. Consequently, it is sufficient to classify the phases of fixed points that exhibit simpler structures~\cite{PhysRevB.83.035107,MPSphase2}. 

Hastings initiated the study of mixed-state quantum phases, defined as equivalence classes of physically realizable mixed states connected by a short-depth local circuit of quantum channels~\cite{PhysRevLett.107.210501}. These phases represent a natural generalization of symmetry-protected topological (SPT) phases and topologically ordered phases, and their classification has garnered increasing attention in recent years~\cite{mixedphase0,mixedphase3,mixedphase1,mixedphase2,MPOalg2}.

Matrix product density operators (MPDOs), a generalization of MPSs to one-dimensional (1D) mixed states, are natural candidates for representing physically realizable mixed states. MPDOs efficiently describe a broad class of quantum states, including the Gibbs states of local Hamiltonians~\cite{MPOGibbs1,MPOGibbs2} and the fixed points of local dissipative dynamics~\cite{MPOdissipative}. However, in contrast to MPSs, the full descriptive power of MPDOs remains unclear. While generic MPDOs are expected to represent Gibbs states~\cite{MPDOCMI}, they also encompass boundary states of two-dimensional (2D) topologically ordered phases~\cite{PEPSboundary0,PEPSboundary,CIRAC2017100,MPOalg2}, which cannot be described as Gibbs states of local Hamiltonians.

To explore the structure of MPDOs, Cirac et al.~\cite{CIRAC2017100} introduced the concept of ``renormalization fixed point MPDOs," where the system size can be freely increased or decreased by applying local quantum channels. It has been shown that the structure of these fixed points consists of a product state combined with a global matrix product operator (MPO) that commutes with the product state. The global MPO is believed to capture the global anomalous MPO symmetry of the boundary theory, with its algebraic structure further examined in studies such as Refs.~\cite{MPOinjectivePEPS, MPOalg2}.

One can imagine that general MPDOs are converted to these fixed points through short-depth local circuits of quantum channels and vice versa, and the mixed-state quantum phases are classified by analyzing the fixed points as the gapped quantum phases of MPS. However, the existence of such RG flows is not as straightforward as that of MPS, as quantum channels are generally not invertible.

In this paper, we address the challenge of renormalizing MPDOs by introducing a real-space RG transformation, specifically tailored for these operators. Our approach utilizes a circuit of local quantum channels that precisely preserves correlations between coarse-grained sites, ensuring that the flow is exactly invertible by another set of local quantum channels. We refer to this RG flow as exact~\cite{mixedphase1}. The invertibility condition ensures that the coarse-graining operation performs an exact compression of a quantum mixed bipartite state~\cite{KIdecomposition, Hayden2004-os,kato2023}, a concept extensively studied in the field of sufficient statistics~\cite{Fisher1,Cover2006,petz1986sufficient,10.1093/qmath/39.1.97}.

Our first result shows that, unlike MPSs, MPDOs generally do not admit a converging exact RG flow. This finding motivates us to introduce a subclass of MPDOs that exhibit a well-defined renormalization flow. We further demonstrate that this subclass of MPDOs can be expressed as the product of a global MPO, forming an MPO representation of a pre-bialgebra structure and a locally interacting component that commutes with the global MPO. This structure theorem implies that MPDOs in this subclass obey generalized symmetries described by pre-bialgebras~\cite{MPOalg1}. This discovery enhances our understanding of the symmetry properties of mixed states and their behavior under renormalization.

Finally, we analyze the fixed points of the renormalization flow and discuss their implications for the classification of mixed-state quantum phases, providing new insights into the descriptive capabilities of MPDOs and the classification of mixed-state quantum phases.

The organization of this paper is as follows: In Sec.~\ref{sec:prelim}, we introduce various notations and set up the necessary framework. In Sec.~\ref{sec:convergence}, we analyze the conditions under which exact RG flows converge, showing that certain MPDOs do not admit a converging flow. In Sec~\ref{sec:structure},we introduce a subclass of MPDOs that possess a well-defined exact RG flow and derive the structure theorem of these MPDOs in terms of an MPO representation of pre-bialgebras. Sec.~\ref{sec:exampleX} presents an explicit example of an MPDO that does not lie at a fixed point but exhibits an exact RG flow. As an application of the structure theorem, in Sec.~\ref{sec:symmetry}, we show that MPDOs in the subclass obey a MPO-symmetry. We discuss an implication to classification of 1D mixed-state quantum phases in Sec.~\ref{sec:classification}. we discuss the implications of these findings for the classification of one-dimensional (1D) mixed-state quantum phases. All technical proofs are provided in the Appendix.

\section{Preliminary}\label{sec:prelim}
{\bf Notations:}
For a $C^*$-algebra $\cA\subset\cB(\cH)$, we denote the commutant by $\cA'$ and the dual space by $\cA^*$. For a set of operators $\cS\subset\cB(\cH)$, ${\rm span}\cS$ is the linear span and ${\alg}(\cS)$ is the $C^*-$algebra generated by $\cS$. For an operator $O\in\cB(\cH)$, $\supp(O)$ denotes the support of $O$. 

For a finite-dimensional Hilbert space $\cH$, we denote the set of bounded operators by $\cB(\cH)$ and the set of density operators (positive semi-definite operators with normalized trace) by $\cS(\cH)$. For composite systems, we often distinguish the subsystems by their indices and simply denote $\cH_A\ot\cH_B$ by $\cH_{AB}$. Similarly, the tensor products of vectors are omitted, and we denote by $\ket{i_1}\ot\ket{i_2}$ by $\ket{i_1i_2}$. For a density operator $\rho_{AB}\in\cS(\cH_{AB})$, the reduced density operator on $A$ is written as $\rho_A:=\tr_B\rho_{AB}$.   
We refer to completely-positive trace-preserving (CPTP) maps as quantum channels. The von Neumann entropy of a state $\rho$ is defined by $S(\rho):=\tr\bk{\rho\log_2\rho}$. The mutual information of a bipartite state $\rho_{AB}$ is then defined as $I(A:B)_\rho:=S(\rho_A)+S(\rho_B)-S(\rho_{AB})\geq0$. 

We denote a finite 1D spin system with length $L$ by $\Lambda_L\subset \mathbb{Z}$, where a quantum spin $\C^d (d<\infty)$ is associated for each site. The computational basis of $\C^d$ is denoted by $\{\ket{i}\}_{i=1}^d$. Hilbert space $\C^D$ is used to represent the virtual degrees of freedom of tensor networks. To distinguish it from physical degrees of freedom $\C^d$, the computational basis of $\C^D$ is denoted by $\{\ketr{\alpha}\}_{\alpha=1}^D$. 
 
    \subsection{Matrix Product States and Matrix Product Density Operators}
        \subsubsection{Matrix Product States (MPSs)}
        Consider a tensor $A$ defined on $\C^d\ot\cB\bk{\C^D}$,
        \begin{equation*}
            A=\sum_{i=1}^d\ket{i}\ot A^i\,,
        \end{equation*}
        where $A^i\:(i=1,...,d)$ are $D\times D$ matrices.  The dimension $D$ is called the bond dimension of $A$ and depends on the choice of the tensor. Note that throughout the paper, we consider only constant bond dimension $D$. 
        We define pure quantum states as:  
        \begin{equation*}
            \ket{\psi^{(L)}}=\sum_{i_1,...,i_L=1}^d\tr\bk{A^{i_1}A^{i_2}...A^{i_L}}\ket{i_1i_2...i_L},
        \end{equation*} 
        for $L=1,2,...$ are the (translation-invariant) MPS generated by $A$ (Fig.~\ref{fig:MPS1}).

\begin{figure}[htbp]
	\centering
	\includegraphics[width=0.48\textwidth]{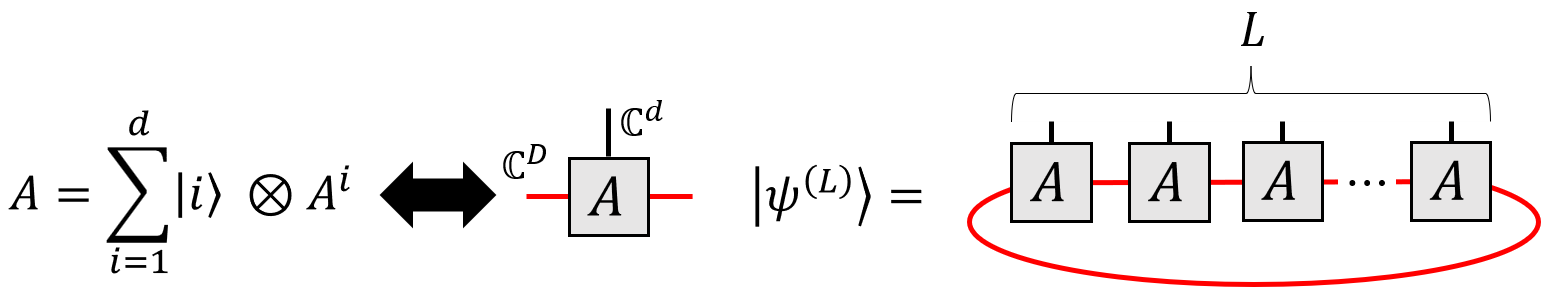}
	\caption{A tensor for MPS is diagrammatically represented by a three-legged box. MPS (with the closed boundary condition) is then represented by concatenating the virtual degrees of freedom (red) in a horizontal direction. }
	\label{fig:MPS1}
\end{figure}

        MPS is invariant under a similarity transformation 
        \begin{equation*}
            A^i\mapsto XA^iX^{-1},
        \end{equation*}
        where $X$ is any invertible matrix. In this way, any tensor $A$ on $\C^d\ot\cB\bk{\C^D}$ can be transformed to the form:
        \begin{equation}\label{eq:MPSCF}
            A^i=\bigoplus_{k=1}^g A_k^i\ot N_k\,,
        \end{equation}
        where $N_k={\rm diag}\bk{\mu_1^k,\mu_2^k,...,\mu_{D_k^R}^k}$ is a $D_k^R$-dimensional diagonal matrix and $\{A_k^i\}_k$ is a set of $D_k^L\times D_k^L$ matrices forming a {\it basis of normal tensors (BNT)} of $A$~\cite{CIRAC2017100}. The dimensions of factors $D_k^L, D_k^R$ and the matrices $N_k$ depend on the choice of $A$. 
        \begin{definition}
        A set of matrices $\{A_k^i\}_k$ form a \it{basis of normal tensors} (BNT) of $A$ if the following condtions hold:
        \begin{itemize}
            \item[(i)] Vectors
            \begin{equation*}
                \sum_{i_1,...,i_L=1}^d\tr\bk{A_k^{i_1}A_k^{i_2}...A_k^{i_L}}\ket{i_1i_2...i_L}\,,\quad k=1,...,g
            \end{equation*}
            are linearly independent for all $L\in\mathbb{N}$.
            \item[(ii)] The MPS generated by $A$ is a linear combination of these vectors.
            \item[(iii)] Each $\cE_k(\cdot):=\sum_iA^i\cdot \bk{A^i}^\dagger$ is a completely-positive (CP) map with the maximal non-degenerate eigenvalue 1 which has no nontrivial invariant subspace.
        \end{itemize}        
        \end{definition}
        
        It is known that any tensor satisfies the following additional condition after blocking a certain number of times, independent of $L$~\cite{mps07}:
        \begin{definition}
            We say $A$ is in \it{block injective canonical form (biCF)} if it is in the form~\eqref{eq:MPSCF} and for any 
            $X=\bigoplus_kX_k$, there exist $c_{i}(X)\in\C$ such that 
            \begin{equation}\label{eq:biCF}
                X_k=\sum_{i}c_{i}(X)A_k^{i}\,\quad k=1,...,g.
            \end{equation}
        \end{definition}
        In what follows, we assume that the MPS tensor~\eqref{eq:MPSCF} is in biCF without loss of generality.  
        
        \subsubsection{Matrix Product Density Operators (MPDOs)}
        An MPDO is a generalization of MPS to 1D mixed states. 
        A tensor $M\in\cB(\C^d\ot\C^D)$ for an MPDO has four legs and written as:
        \begin{equation*}
            M=\sum_{i,j=1}^d\ket{i}\bra{j}\ot M^{ij}\,,
        \end{equation*}
        where $M^{ij}\:(i,j=1,...,d)$ are $D\times D$ matrices. We define the density operators:  
        \begin{equation}\label{eq:MPDO}
            \rho^{(L)}=\sum_{\bf i,j} \tr\bk{M^{i_1j_1}M^{i_2j_2}...M^{i_Lj_L}}\ket{i_1...i_L}\bra{j_1...j_L}\,,
        \end{equation}
        where ${\bf i}=(i_1,...,i_L)$ and ${\bf j}=(j_1,...,j_L)$, for $L=1,2,...$ are the MPDO generated by $M$ (Fig.~\ref{fig:MPDO}). Note that while some literature does not require the normalization $\tr\rho^{(L)}=1$ we enforce it in this paper.  
 \begin{figure}[htbp]
	\centering
	\includegraphics[width=0.48\textwidth]{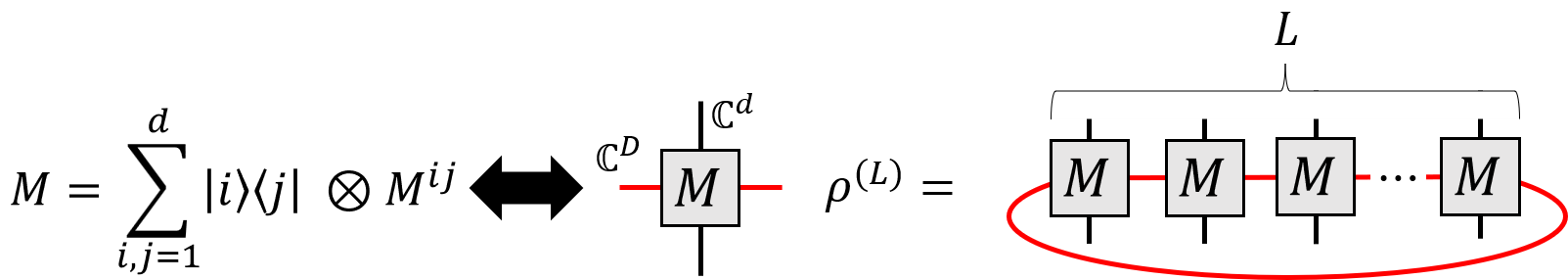}
	\caption{A tensor for MPDO is diagrammatically represented by a four-legged box, where the upper leg represents the ket and the bottom leg represents the bra vector. An MPDO (with the closed boundary condition) is then represented by concatenating the virtual degrees of freedom (red) in a horizontal direction. }
	\label{fig:MPDO}
\end{figure}

        Similarly to the MPS, any tensor $M$ can be transformed into the form:
        \begin{equation}\label{eq:h-CF1}
            M^{ij}=\bigoplus_{k=1}^g M^{ij}_k\ot N_k\,,
        \end{equation}
        where $N_k={\rm diag}\bk{\mu_1^k,\mu_2^k,...,\mu_{D_k^R}^k}$ is a $D_k^R$-dimensional diagonal matrix and $\{M_k^{ij}\}_k$ is a set of $D_k^L\times D_k^L$ matrices forming a BNT. As for MPSs, we assume that $M$ is also biCF without loss of generality. With this additional assumption, we call the form in Eq.~\eqref{eq:h-CF1} a {\it horizontal canonical form (h-CF)} (Fig.~\ref{fig:hCF}). 

\begin{figure}[htbp]
	\centering
	\includegraphics[width=0.35\textwidth]{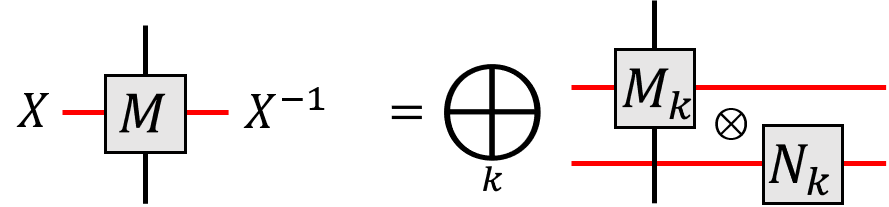}
	\caption{The h-CF of the MPDO tensor $M$. After the decomposition, each $N_k$ only has virtual legs and does not depend on the physical degrees of freedom. }
	\label{fig:hCF}
\end{figure}

        Since $M$ can be multiplied not only in the horizontal direction (virtual degrees of freedom), but also in the vertical direction (physical degrees of freedom), it is convenient to introduce the following representation: \begin{align}
            M&=\sum_{k=1}^g\sum_{i,j=1}^d\ket{i}\bra{j}\ot M^{ij}_k\ot N_k \nonumber\\
            &=\sum_{k=1}^g\sum_{\alpha^k,\beta^k=1}^{D_k^L} W_k^{\alpha_k\beta_k} \ot \ketr{\alpha_k}\brar{\beta_k}\ot N_k\,,
        \end{align}
        where 
        \begin{equation*} W_k^{\alpha_k\beta_k}:=\sum_{i,j=1}^d\brar{\alpha_k}M^{ij}_k\ketr{\beta_k}\ket{i}\bra{j}\,.
        \end{equation*}
        $W^{\alpha_k\beta_k}_k$ represent the matrices that appear when $M$ is regarded as an MPO tensor in the vertical direction (Fig.~\ref{fig:MW}). 
 \begin{figure}[htbp]
	\centering
	\includegraphics[width=0.4\textwidth]{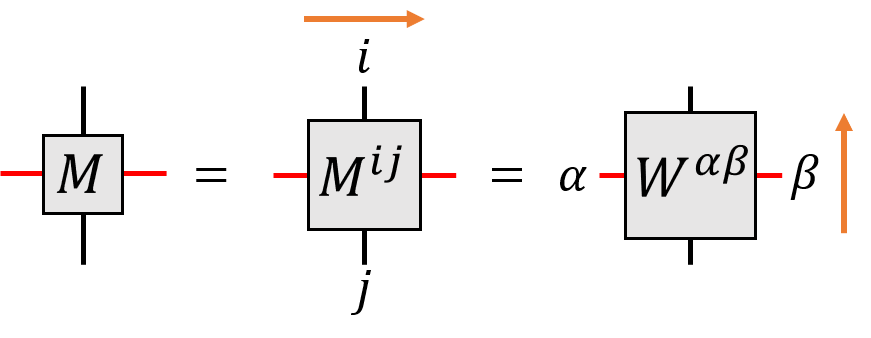}
	\caption{The MPDO tensor $M$ can be regarded as matrices acting on the virtual degrees of freedom ($M$) or as matrices acting on the physical degrees of freedom ($W$). Here we first decompose $M^{ij}$ to the h-CF to eliminate irrelevant degrees $N_k$, and then consider each $M_k^{ij}$ as a matrix in the vertical direction. }
	\label{fig:MW}
\end{figure}
        
        In terms of these matrices $W_k^{\alpha_k\beta_k}$, $\rho^{(L)}$ is written as
        \begin{equation*}
            \rho^{(L)}=\sum_{k=1}^g\tr(N_k^L)\hspace{-0.2cm}\sum_{\alpha_k^1,...,\alpha_k^L=1}^{D_k^L}W_k^{\alpha_k^1\alpha_k^2}\ot W_k^{\alpha_k^2\alpha_k^3}\ot\cdots\ot  W_k^{\alpha_k^L\alpha_k^1}\,.
        \end{equation*}

        It has been shown~\cite{CIRAC2017100} that for any MPDOs in h-CF, there is a unitary $U$ such that the following equation holds: 
        \begin{align*}
            UW_k^{\alpha_k\beta_k}U^\dagger=\bigoplus_{a=1}^r W_{k,a}^{\alpha_k\beta_k} \ot \Omega_a\,,
        \end{align*}
        where $\{W_{k,a}^{\alpha_k\beta_k}\}_a$ forms a BNT and $\Omega_a>0$. We call this decomposition as the {\it vertical canonical form (v-CF)} (Fig.~\ref{fig:vCF}).
\begin{figure}[htbp]
	\centering
	\includegraphics[width=0.3\textwidth]{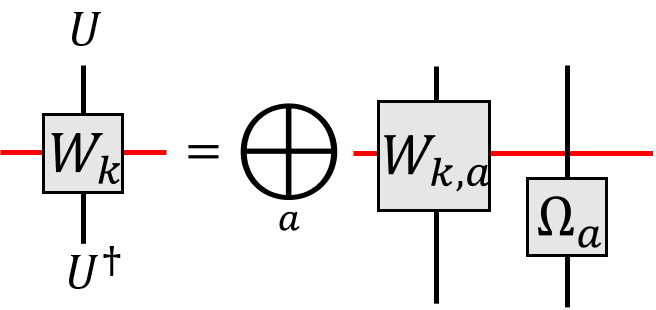}
	\caption{The v-CF of the MPDO tensor $M$ for each $W_k$. The v-CF can be achieved by a unitary transformation, and each $\Omega_a$ only has physical legs and does not genuinely contribute to the correlation with the neighbor tensors. }
	\label{fig:vCF}
\end{figure}
        
        Unlike h-CF, v-CF does not simply concatenate under tensor blockings. Consider the vertical matrices over a block of $l$ sites: 
        \begin{equation*}
            W_k^{\alpha_k\beta_k}[l]:=\sum_{\gamma_k^1,..,\gamma_k^{l-1}=1}^{D_k^L}W_k^{\alpha_k\gamma^1_k}\ot W_k^{\gamma^1_k\gamma_k^2}\ot\cdots\ot W_k^{\gamma^{l-1}_k\beta_k}\,.
        \end{equation*}
        These $d^l\times d^l$ matrices are also transformed by a unitary matrix $U_l$ on $\bk{\C^d}^{\otimes l}$ to the v-CF:
        \begin{align}
            U_lW_k^{\alpha_k\beta_k}[l]U_l^\dagger=\bigoplus_{a'=1}^{r'} W_{k,a'}^{\alpha_k\beta_k}[l] \ot \Omega_{a'}^{(l)}\,.\label{eq:v-CF}
        \end{align}
        In general, both the number of blocks $r'$ and the sizes of the direct sums of blocks depend on $l$. Note that we do not require biCF condition~\eqref{eq:biCF} in the vertical direction, as blocking tensors vertically would change the state. 
        
    \subsection{Exact RG transformations of 1D tensor networks}
    Real-space RG transformation is a coarse-graining process which removes the effects of short-range physics and provides information about large-scale (long-range) physics such as, the phases of matter. 
    
    In this paper, we focus on {\it exact} real-space RG transformations, meaning that each RG step completely preserves correlations between blocks and is therefore invertible by another quantum channel (fine-graining). Although approximate RG transformations are more 
 practical for numerical use, the relationships to well-defined fixed points and phase classifications are more subtle.  

        \subsubsection{RG transformations of Matrix Product States}
        An exact real-space RG transformation for MPS was first introduced in Ref.~\cite{PhysRevLett.94.140601}. For MPSs, the von Neumann entropy of $\rho_A=\tr_{\Lambda_L\backslash A}\psi^{(L)}$, $S(\rho_A)$ is bounded by a constant (an area law of entanglement). The von Neumann entropy is the asymptotically optimal rate of quantum data compression~\cite{PhysRevA.51.2738}, and thus the area law suggests that effective degrees of freedom can be reduced to a fixed size. After blocking neighboring sites such that $d\geq D^2$, the polar decomposition of the tensors yields that there always exists an isometry $V:\C^d\to(\C^d)^{\ot 2}$ and matrices ${\tilde A}^k$ such that: 
        \begin{equation}
            \sum_{i,j=1}^d\ket{ij}\ot A^iA^j=\sum_{k=1}^dV\ket{k}\ot {\tilde A}^k\,
        \end{equation}
        (Fig.~\ref{fig:MPSRG}). 
\begin{figure}[htbp]
	\centering
	\includegraphics[width=0.48\textwidth]{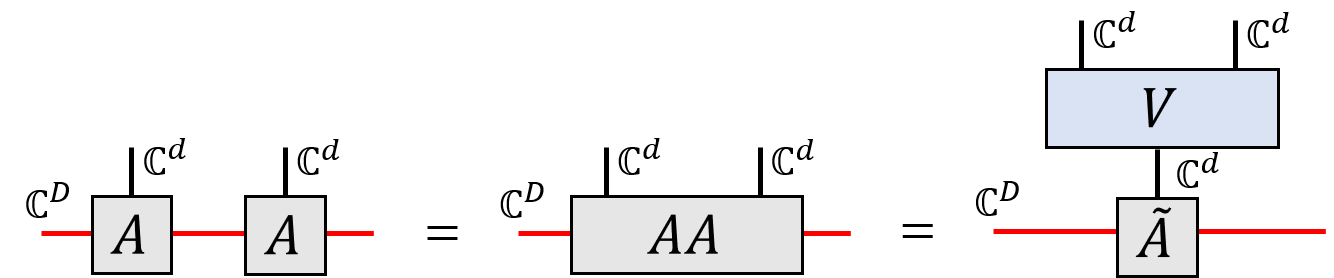}
	\caption{Any MPS admits an isometric RG transformation that reduces the physical system size to that of the single site.}
	\label{fig:MPSRG}
\end{figure}

        One can coarse-grain neighboring $l$-sites into a single site by iterating the same procedure for arbitrary $l$.   
        Therefore, any MPS admits an RG flow that keeps the physical degrees of freedom at each site constant $d$. The transfer matrix $\mathbb{E}:=\sum_{i}A^i\ot\bar{A^i}$ converges to a projector under this coarse-graining~\cite{PhysRevLett.94.140601}.
        
        The fixed points of the RG flow are isometric MPSs~\cite{PhysRevLett.94.140601,MPSphase2}.  Up to a local unitary, an isometric MPS is mapped to pairs of maximally entangled states between sites and a GHZ-like state spreading across the entire system. This implies that 1D gapped quantum phases without symmetry are specified by the ground state degeneracy~\cite{MPSphase2}. 
        
        \subsubsection{RG transformations of MPDOs}\label{sec:RFP1}
        The real-space RG transformation for mixed states is more involved, as typical mixed states, such as the Gibbs states, obey a volume law of entropy.  As a result, a naive blocking and local unitary transformation would lead to divergent degrees of freedom on each site. Therefore, to obtain a converging RG flow, one needs to apply a general quantum channel that properly removes the short-range entropy contributions. 

        In each step of the RG transformation for MPDOs, it is desirable that consecutive spins are mapped to one coarse-grained site by a quantum channel $\cE_{2\to1}$ (Fig.~\ref{fig:MPDORG}). 
        \begin{equation*}
            \sum_{{\bf i},{\bf j}=1}^d\cE_{2\to1}\bk{\ket{i_1i_2}\bra{j_1j_2}}\ot M^{i_1j_1}M^{i_2j_2}=\sum_{i,j}^d\ket{i}\bra{j}\ot {\tilde M}^{ij}\,.
        \end{equation*}
        For an exact RG transformation, there must exist another quantum channel $\cF_{1\to2}$ such that
        \begin{equation*}
            \sum_{i,j}^d\cF_{1\to2}\bk{\ket{i}\bra{j}}\ot {\tilde M}^{ij}=\sum_{{\bf i},{\bf j}=1}^d\ket{i_1i_2}\bra{j_1j_2}\ot M^{i_1j_1}M^{i_2j_2}\,.
        \end{equation*}
\begin{figure}[htbp]
	\centering
	\includegraphics[width=0.38\textwidth]{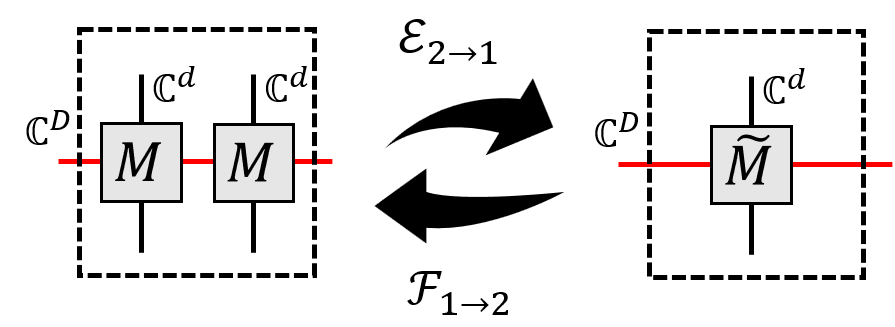}
	\caption{An exact RG transformation for MPDO is a quantum channel $\cE_{2\to 1}$ that reduces the size of the physical system, while keeping the relevant correlation such that there is another quantum channel for which the renormalization process can inverted. }
	\label{fig:MPDORG}
\end{figure}

        However, unlike MPS, the existence of such an RG transformation is not straightforwardly guaranteed for general MPDOs. In fact, in this paper, we show that there is an explicit counterexample of an MPDO that does not admit a converging RG flow (Sec.~\ref{sec:convergence}). 

        An MPDO is a fixed point of a real-space RG transformation if the tensor after  coarse-graining remains the same as before, i.e., there are quantum channels $\cE,\cF$ such that ${\tilde M}^{ij}=M^{ij}$. These fixed points were first introduced in Ref.~\cite{CIRAC2017100}, and it has been shown that, under one additional condition, such fixed point MPDOs can be written as
        \begin{equation}\label{eq:RFPMPDO}
            \rho^{(L)}\cong\sum_{i=1}^d\lambda_iP_i^{(L)}\Omega^{\otimes L}\,,
        \end{equation}
        where $\lambda_i\geq0$, $P_i^{(L)}$ are the MPO projectors on $\Lambda_L$ ($\cong$ meaning the two sides are equal up to a unitary transformation). Although not mathematically specified, $P_i$ is argued to contain the ``topological part" (such as the boundary MPO-symmetry of a topological PEPS), and $\Omega>0$ satisfies $[P_i^{(L)}, \Omega^{\ot L}]=0$~\footnote{In the statement in Theorem 4.15. in Ref.~\cite{CIRAC2017100}, the state is written as $\sum_{i=1}^d\lambda_iP_i^{(L)}e^{-H_L}$ where $H_N$ is a nearest-neighbor commuting Hamiltonian. However, the explicit construction in the proof tells us that $e^{-H_L}$ is given as $\mu^{\otimes L}$ for $\mu>0$ (in this paper we use $\Omega$ instead of $\mu$), which is not interacting~\cite{private}.}.

    \subsection{Exact local compression of mixed bipartite states}
    To study the RG transformation of MPDOs, we introduce some known facts on the exact local compression of general bipartite quantum states by a quantum channel on a subsystem. 
        \begin{definition}~\cite{kato2023}
            We say a quantum channel $\cE_{B\to {\tilde B}}:\cB(\cH_B)\to\cB(\cH_{\tilde B})$ is an exact local compression of $\rho_{AB}$ if there exists another quantum channel $\cR_{{\tilde B}\to B}:\cB(\cH_{\tilde B})\to\cB(\cH_B)$ such that
            \begin{equation*}
                \rho_{AB}=\bk{\cR_{{\tilde B}\to B}\circ\cE_{B\to {\tilde B}}}(\rho_{AB})\,.
            \end{equation*}
            Note that we omit identity maps for simplicity and denote, for example, $\id_A\ot\cE_{B\to {\tilde B}}$ as $\cE_{B\to {\tilde B}}$.
        \end{definition}
        It is well-known~\cite{mixedphase1,kato2023} that $\cE_{B\to {\tilde B}}$ is an exact local compression of $\rho_{AB}$ if and only if
        \begin{equation*}
            I(A:B)_\rho=I\bk{A:{\tilde B}}_{\cE(\rho)}\,,
        \end{equation*}
       where the mutual information measures total amount of bipartite correlations, meaning  that exact compression preserves the correlation between two subsystems. Thus, exact local compression is equivalent to a correlation-preserving map, as defined in Ref.~\cite{mixedphase1}.
        
        The minimal dimension of ${\tilde B}$ for the exact local compression of $\rho_{AB}$ is determined by the size of a certain algebra. 
        \begin{thm*}~\cite{KIdecomposition,Hayden2004-os,Mosonyi2004-wo,Jencova2006-on}
        For any bipartite state $\rho_{AB}\in\cS(\cH_A\ot\cH_B)$, where $\supp(\rho_B)=\cH_B$, there exists a unital $C^*$-algebra $\cM^{\min}_\cS\subset\cB(\cH_B)$ such that
        \begin{itemize}
            \item[(i)] $\cH_B\cong \bigoplus_{a=1}^r\cH_{B_a^L}\ot\cH_{B_a^R}$ and
            \begin{equation*}
                \cM^{\min}_\cS\cong \bigoplus_{a=1}^r\cB\bk{\cH_{B_a^L}}\ot I_{\cH_{B_a^R}}\,.
            \end{equation*}
            \item[(ii)]  There exists a unitary $U_{B\to B^LB^R}$ such that $\rho_{AB}$ is decomposed as 
            \begin{equation}\label{eq:KI-dec}
                U_{B\to B^LB^R}\rho_{AB}U_{B\to B^LB^R}^\dagger=\bigoplus_{a=1}^rp_a\rho_{AB_a^L}\ot \omega_{B_a^R}\,,
            \end{equation}
            where $\{p_a\}$ is a probability distribution, $\rho_{AB_a^L}\in \cS(\cH_{AB_a^L})$ and $\omega_{B_a^R}>0\in\cS\bk{\cH_{B_a^R}}$.
            \item[(iii)] For any CPTP-map $\cE_B$ satisfying $\cE_B(\rho_{AB})=\rho_{AB}$, its Stinespring dilation isometry $V_{B\to BE}:\cB(\cH_B)\to\cB(\cH_{BE})$ is decomposed into 
    \begin{equation*}
        V_{B\to BE}=\bigoplus_{a=1}^rI_{B_a^L} \otimes V_{B_a^R\to B_a^R E}
    \end{equation*}
    which satisfies
    \begin{equation*}
        \tr_E\left(V_{B_a^R\to B_a^R E}\:\omega_{B_a^R}(V_{B_a^R\to B_a^R E})^\dagger\right)=\omega_{B_a^R}\,\quad\forall a.
    \end{equation*}
        \end{itemize}
    \end{thm*}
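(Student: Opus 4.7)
The plan is to construct $\cM^{\min}_\cS$ as a canonical $C^*$-subalgebra of $\cB(\cH_B)$ attached to $\rho_{AB}$, then read off the three claims from its structure. First I would introduce the \emph{conditional operators} $\sigma_B^M := \tr_A\bk{(M_A\ot I_B)\rho_{AB}}$ for $M_A\in\cB(\cH_A)$, whose linear span encodes every correlation between $A$ and $B$. Since $\supp(\rho_B)=\cH_B$, I can normalize these to $\tau_B^M := \rho_B^{-1/2}\sigma_B^M \rho_B^{-1/2}$ and define $\cM^{\min}_\cS$ to be the smallest unital $*$-subalgebra of $\cB(\cH_B)$ containing every $\tau_B^M$ and invariant under the modular action $X\mapsto \rho_B^{it}X\rho_B^{-it}$. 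Minimality as a sufficient subalgebra will follow from the recovery argument below.

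For part (i), I would invoke Wedderburn's structure theorem for finite-dimensional $C^*$-algebras: every unital $*$-subalgebra $\cM\subset\cB(\cH_B)$ admits, up to a unitary on $\cH_B$, a decomposition $\cH_B\cong\bigoplus_a\cH_{B_a^L}\ot\cH_{B_a^R}$ with $\cM\cong\bigoplus_a\cB(\cH_{B_a^L})\ot I_{B_a^R}$, giving the stated block form directly. For part (ii), modular invariance places $\rho_B$ in the commutant $\bigoplus_a I_{B_a^L}\ot\cB(\cH_{B_a^R})$, so $\rho_B$ itself takes the block form $\bigoplus_a p_a\rho_{B_a^L}\ot\omega_{B_a^R}$ with $\omega_{B_a^R}>0$. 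Since each $\tau_B^M\in\cM^{\min}_\cS$ has the block form $\bigoplus_a(Y_a^M\ot I_{B_a^R})$, the conjugation $\sigma_B^M=\rho_B^{1/2}\tau_B^M\rho_B^{1/2}$ contributes only a factor $\omega_{B_a^R}$ on the right slot and a genuine $B_a^L$-operator on the left. Reconstructing $\rho_{AB}$ from the family $\{\sigma_B^M\}$ as $M_A$ ranges over a basis of $\cB(\cH_A)$ then yields the within-block factorization $p_a\rho_{AB_a^L}\ot\omega_{B_a^R}$, producing~\eqref{eq:KI-dec}.

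For part (iii), given $\cE_B$ with $\cE_B(\rho_{AB})=\rho_{AB}$, the key step is that $\cE_B$ leaves every element of $\cM^{\min}_\cS$ invariant, i.e.\ $\cM^{\min}_\cS$ is a \emph{sufficient subalgebra}. I would use the Petz recovery map $\cR_B(X) := \rho_B^{1/2}\cE_B^*\bk{\cE_B(\rho_B)^{-1/2}X\cE_B(\rho_B)^{-1/2}}\rho_B^{1/2}$: the mutual-information equality $I(A:B)_\rho=I(A:\tilde B)_{\cE(\rho)}$ forces $\cR_B\circ\cE_B$ to fix $\rho_{AB}$, and hence every $\sigma_B^M$ and every $\tau_B^M$. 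By Kadison--Schwarz the fixed-point set of the unital CP map $\cR_B^*\circ\cE_B^*$ is a $*$-subalgebra closed under the modular group of $\rho_B$, so it must contain all of $\cM^{\min}_\cS$. Choosing a Stinespring isometry $V_{B\to BE}$ for $\cE_B$, this invariance translates into $[V,\cM^{\min}_\cS\ot I_E]=0$, and the commutant description of $\cM^{\min}_\cS$ yields the direct-sum decomposition $V=\bigoplus_a I_{B_a^L}\ot V_{B_a^R\to B_a^R E}$; tracing out $A$ and the $B^L$ systems in $\cE_B(\rho_{AB})=\rho_{AB}$ then gives the local recovery $\tr_E[V_{B_a^R\to B_a^R E}\,\omega_{B_a^R}\,V_{B_a^R\to B_a^R E}^\dagger]=\omega_{B_a^R}$.

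The main obstacle is the chain of implications in the sufficiency step: $\cR_B\circ\cE_B$ fixes $\sigma_B^M$, hence $\tau_B^M$, hence the whole modular-closed algebra $\cM^{\min}_\cS$, hence $\cE_B$ itself acts as identity on that algebra. Carefully passing between Schr\"odinger and Heisenberg pictures while threading the modular automorphism of $\rho_B$, and exploiting that the Kadison--Schwarz multiplicative domain is a $*$-algebra, is the delicate technical core; it is also where one must verify that the Petz-map inverse really does imply $\cE_B$-invariance and not merely invariance of the composition. Once minimality and sufficiency of $\cM^{\min}_\cS$ are established, (i) is Wedderburn, (ii) is modular block-diagonalization extended through the conditional operators, and (iii) is Stinespring together with a purely algebraic commutant computation.
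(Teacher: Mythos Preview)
The paper does not prove this theorem; it is quoted from the cited references as the Koashi--Imoto decomposition and used as a black box, with only the generating formula~\eqref{eq:defminsufalg} for $\cM^{\min}_\cS$ recorded. Your sketch is in the spirit of those references, but two steps do not hold as written.

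In part~(ii), the claim ``modular invariance places $\rho_B$ in the commutant $\bigoplus_a I_{B_a^L}\ot\cB(\cH_{B_a^R})$'' is false and in fact contradicts your very next clause: if $\rho_B$ lay in that commutant, each $\rho_{B_a^L}$ would have to be maximally mixed. Modular invariance of $\cM^{\min}_\cS$ only says $\rho_B^{it}\cM^{\min}_\cS\rho_B^{-it}\subset\cM^{\min}_\cS$; the correct deduction (Takesaki, or directly: continuity of $t\mapsto\rho_B^{it}P_a\rho_B^{-it}$ on the minimal central projections forces $[\rho_B,P_a]=0$, and then within each block the one-parameter group $(\rho_B^{(a)})^{it}$ must normalise $\cB(\cH_{B_a^L})\ot I$ and hence factor) is that $\rho_B$ is block-diagonal and \emph{tensor-factorises} in each block---equivalently, there exists a $\rho_B$-preserving conditional expectation onto $\cM^{\min}_\cS$. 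Your conclusion is right, but the mechanism must be replaced. In part~(iii), since the hypothesis already is $\cE_B(\rho_{AB})=\rho_{AB}$, you have $\cE_B(\mu)=\mu$ for every $\mu\in\cS$ and $\cE_B(\rho_B)=\rho_B$ directly; introducing the Petz recovery map $\cR_B$ and then worrying about ``$\cE_B$-invariance versus invariance of the composition'' is a detour that never closes. What is actually needed is Petz's \emph{sufficiency criterion}: the trivial equality $D(\mu\|\rho_B)=D(\cE_B(\mu)\|\cE_B(\rho_B))$ forces $\cE_B^*$ itself to fix each Connes cocycle $\mu^{it}\rho_B^{-it}$, and hence (since the fixed-point set of a UCP map with faithful invariant functional $\tr(\rho_B\,\cdot\,)$ is a $*$-algebra) all of $\cM^{\min}_\cS=\alg\{\mu^{it}\rho_B^{-it}\}$. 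Once $\cE_B^*|_{\cM^{\min}_\cS}=\id$ is established in this way, your multiplicative-domain argument giving $(X\ot I_E)V=VX$ for $X\in\cM^{\min}_\cS$ and the block decomposition of $V$ go through cleanly.
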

    The decomposition~\eqref{eq:KI-dec} is sometimes called the Koashi-Imoto decomposition~\cite{KIdecomposition,Hayden2004-os}.  The algebra $\cM_\cS^{\min}$ is the minimal sufficient subalgebra~\cite{petz1986sufficient,10.1093/qmath/39.1.97,Jencova2006-on} of the following set of states $\cS$
    \begin{equation}\label{eq:cS}
        \cS:=\lr{\{\mu_B=\frac{\tr_A\bk{O_A\rho_{AB}}}{\tr\bk{O_A\rho_A}}\:\middle|\:\:0\leq O_A\leq I_A}\}\,,
    \end{equation}
    for which $\cM_\cS^{\min}$ is written as
    \begin{equation}\label{eq:defminsufalg}
        \cM_\cS^{\min}={\rm alg}\lr{\{\mu^{it}_B\rho_B^{-it}\:\middle|\:\:\forall \mu_B\in\cS, \forall t\in\mathbb{R}}\}\,.
    \end{equation}

    The minimal dimension $d_{\tilde B}^{\min}$ of the exact local compression is then given by 
    \begin{equation}\label{eq:mindimB}
        d_{\tilde B}^{\min}=\sum_{a=1}^r\dim(\cH_{B_a^L})\,
    \end{equation}
and the corresponding exact compression is given by
\begin{equation*}
    \cE_{B\to B^L}(\rho_{AB})=\bigoplus_{a=1}^rp_a\rho_{AB_a^L}\,,
\end{equation*}
where $\cH_{B^L}:=\bigoplus_{a}\cH_{B_a^L}$.

\section{Convergence of exact RG flow for MPDOs}\label{sec:convergence}
    Consider a real-space exact RG transformation of MPDOs that coarse-grains $l$-neighboring sites into one new site. For the MPDO RG flow to converge, the degrees of freedom at each coarse-grained site must be bounded at all iterative steps. This means that, for any $l$, there must exist an exact local compression to a fixed-sized Hilbert space. As the minimal dimension under exact local compression is determined by the dimension of the minimal sufficient subalgebra~\eqref{eq:mindimB}, for any $l$ the corresponding minimal sufficient subalgebra must be isomorphic to an algebra whose dimension is bounded by a constant. 

    For translationally invariant MPDOs, the minimal sufficient subalgebras~\footnote{When we speak about the minimal sufficient subalgebra for a partition $AB~\Lambda_L$, we implicitly assume that $\cH_B$ is restricted to $\supp(\rho_B)$ so that $\rho_B>0$.} are  determined by the length of the region, not by the size of the complementary region.
    \begin{lem}\label{lem:Mldefinition}
    Let $AB$ be a bipartition of $\Lambda_L$ such that $B$ is connected and $|B|=l$. Then, $\cM_\cS^{\min}$ of a MPDO $\rho_{AB}^{(L)}$ on $B$ is the same for all $L\geq l+1$.
    \end{lem}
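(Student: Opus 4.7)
My plan is to exhibit the Koashi--Imoto decomposition of $\rho_{AB}^{(L)}$ on $B$ explicitly in terms of the canonical forms of the MPDO tensor $M$, and then observe that all of its structural data are determined by $M$ and the block length $l$, independently of $L$.

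First I would apply the v-CF~\eqref{eq:v-CF} to the blocked tensor $M[l]$: this furnishes a unitary $U_l$ on $\cH_B$ (local to $B$) under which
\begin{equation*}
U_l\,\rho_{AB}^{(L)}\,U_l^\dagger \;=\; \bigoplus_a p_a^{(L)}\,\sigma^{(L)}_{AB_a^L}\ot\tilde\Omega_a,
\end{equation*}
where $\cH_B\cong\bigoplus_a\cH_{B_a^L}\ot\cH_{B_a^R}$, the density operators $\tilde\Omega_a=\Omega_a^{(l)}/\tr\Omega_a^{(l)}$ live on $\cH_{B_a^R}$, and all $L$-dependence is absorbed into the factor $p_a^{(L)}\sigma^{(L)}_{AB_a^L}$ on $AB_a^L$. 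Crucially, the sector decomposition $\bigoplus_a\cH_{B_a^L}\ot\cH_{B_a^R}$ and the full-rank states $\tilde\Omega_a$ are derived from the v-CF of $M[l]$ alone, hence $L$-independent.

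Next I would verify that this is a Koashi--Imoto decomposition~\eqref{eq:KI-dec} by checking (i) positivity of each block, which follows from $\rho_{AB}^{(L)}\ge0$ together with the block-diagonal structure, and (ii) full support of $\sigma^{(L)}_{B_a^L}=\tr_A\sigma^{(L)}_{AB_a^L}$ on each $\cH_{B_a^L}$ with $p_a^{(L)}>0$. Item (ii) should follow from the BNT condition on $\{W_{k,a}^{\alpha\beta}[l]\}$ in the vertical direction combined with the injectivity of the horizontal CP map $\cE_k$ (item (iii) of the BNT definition), which together ensure that tracing over any nonempty $A$ already yields a full-rank reduced state on each active left sector.

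Once these conditions are in place, the cited KI theorem identifies $\cM_\cS^{\min}\cong\bigoplus_{a\in\cA}\cB(\cH_{B_a^L})\ot I_{\cH_{B_a^R}}$, where $\cA=\{a:p_a^{(L)}>0\}$. The right factors $\tilde\Omega_a$ and the sector decomposition are manifestly $L$-independent, so it only remains to show that $\cA$ itself does not depend on $L\ge l+1$. The main obstacle is precisely this last step: it requires arguing that the non-degeneracy of the maximal eigenvalue of $\cE_k$ guaranteed by the BNT rules out accidental vanishings of $p_a^{(L)}$ as $L$ varies, so that every sector appearing in the v-CF contributes nontrivially whenever $|A|\ge 1$.
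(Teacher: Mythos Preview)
Your route differs from the paper's. The paper does \emph{not} try to exhibit the Koashi--Imoto decomposition; instead it proves directly that $\spann\cS$ is independent of $L$, using the horizontal block-injective canonical form (biCF). Concretely, any $\mu_B$ obtained from $\rho^{(L)}$ corresponds to a boundary matrix $X=\bigoplus_kX_k$ on the virtual legs, and biCF guarantees an operator $C(X)$ on a \emph{single} complementary site with $\tr_{l+1}\bigl(C(X)\rho^{(l+1)}\bigr)\propto\mu_B$; the converse inclusion is similar. Once $\spann\cS$ is fixed, $\cM_\cS^{\min}$ is determined. What you outline is much closer to the paper's \emph{next} lemma (v-CF $=$ KI), and in principle proving that lemma first would yield this one.

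However, your argument has a genuine gap beyond the one you flag. Exhibiting a decomposition $U_l\rho_{AB}^{(L)}U_l^\dagger=\bigoplus_ap_a^{(L)}\sigma^{(L)}_{AB_a^L}\ot\tilde\Omega_a$ does not by itself identify $\cM_\cS^{\min}$: it only shows that $\bigoplus_a\cB(\cH_{B_a^L})\ot I_{\cH_{B_a^R}}$ is \emph{a} sufficient subalgebra. The minimal one could be strictly smaller --- e.g.\ a sector $\cH_{B_a^L}$ might factor further as $\cH_{B_a^{LL}}\ot\cH_{B_a^{LR}}$ with $\sigma^{(L)}_{AB_a^L}=\sigma^{(L)}_{AB_a^{LL}}\ot\omega_{B_a^{LR}}$ --- and whether or not such refinements occur could in principle depend on $L$. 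Your items (i) and (ii) check that the decomposition is of KI \emph{form}, but not that it is the \emph{minimal} one; full support of $\sigma^{(L)}_{B_a^L}$ does not rule out a further tensor split. In the paper this minimality is established separately (in the proof that v-CF $=$ KI) by invoking the BNT condition that each vertical CP map has no nontrivial invariant subspace; you would need to import that argument here.

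On the obstacle you do identify: the non-degeneracy of the top eigenvalue of $\cE_k$ is not obviously the right tool to exclude $p_a^{(L)}=0$ for sporadic $L$. The paper's biCF argument sidesteps this entirely, because it shows $\spann\{\mu_B\}=\spann\{W_k^{\alpha_k\beta_k}[l]\}$ --- a set that manifestly depends only on $l$ --- and never has to track which sectors are populated for a given $L$.
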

    We thus denote the minimal sufficient subalgebra on a $l$-length segment $\Lambda_l\subset\Lambda_L$ by $\cM_l^{\min}$. Note that the reduced state $\rho^{(L)}_B$ generally depends not only on $l$ but also on $L-l$, the size of the complementary region. We particularly denote the reduced state of an MPDO on $\Lambda_l\subset\Lambda_L$ for $L=l+1$ by $\rho_l$. 

    An important consequence of this lemma is that the decomposition of v-CF is equivalent to the Koashi-Imoto decomposition~\eqref{eq:KI-dec} of the MPDO.
    \begin{lem}\label{vCF=KI}
 Let $AB$ be a bipartition of $\Lambda_L$ such that $B$ is connected and $|B|=l$. Consider the v-CF of the matrix $W_k^{\alpha_k\beta_k}[l]$:
            \begin{align}\label{eq:vcf1}
           W_k^{\alpha_k\beta_k}[l]\cong\bigoplus_{a'=1}^{r'} W_{k,a'}^{\alpha_k\beta_k}[l] \ot \Omega_{a'}^{(l)}\,
        \end{align}
        with the decomposition of the Hilbert space 
        \begin{equation*}
            \cH_B\cong\bigoplus_{a'=1}^{r'}\cH_{B_{a'}^L}\otimes \cH_{B_{a'}^R}\,.
        \end{equation*}
        Then we have
        \begin{equation}\label{eq:kidec}
                        \cM^{\min}_l\cong \bigoplus_{a'=1}^{r'}\cB\bk{\cH_{B_{a'}^L}}\ot I_{\cH_{B_{a'}^R}}\,.
        \end{equation}
        Conversely, in the basis such that Eq.~\eqref{eq:kidec} holds, $W_k^{\alpha_k\beta_k}[l]$ is in the v-CF~\eqref{eq:kidec}.
    \end{lem}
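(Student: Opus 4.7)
The plan is to compute the reduced state $\rho_l$ explicitly from the MPDO tensor, identify the set $\cS$ defined in Eq.~\eqref{eq:cS}, and then compute $\cM_l^{\min}$ directly from Eq.~\eqref{eq:defminsufalg}. By Lemma~\ref{lem:Mldefinition}, we are free to take $|A|$ as large as convenient.

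First, I would trace out the complement $A=\Lambda_L\setminus\Lambda_l$. Because the h-CF~\eqref{eq:h-CF1} decouples the horizontal blocks, the reduction on $B$ takes the form
\begin{equation*}
\rho_B^{(L)} \;=\; \sum_{k}\,c_k \sum_{\alpha_k,\beta_k} T_k^{\alpha_k\beta_k}\, W_k^{\beta_k\alpha_k}[l]\,,
\end{equation*}
where $c_k$ absorbs the scalar factors from closing $N_k^{L-l}$ and $T_k^{\alpha_k\beta_k}$ is the boundary matrix obtained from contracting the environment. More generally, every $\mu_B\in\cS$ is produced by inserting some $0\le O_A\le I$ before the partial trace, which replaces $T_k$ by an admissible $T_k(O_A)$. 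Substituting the v-CF~\eqref{eq:vcf1} into this expression yields
\begin{equation*}
\mu_B \;\cong\; \bigoplus_{a'}\sigma_{a'}(T)\otimes\Omega_{a'}^{(l)},\qquad \sigma_{a'}(T):=\sum_{k,\alpha_k,\beta_k}T_k^{\alpha_k\beta_k}\,W_{k,a'}^{\beta_k\alpha_k}[l].
\end{equation*}

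Second, I would argue that as $O_A$ ranges over $0\le O_A\le I$ (with $|A|$ taken large enough), the admissible matrices $T_k(O_A)$ sweep out the full matrix space $\C^{D_k^L\times D_k^L}$ for every $k$. This is the crucial step and uses the horizontal biCF assumption~\eqref{eq:biCF}, which gives injectivity of the blocked horizontal tensors on $A$, so the map $O_A\mapsto T_k(O_A)$ is surjective onto the positive cone of boundary matrices, whose linear span is the full matrix algebra. Combining this with the vertical BNT property of $\{W_{k,a'}\}$, which makes $\{W_{k,a'}^{\beta_k\alpha_k}[l]\}_{\alpha_k,\beta_k}$ span $\cB(\cH_{B_{a'}^L})$ after sufficient vertical blocking, the set $\cS$ contains all density operators of the form $\bigoplus_{a'}p_{a'}\tau_{a'}\otimes\Omega_{a'}^{(l)}$ for arbitrary $\tau_{a'}\in\cS(\cH_{B_{a'}^L})$ and probability distributions $\{p_{a'}\}$.

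Third, with this explicit description of $\cS$, the operators $\mu_B^{it}\rho_B^{-it}$ have the $\Omega_{a'}^{(l)}$ factors cancel and take values in $\bigoplus_{a'}\cB(\cH_{B_{a'}^L})\otimes I_{\cH_{B_{a'}^R}}$; varying the $\tau_{a'}$'s generates this entire algebra. Together with Eq.~\eqref{eq:defminsufalg}, this proves Eq.~\eqref{eq:kidec}. The converse direction follows from the uniqueness of the KI decomposition~\eqref{eq:KI-dec}: any basis realizing Eq.~\eqref{eq:kidec} also realizes the block-diagonal structure of $W_k^{\alpha_k\beta_k}[l]$ with the $\Omega_{a'}^{(l)}$ factors appearing as right tensor components, which is by definition the v-CF.

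The step I expect to be hardest is the second one: showing that varying $O_A$ together with the environment sweeps out all admissible boundary matrices $T_k^{\alpha_k\beta_k}$ simultaneously for all horizontal blocks $k$. This requires carefully combining the biCF injectivity in the horizontal direction with the vertical BNT spanning property, and in particular handling the case where the v-CF direct-sum structure may depend on $k$ (so that the ``$a'$'' index in the statement must really be read as ranging over all $(k,a')$ pairs that are combined into a single orthogonal decomposition of $\cH_B$).
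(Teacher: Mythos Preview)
Your outline correctly identifies the two key ingredients: (a) the span identity $\spann\{\mu_B\}=\spann\{W_k^{\alpha_k\beta_k}[l]\}$, obtained via horizontal biCF exactly as you describe, and (b) the observation that in the v-CF basis the factors $\Omega_{a'}^{(l)}$ cancel in $\mu_B^{it}\rho_B^{-it}$, giving the inclusion $\cM_l^{\min}\subset\bigoplus_{a'}\cB(\cH_{B_{a'}^L})\ot I_{\cH_{B_{a'}^R}}$. Up to this point you agree with the paper.

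The gap is in your second step, where you claim the vertical BNT property ``makes $\{W_{k,a'}^{\beta_k\alpha_k}[l]\}_{\alpha_k,\beta_k}$ span $\cB(\cH_{B_{a'}^L})$ after sufficient vertical blocking.'' This is not available. Vertical blocking of the tensor $W$ means multiplying copies of $W_{k,a'}^{\alpha_k\beta_k}$ as operators on the physical Hilbert space, which corresponds to stacking layers of the MPDO and therefore changes the state; the paper explicitly refrains from assuming biCF in the vertical direction for this reason. Without vertical biCF, the BNT property~(iii) only gives that the CP-map $\sum_{\alpha_k,\beta_k}W_{k,a'}^{\alpha_k\beta_k}(\cdot)(W_{k,a'}^{\alpha_k\beta_k})^\dagger$ has no nontrivial invariant subspace (normality/irreducibility), \emph{not} that the matrices span the full $\cB(\cH_{B_{a'}^L})$. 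Consequently your conclusion that $\cS$ contains all states of the form $\bigoplus_{a'}p_{a'}\tau_{a'}\ot\Omega_{a'}^{(l)}$ for arbitrary $\tau_{a'}$ is unjustified, and the direct generation of the full algebra from $\mu_B^{it}\rho_B^{-it}$ does not go through.

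The paper closes the equality instead by contradiction: if the inclusion $\cM_l^{\min}\subset\bigoplus_{a'}\cB(\cH_{B_{a'}^L})\ot I$ were strict, the finer Koashi--Imoto decomposition would force every $\mu_B$---hence, via the span identity (a), every $W_{k,a'}^{\alpha_k\beta_k}[l]$---to be simultaneously block-diagonal on some nontrivial $\cH_{B_{a'}^L}=\bigoplus_{v}\cH_{B_v^L}$. But this furnishes a nontrivial invariant subspace for the associated CP-map, contradicting BNT property~(iii). So the weaker irreducibility is exactly what is used, and no vertical spanning is needed. Note also that the step you flagged as hardest (the $k$-dependence of the vertical blocks) is not where the real difficulty lies.
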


    As mentioned before, a necessary condition for the existence of a converging RG flow is that the minimal dimension~\eqref{eq:mindimB} is finite for all $l\in\mathbb{N}$, i.e., there is a constant ${\tilde d}\in\mathbb{N}$, injective representations $\{\pi_l\}_{l\in\mathbb{N}}$ and $C^*$-algebras $\{\cA_l\}_{l\in\mathbb{N}}$ s.t. 
    \begin{equation}\label{eq:anecessary}
        \cM_l^{\min}=\pi_l(\cA_l)\,,\quad\cA_l\subset\cB\bk{\C^{\tilde d}}\,,\quad\forall l\in\mathbb{N}\,.
    \end{equation}
    Next, we show two examples, one that satisfies this condition and one that does not.

    \subsection{1D quantum Markov chain}
    We say $\rho_{\Lambda_L}\in\cS\bk{\cH_{\Lambda_L}}$ is a 1D (exact) quantum Markov chain if 
    \begin{equation*}
        I\bk{A:C|B}_\rho=0\,
    \end{equation*}
    for any disjoint tripartition $ABC=\Lambda_L$ such that $B$ shields $A$ from $C$, where $I(A:C|B)_\rho:=I(A:BC)_\rho-I(A:B)_\rho$ is the conditional mutual information.
    
    The quantum Hammerslay-Clifford theorem~\cite{leifer2008quantum,brown2012} states that any positive quantum Markov chain $\rho_{\Lambda_L}>0$ is equivalent to the Gibbs state of a nearest-neighbor commuting Hamiltonian:
    \begin{equation*}
        \sigma=\frac{e^{-H_{\Lambda_L}}}{\tr e^{-H_{\Lambda_L}}}\,, \quad H_{\Lambda_L}=\sum_ih_{i,i+1}, \quad [h_{i,i+1},h_{j,j+1}]=0\,,
    \end{equation*}
    which can be represented by MPDOs with a constant bond dimension.

    A notifiable property of a quantum Markov chain is the existence of a local recovery map: for any segment $C\subset \Lambda_L$, 
    \begin{equation}
        \rho_{\Lambda_L}=R_{\cN(C)\to \cN(C)C}\circ\tr_C\bk{\rho_{\Lambda_L}}\,,
    \end{equation}
    where $\cN(C)$ is the neighbor of $C$. 
    This means that if we divide $\Lambda_L$ into $C\cup\cN(C)$ and the rest, $\tr_C:C\cup\cN(C)\to\cN(C)$ is an exact compression to the subsystem $\cN(C)$, whose size is fixed in 1D. In fact, quantum Markov chains are present in the set of the RG fixed points studied in Ref.~\cite{CIRAC2017100}. 
    
    \subsection{MPDOs with no converging RG transformation}
   In contrast, we show that there are MPDOs that do not satisfy the necessary condition~\eqref{eq:anecessary}. Consider a diagonal matrix
    \begin{equation*}
    T:=\left(
    \begin{array}{ccc}
    \lambda_1&0&0\\
    0&\lambda_2&0\\
    0&0&\lambda_3
    \end{array}
    \right)
\end{equation*}
such that $\tr T=0$, $|\lambda_i|\leq1$, $|\lambda_1|\neq|\lambda_2|\neq|\lambda_3|$ and $\sum_i\lambda_i^2=:\lambda$. Then,
    \begin{equation}\label{eq:counterex}
        \rho^{(L)}=\frac{1}{3^L}\left(I^{\ot L}+T^{\ot L}\right)\geq0
    \end{equation}
    is an MPDO on a chain of qutrits generated by tensor    
    \begin{equation*}
        M^{ij}=\frac{\delta_{ij}}{3}\ketr{0}\brar{0}+\frac{\lambda_i\delta_{ij}}{3}\ketr{1}\brar{1} \quad(i,j=0,1,2)\,.
    \end{equation*}
    Note that $\rho^{(L)}$ is a classical state and does not contain entanglement.  
The h-CF of the tensor is given as
    \begin{align*}
        M^{ij}&\cong(M_1^{ij}\otimes N_1)\oplus(M_2^{ij}\otimes N_2)\,,\\
        M_1^{ij}\otimes N_1&=\left(\frac{\delta_{ij}}{\sqrt{3}}\ketr{0}\brar{0}\right)\ot\left(\frac{1}{\sqrt{3}}\ketr{0}\brar{0}\right)\,,\\
        M_2^{ij}\otimes N_2&=\left(\frac{\lambda_i\delta_{ij}}{\sqrt{\lambda}}\ketr{1}\brar{1}\right)\ot\left(\frac{\sqrt{\lambda}}{3}\ketr{1}\brar{1}\right)\,.
    \end{align*}
    The vertical tensors $W_k^{\alpha_k\beta_k}[l]$, for any length $l$, are then obtained as
    \begin{align*}
        W_1^{00}[l]&=\frac{1}{\sqrt{3^l}}I^{\ot l}\\
        W_2^{11}[l]&=\frac{1}{\sqrt{\lambda^l}}T^{\ot l}=\frac{1}{\sqrt{\lambda^l}}\bigoplus_{\eta\in {\rm spec}\bk{T^{\ot l}}} \eta\Pi_\eta,
    \end{align*}
    where ${\rm spec}\bk{T^{\ot l}}$ is the set of distinct eigenvalues of $T^{\ot l}$ and $\Pi_\eta$ is the projection onto the corresponding eigensubspace.  
    
    The reduced states of $\rho^{(L)}$ are completely mixed states, so one can directly calculate the minimal sufficient subalgebras $\cM_l^{\min}$ using the definition~\eqref{eq:defminsufalg}. A simple calculation shows that   
        \begin{equation}
        \cM_{l}^{\min}=\alg\{I^{\ot l}, T^{\ot l} \}\cong\mathbb{C}^{|{\rm spec}\bk{T^{\ot l}}|}\,.
    \end{equation}
    Since we have \begin{align}        \lr{| {\rm spec}\bk{T^{\ot l}}}|=\left(\begin{array}{c} l + 2 \\2 \end{array}         \right)=\frac{(l+2)(l+1)}{2}, \end{align} $\cM_l^{\min}$ is not supported on a constant dimensional Hilbert space. Therefore, these MPDOs cannot admit a converging exact renormalization flow.

\section{The structure of MPDOs with RG transformation}\label{sec:structure}
Because not all MPDOs admit an exact RG flow, here we study a subclass of MPDOs that has a well-behaved exact RG flow. 

Before defining the subclass, we introduce a crucial property of the minimal subalgebras of MPDOs. 
Divide the subregion $\Lambda_l$ to smaller segments with length $l_1, l_2$. The minimal sufficient subalgebra on $\Lambda_l$ is a subalgebra of the tensor product of the minimal sufficient subalgebras of the smaller regions.     
    \begin{lem}\label{lem:minalginclusion}
    Let $l_1+l_2=l$ and $K_{l_1,l_2}$ be the kernel of $\rho_{l_1+l_2}$ in $\supp(\rho_{l_1}\ot\rho_{l_2})$. Then the following holds:
    \begin{equation}
        \cM^{\min}_{l_1+l_2}\oplus0_{K_{l_1,l_2}}\subset\cM_{l_1}^{\min}\ot\cM_{l_2}^{\min}\,.
    \end{equation}
    \end{lem}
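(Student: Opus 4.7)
My plan is to exhibit a state-preserving conditional expectation whose range is $\cM^{\min}_{l_1}\ot\cM^{\min}_{l_2}$ and then invoke the minimality of $\cM^{\min}_{l_1+l_2}$. The minimality I rely on is a direct consequence of part~(iii) of the Koashi--Imoto theorem quoted in the preliminaries: if $\cF$ is any state-preserving quantum channel on $B$, its Stinespring dilation is the identity on each $\cH_{B_a^L}$ factor of the KI decomposition, and consequently the associated (dual) conditional expectation acts as the identity on $\cM^{\min}$. Thus $\cM^{\min}$ is contained in the range of every state-preserving conditional expectation.

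To build such an expectation for the bipartition $(A,B_1B_2)$, I use Lemma~1 to identify $\cM^{\min}_{l_i}$ with the minimal sufficient subalgebra for the bipartition $(A\cup B_{3-i},B_i)$, which by Lemma~2 is realised by the v-CF block decomposition $\cH_{B_i}\cong\bigoplus_{a_i'}\cH_{B_{i,a_i'}^L}\ot\cH_{B_{i,a_i'}^R}$. Taking $\cE_i$ to be the $\omega_{B_{i,a_i'}^R}$-preserving conditional expectation in each block gives a map $\cE_i:\cB(\supp(\rho_{l_i}))\to\cM^{\min}_{l_i}$ that preserves $\rho_{AB_1B_2}$. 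Since $\cE_1$ and $\cE_2$ act on disjoint tensor factors they commute, and $\cF:=\cE_1\ot\cE_2$ is a conditional expectation onto $\cM^{\min}_{l_1}\ot\cM^{\min}_{l_2}$ that still preserves $\rho_{AB_1B_2}$. Applying the minimality statement of the first paragraph to $\cF$ and the bipartition $(A,B_1B_2)$ then yields $\cM^{\min}_{l_1+l_2}\subset\cM^{\min}_{l_1}\ot\cM^{\min}_{l_2}$.

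Finally, to reconcile the ambient spaces, I observe that the inclusion $\supp(\rho_{l_1+l_2})\subset\supp(\rho_{l_1})\ot\supp(\rho_{l_2})$ holds for any bipartite state (if $\Pi_i$ is the projector onto $\supp(\rho_{l_i})^\perp$, positivity of $\rho_{l_1+l_2}$ together with $\tr((\Pi_i\ot I)\rho_{l_1+l_2})=0$ forces $\rho_{l_1+l_2}(\Pi_i\ot I)=0$), and its orthogonal complement inside $\supp(\rho_{l_1})\ot\supp(\rho_{l_2})$ is exactly $K_{l_1,l_2}$. Embedding $\cM^{\min}_{l_1+l_2}$ into the larger ambient algebra $\cB(\supp(\rho_{l_1})\ot\supp(\rho_{l_2}))$ by extending every element by zero on $K_{l_1,l_2}$ produces the stated inclusion. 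The main obstacle I anticipate is the careful bookkeeping in the minimality step: one has to verify that part~(iii) of the KI theorem, stated for channels satisfying $\cE_B(\rho_{AB})=\rho_{AB}$, applies to $\cF$ defined only on the product of supports, and then check that the direct-sum embedding with $0_{K_{l_1,l_2}}$ places $\cM^{\min}_{l_1+l_2}$ inside $\cM^{\min}_{l_1}\ot\cM^{\min}_{l_2}$ rather than merely inside the ambient $\cB(\supp(\rho_{l_1})\ot\supp(\rho_{l_2}))$.
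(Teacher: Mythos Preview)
Your approach is valid and genuinely different from the paper's. The paper argues via the generator description $\cM^{\min}_\cS=\alg\{\mu^{it}\rho^{-it}\}$: it first establishes the span inclusion $\spann\{\mu_{l_1+l_2}\}\subset\spann\{\mu_{l_1}\ot\mu'_{l_2}\}$ through the identification $\spann\{\mu_B\}=\spann\{W_k^{\alpha_k\beta_k}[l]\}$, and then proves a general sublemma stating that $\cS'\subset\spann\cS$ forces $\cM^{\min}_{\cS'}\oplus 0_K\subset\cM^{\min}_\cS$ by computing the cocycles $\mu'^{it}\rho_1^{-it}$ explicitly in the block decomposition of $\cM^{\min}_\cS$. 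Your route sidesteps the generator formula entirely, instead exploiting the channel characterisation of sufficiency encoded in part~(iii) of the Koashi--Imoto theorem; this is more operational, avoids any computation with $\mu^{it}\rho^{-it}$, and makes the role of Lemma~\ref{lem:Mldefinition} transparent.

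The obstacle you flag at the end is real and needs one further ingredient. After restricting $\cF$ to $\cB(\supp\rho_{l_1+l_2})$ so that KI~(iii) applies, what you actually obtain is $Y\oplus 0_K=P\,\cF^*(Y\oplus 0_K)\,P$ for every $Y\in\cM^{\min}_{l_1+l_2}$, where $P$ is the support projection of $\rho_{l_1+l_2}$ and $\cF^*(Y\oplus 0_K)\in\cM^{\min}_{l_1}\ot\cM^{\min}_{l_2}$. To conclude that $Y\oplus 0_K$ itself lies in $\cM^{\min}_{l_1}\ot\cM^{\min}_{l_2}$ you still need $P\in\cM^{\min}_{l_1}\ot\cM^{\min}_{l_2}$. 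This follows from the explicit block form of your channel: its Schr\"odinger pre-dual fixes $\rho_{l_1+l_2}$ and always outputs operators of the shape $\bigoplus_{a,b}\sigma_{ab}\ot\omega_{B_{1,a}^R}\ot\omega_{B_{2,b}^R}$ with full-rank $\omega$'s, so $\rho_{l_1+l_2}$ itself has this shape and hence $P=\bigoplus_{a,b}P_{ab}\ot I_{B_{1,a}^R}\ot I_{B_{2,b}^R}\in\cM^{\min}_{l_1}\ot\cM^{\min}_{l_2}$. This last step is essentially the paper's key observation in its sublemma (that $\supp(\rho_1)=\bigoplus_x\cH'^L_x\ot\cH^R_x$), so the two arguments ultimately converge here.
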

        We denote the inclusion map in this lemma by 
        \begin{equation*}
        \iota_{l_1+l_2}:     \cM^{\min}_{l_1+l_2}\oplus0_{K_{l_1,l_2}}\to\cM_{l_1}^{\min}\ot\cM_{l_2}^{\min}\,.
        \end{equation*}
        For simplicity, we will omit $0_{K_{l_1,l_2}}$ from the rest of the discussion.

    The subclass of MPDOs with an RG flow is then defined as follows:
    \begin{definition}\label{def:MPDORG}
        We say MPDOs generated by $M$ admit an exact RG flow if there is a finite-dimensional unital $C^*$-algebra 
        \begin{equation}
            \cA=\bigoplus_{a=1}^r\cB\bk{\C^{d_a}}
        \end{equation}
        and a family of injective representations $\{\pi_l\}_{l\in\mathbb{N}}$ such that
            \begin{equation}\label{eq:min=piA}
                \cM_l^{\min}=\pi_l(\cA)\,
            \end{equation}
        and there is a map $\Delta$ such that
\begin{equation*}
\xymatrix{
\cA\ar[r]^-{\Delta}\ar[d]_-{\pi_{l_1+l_2}}\ar@{}[rd]|{\circlearrowright}&\cA\ot \cA\\
\pi_{l_1+l_2}(\cA)\ar@{^{(}->}[r]_-{\iota_{l_1+l_2}}&\pi_{l_1}(\cA)\ot\pi_{l_2}(\cA)\ar[u]_-{\pi_{l_1}^{-1}\ot\pi_{l_2}^{-1}}
}
\end{equation*}
is a commutative diagram for any $l_1,l_2\in\mathbb{N}$. 
    \end{definition}
The first condition~\eqref{eq:min=piA} is almost identical to the necessary condition~\eqref{eq:anecessary}. More generally, an infinite sequence of finite-dimensional $C^*$-algebras $\{\cA_l\}_{l\geq l_0}$ may be considered. 
However, we do not investigate this possibility here. The existence of the map  $\Delta$ implies that 
\begin{equation*}
    \Delta_{l_1+l_2}:=\bk{\pi_{l_1}^{-1}\ot\pi_{l_2}^{-1}}\circ\iota_{l_1+l_2}\circ\pi_{l_1+l_2}
\end{equation*}
is independent of the choice of $l_1,l_2$. This might follow from the translation invariance of the MPDO, but because we do not have a proof, we assume it. 

A simple but important consequence of these conditions is the following:
\begin{prop}\label{prop:comultiplication}
$\Delta:\cA\mapsto \cA\ot\cA$ is a multiplicative co-multiplication on $\cA$, i.e., it satisfies 
    \begin{equation*}
        \Delta(xy)=\Delta(x)\Delta(y) \quad \forall x,y\in\cA\,
    \end{equation*}
    and
\begin{equation*}
    \bk{\id\ot\Delta}\circ\Delta=\bk{\Delta\ot \id}\circ\Delta\,=:\Delta^2\,.
\end{equation*}
\end{prop}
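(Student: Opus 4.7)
The plan is to verify the two defining properties of a co-multiplication directly from the definition
$\Delta \;=\; (\pi_{l_1}^{-1}\otimes \pi_{l_2}^{-1}) \circ \iota_{l_1+l_2} \circ \pi_{l_1+l_2}$
and the three ingredients in the commutative diagram: that each $\pi_l$ is an injective $*$-representation, that $\iota_{l_1+l_2}$ is the inclusion of one $C^*$-subalgebra into another, and that $\Delta$ does not depend on how $l_1+l_2$ is split.

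For multiplicativity, I would first note that each of the building blocks preserves products: $\pi_{l_1+l_2}$ because it is an algebra representation; $\iota_{l_1+l_2}$ because the inclusion of $\cM^{\min}_{l_1+l_2}$ into $\cM^{\min}_{l_1}\otimes \cM^{\min}_{l_2}$ is the identity map on operators and in particular respects the product inherited from $\cB(\cH_{l_1})\otimes\cB(\cH_{l_2})$; and $\pi_{l_1}^{-1}\otimes\pi_{l_2}^{-1}$ because it is the inverse of the $*$-isomorphism $\pi_{l_1}\otimes\pi_{l_2}$ restricted to the image of $\iota_{l_1+l_2}$. Composing multiplicative maps yields a multiplicative map, so $\Delta(xy)=\Delta(x)\Delta(y)$.

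For coassociativity, the key observation is that iterating Lemma~\ref{lem:minalginclusion} gives a canonical inclusion
\begin{equation*}
\iota_{l_1,l_2,l_3}:\cM^{\min}_{l_1+l_2+l_3}\hookrightarrow \cM^{\min}_{l_1}\otimes \cM^{\min}_{l_2}\otimes \cM^{\min}_{l_3},
\end{equation*}
which can be obtained in two ways: either as $(\iota_{l_1,l_2}\otimes \id)\circ \iota_{l_1+l_2,l_3}$ or as $(\id\otimes \iota_{l_2,l_3})\circ \iota_{l_1,l_2+l_3}$. Both compositions coincide because each is simply the set-theoretic inclusion of operators in $\cB(\cH_{l_1}\otimes \cH_{l_2}\otimes \cH_{l_3})$. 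Transporting this equality through $\pi_{l_1}^{-1}\otimes\pi_{l_2}^{-1}\otimes\pi_{l_3}^{-1}$ on the target and $\pi_{l_1+l_2+l_3}$ on the source, and recognizing the two sides as $(\id\otimes \Delta)\circ\Delta$ and $(\Delta\otimes\id)\circ\Delta$ respectively after inserting the intermediate $\pi_{l_2+l_3}^{-1}\circ\pi_{l_2+l_3}$ and $\pi_{l_1+l_2}^{-1}\circ\pi_{l_1+l_2}$ (which are the identity by injectivity of $\pi$), gives the coassociativity identity. The independence of $\Delta_{l_1+l_2}$ of the splitting, which is part of the definition, is precisely what lets us reuse the \emph{same} map $\Delta$ on both sides.

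The routine parts are the two diagram chases; the only place to be careful is the coassociativity step, where one must make sure that the tripartite inclusion really is well-defined and that the two bipartite factorizations of it really do match — this is the main obstacle, but it reduces to checking that two iterated subalgebra inclusions in $\cB(\cH_{l_1+l_2+l_3})$ have the same image, which holds tautologically since both inclusions are just the identity map on elements.
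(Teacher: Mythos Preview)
Your proposal is correct and follows essentially the same approach as the paper: multiplicativity is obtained as a composition of $*$-homomorphisms, and coassociativity comes from the fact that the two iterated inclusions $\cM^{\min}_{l_1+l_2+l_3}\hookrightarrow \cM^{\min}_{l_1}\otimes\cM^{\min}_{l_2}\otimes\cM^{\min}_{l_3}$ coincide set-theoretically, together with the assumed independence of $\Delta_{l_1+l_2}$ from the splitting. The only cosmetic difference is that the paper fixes $l_1=l_2=l_3=l_0$ and phrases the argument as a commutative diagram of inclusion maps, whereas you keep general lengths and insert the identities $\pi_{l_i+l_j}^{-1}\circ\pi_{l_i+l_j}$ explicitly; the content is the same.
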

From the assumption, we have in particular, 
\begin{equation*}
    \iota_{1+1}\circ\pi_{2}=\bk{\pi_1\ot\pi_1}\circ\Delta\,.
\end{equation*}
Let us simply denote $\pi_1$ by $\pi$ and omit the inclusion. Then, recursively applying this relation we obtain
\begin{equation}
    \pi_l=\pi^{\ot l}\circ\Delta^{L-1}\,.
\end{equation}

$\cA$ is called a {\it coalgebra} if there is a comultiplication and a counit $\epsilon:\cA\to\C$ satisfying
\begin{equation*}
    \bk{\id\otimes \epsilon}\circ\Delta=\bk{\epsilon\otimes\id}\circ\Delta=\id\,.
\end{equation*}
An algebra with a multiplicative coalgebra structure is called a {\it pre-bialgebra}.
\begin{definition}
    An algebra $\cA$ is a pre-bialgebra if it is also a coalgebra and its comultiplication $\Delta$ is multiplicative.

\end{definition}
We show that the algebra $\cA$ associated with the RG flow not only has a multiplicative comultiplication (Proposition~\ref{prop:comultiplication}), but also has a counit and is always a prebialgebra (the proof can be found in Appendix~\ref{app:prebialgebra}).
\begin{thm}\label{thm:prebialgbra}
    If MPDO $\rho^{(L)}$ admits an exact RG flow for $(\pi_l,\cA)$, then $\cA$ is a pre-bialgebra\,.
\end{thm}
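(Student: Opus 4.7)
\emph{Proof proposal.}
Proposition~\ref{prop:comultiplication} already furnishes a coassociative multiplicative comultiplication $\Delta$ on $\cA$, so to upgrade $(\cA,\Delta)$ to a pre-bialgebra I need to exhibit a linear counit $\epsilon:\cA\to\C$ satisfying $(\id\otimes\epsilon)\circ\Delta=(\epsilon\otimes\id)\circ\Delta=\id_\cA$. My plan is to exploit the fact that $\Delta$ is a unital injective $\ast$-homomorphism between finite-dimensional semisimple $C^\ast$-algebras, so that standard structure theory for such maps provides enough rigidity to build $\epsilon$ as a weighted trace coming from the MPDO data.

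First I would verify that $\Delta$ is unital: each $\pi_l$ maps $1_\cA$ to the identity of $\cM_l^{\min}$ and each inclusion $\iota_{l_1+l_2}$ is unital, so the commutative diagram of Definition~\ref{def:MPDORG} forces $\Delta(1_\cA)=1_\cA\otimes 1_\cA$. Injectivity of $\Delta$ follows because $(\pi\otimes\pi)\circ\Delta=\iota_2\circ\pi_2$ is a composition of injections. Decomposing $\cA=\bigoplus_a M_{d_a}$ with central idempotents $p_a$, the images $\Delta(p_a)$ form an orthogonal family of idempotents in $\cA\otimes\cA$ summing to $1\otimes 1$; moreover, $\Delta|_{M_{d_a}}$ embeds $M_{d_a}$ as a unital subfactor of the corner $\Delta(p_a)(\cA\otimes\cA)\Delta(p_a)$, which by the structure theorem for $\ast$-homomorphisms of matrix algebras has a controlled normal form with some multiplicities $m_a$. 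The model case of such a normal form, $x\mapsto x\otimes I_{m_a}$, admits a counit piece of the form $\tau_a(\cdot)=\tr(\cdot)/\tr(I_{m_a})$ on the auxiliary factor, suggesting a candidate $\epsilon=\bigoplus_a\tau_a\circ p_a$.

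The right counit axiom $(\id\otimes\epsilon)\Delta=\id$ then reduces, block by block, to the model calculation. The left axiom $(\epsilon\otimes\id)\Delta=\id$ follows from the translation invariance of the MPDO, which makes the two sites in the $(1,1)$ splitting interchangeable. Coassociativity of $\Delta$ promotes these $(1,1)$ identities to arbitrary block sizes $(l_1,l_2)$. The main obstacle I anticipate is pinning down the weights $\tau_a$ consistently: abstractly, a unital injective $\ast$-homomorphism $\cA\to\cA\otimes\cA$ need not admit a linear counit, so the argument must genuinely use the MPDO structure---namely, that the same $\Delta$ is recovered from $\iota_{l_1+l_2}$ for \emph{every} pair $(l_1,l_2)$---together with the v-CF of the length-$l$ vertical tensors (Lemma~\ref{vCF=KI}), which controls the multiplicities via the $\Omega_a^{(l)}$ components and thereby fixes the $\tau_a$ uniquely.
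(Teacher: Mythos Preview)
Your proposal has two genuine gaps. First, the claim that $\iota_{l_1+l_2}$ is unital is not justified: by Lemma~\ref{lem:minalginclusion} the inclusion sends $\cM^{\min}_{l_1+l_2}\oplus 0_{K_{l_1,l_2}}$ into $\cM^{\min}_{l_1}\otimes\cM^{\min}_{l_2}$, and whenever the kernel $K_{l_1,l_2}$ is nonzero the unit of the source lands on a proper projection rather than on $1\otimes 1$. (The remark at the end of Section~\ref{sec:exampleX} that the worked example happens to satisfy $\Delta(1)=1\otimes1$ confirms this is not automatic.) Second, and more seriously, even granting unitality your counit is not actually constructed. The normal form $x\mapsto x\otimes I_{m_a}$ for $\Delta|_{M_{d_a}}$ holds only after conjugation by a unitary in the ambient matrix algebra, and that unitary need not respect the $\cA\otimes\cA$ tensor factorization; consequently there is no well-defined linear functional $\epsilon:\cA\to\C$ whose $\id\otimes\epsilon$ ``traces out the auxiliary factor''. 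You correctly flag this as the main obstacle, but the closing appeal to the v-CF and the $\Omega_a^{(l)}$ data is only a hope, not a mechanism: nothing in your outline explains how the $\Omega_a^{(l)}$ produce numbers $\tau_a$ that make both counit identities hold. (The translation-invariance argument for the left axiom is also suspect: swapping the two halves of a $(1,1)$ block is a reflection, not a translation.)

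The paper's proof bypasses these difficulties by working on the dual side and invoking MPO canonical-form theory rather than abstract $C^\ast$-structure theory. It picks an injective representation $\psi$ of the unitization $\cA^*_+$, forms the MPO tensor $O=\sum_{x\in\cB_\cA}\pi(x)\otimes\psi(\delta_x)$, and observes that concatenation of $O$ reproduces $\pi^{\otimes l}\circ\Delta^{l-1}$. Writing $O^{ij}=\psi(f^{ij})$ and bringing $O$ to h-CF, multiplicativity of $\psi$ forces the diagonal blocks $K_s$ to be idempotent, hence identities after restricting to their supports. Finally, the biCF property (which any MPO tensor achieves after finite blocking) guarantees that the identity lies in the span of products of the $O^{ij}_s$, i.e.\ in $\psi(\cA^*)$; its preimage under $\psi$ is the unit of $\cA^*$, which is the sought counit of $\cA$. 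This route never needs $\Delta(1)=1\otimes 1$ and produces $\epsilon$ directly as an element of $\cA^*$ rather than guessing weights.
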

The pre-bialgebra structure is crucial for characterizing the MPO algebra, which is an algebra consisting of MPOs with a fixed bond dimension~\cite{MPOalg1,MPOalg2}. An important subclass of pre-bialgebra is the class of weak Hopf algebras. The categories of modules of weak Hopf algebras are equivalent to fusion categories~\cite{etingof2017fusioncategories}, which classify 2D topologically ordered phases.

One of the main results of this paper is a structure theorem on MPDOs with an RG flow. 
\begin{thm}\label{thm:MPDOstructure}
    For any MPDO $\rho^{(L)}$ with an exact RG flow, there exists an element  $w^{(L)}\in \cA$ such that
    \begin{equation}
       \rho^{(L)}=\pi^{\ot L}\circ\Delta^{L-1}\bk{w^{(L)}}\Omega^{(L)}\,,
    \end{equation}
    where $\Omega^{(L)}>0$ satisfies $[\pi^{\ot L}\circ\Delta^{L-1}\bk{\cA},\Omega^{(L)}]=0$ and has the structure
    \begin{equation}
        \Omega^{(L)}\cong\bigoplus_{c=1}^r I_{d_c}\ot\Omega_c^{(L)}
    \end{equation}
    where
    \begin{align*}
     \Omega_c^{(L)}&:=\bigoplus_{{\bf a,b}}\Gamma_{1,a_1b_1}^{c}\ot\Gamma_{2,a_2b_2}^{b_1}\ot\cdots\ot \Gamma_{(L-1),a_{L-1}a_L}^{b_{L-2}}\bk{\bigotimes_{k=1}^L \Omega_{a_k}}\,,
    \end{align*}
    with diagonal matrices $\Gamma_{k,ab}^{c}$ $(a,b,c=1,...,r,k=1,...,L-1)$ where each is either  $\Gamma_{k,ab}^{c}>0$ or  $\Gamma_{k,ab}^{c}=0$. 
\end{thm}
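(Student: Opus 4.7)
The plan is to read the structure of $\rho^{(L)}$ off a single Koashi--Imoto decomposition of the full length-$L$ block and then invoke the RG-flow hypothesis to identify the algebraic piece with the image of $\pi^{\otimes L}\circ\Delta^{L-1}$. Concretely, I would block all $L$ sites and place the vertical tensor $W_k^{\alpha\beta}[L]$ in v-CF. By Lemma~\ref{vCF=KI} this coincides with the KI decomposition of $\rho^{(L)}$ relative to the minimal sufficient subalgebra $\cM_L^{\min}$, so, after an appropriate unitary,
\begin{equation*}
\rho^{(L)}\cong\bigoplus_{c}p_c\,\sigma_c^{(L)}\otimes \omega_c^{R,(L)},
\end{equation*}
where $\bigoplus_c p_c\,\sigma_c^{(L)}\otimes I$ generates $\cM_L^{\min}$ and $\bigoplus_c I\otimes\omega_c^{R,(L)}$ lies in its commutant.

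Using $\cM_L^{\min}=\pi_L(\cA)$ together with the iterated relation $\pi_L=\pi^{\otimes L}\circ\Delta^{L-1}$, the algebraic factor must be of the form $\pi^{\otimes L}(\Delta^{L-1}(w^{(L)}))$ for some (uniquely determined) $w^{(L)}\in\cA$. Setting $\Omega^{(L)}:=\bigoplus_c I_{d_c}\otimes \omega_c^{R,(L)}$ to be the commutant factor then gives $\rho^{(L)}=\pi^{\otimes L}(\Delta^{L-1}(w^{(L)}))\cdot \Omega^{(L)}$ with $[\pi_L(\cA),\Omega^{(L)}]=0$ automatically, establishing the coarse form of the theorem.

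The more technical task is to derive the explicit factorized expression for $\Omega_c^{(L)}$. For this I would induct on $L$. The single-site v-CF $W_k^{\alpha\beta}\cong\bigoplus_{a}W_{k,a}^{\alpha\beta}\otimes\Omega_a$ produces a set of ``charges'' $a=1,\dots,r$. Fusing two sites, one decomposes $W_k[2]$ as a sum over an intermediate virtual bond and compares its v-CF with the tensor product of the single-site v-CFs; coassociativity of $\Delta$ (Proposition~\ref{prop:comultiplication}) and uniqueness of v-CF up to unitary force the selection rule for which pairs $(a,b)$ can appear in a block of total charge $c$ to be encoded by a single matrix $\Gamma_{ab}^{c}$, which is either strictly positive (because each $\Omega_a>0$) or identically zero. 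Iterating site by site, the allowed $L$-tuples $(a_1,\dots,a_L)$ organize into a linear tree with intermediate labels $b_1,\dots,b_{L-2}$, each step contributing one $\Gamma_{k,a_k b_k}^{b_{k-1}}$ and one $\Omega_{a_k}$, yielding the stated nested expression.

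The main obstacle I anticipate lies in this inductive step: showing that successive fusions consistently produce the linear chain of intermediate labels $c\to b_1\to\cdots\to b_{L-2}$ rather than some more complicated branching pattern, and that the individual $\Gamma_{k,ab}^{c}$ are well-defined independently of how the sites are grouped. This should reduce to coassociativity of $\Delta$ combined with uniqueness of v-CF, so with Theorem~\ref{thm:prebialgbra} in hand the obstruction should be resolvable. A more delicate sub-point is the position dependence of $\Gamma_{k,ab}^{c}$: although translation invariance of $M$ might suggest position-independent fusion rules, fixing a linear fusion tree partially breaks the cyclic symmetry, and one must carefully track how the allowed combinations at each step depend on the labels already fixed further up the tree.
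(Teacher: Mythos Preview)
Your outline matches the paper's proof closely: the first half (put the length-$L$ block in v-CF, identify the block structure with $\cM_L^{\min}=\pi_L(\cA)=\pi^{\otimes L}\circ\Delta^{L-1}(\cA)$ via Lemma~\ref{vCF=KI}, and read off $w^{(L)}\in\cA$) is exactly what the paper does, and for $\Omega^{(L)}$ the paper likewise inducts by peeling off one site at a time.

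The one substantive divergence is the mechanism you plan to use in the inductive step. You attribute the appearance of $\Gamma_{ab}^c$ to coassociativity of $\Delta$ together with uniqueness of v-CF, and you anticipate having to check that different fusion trees agree. The paper uses neither. Its key input is a purely tensor-network lemma (Lemma~\ref{lem:Vab}, a direct adaptation of Theorem~4.14 in Ref.~\cite{CIRAC2017100}): for each fixed pair $(a,b)$, the operator $\sum_{\gamma}W_{k,a}^{\alpha\gamma}[l_1]\otimes W_{k,b}^{\gamma\beta}[l_2]$ itself admits a v-CF, and by essential uniqueness of BNTs its normal blocks must coincide with the $W_{k,c}[l_1+l_2]$, which produces the multiplicity matrices $\Gamma_{ab}^c[l_1+l_2]$ and hence the recursion $\Omega_c^{(l_1+l_2)}\cong\bigoplus_{a,b}\Gamma_{ab}^c[l_1+l_2]\otimes\Omega_a^{(l_1)}\otimes\Omega_b^{(l_2)}$. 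The paper then simply iterates this with $l_1=1$; the linear chain $c\to b_1\to\cdots\to b_{L-2}$ is a \emph{choice} of splitting order, not something to be proven consistent, so no comparison of fusion trees and no appeal to coassociativity is needed. Your worry about position dependence is also resolved by this: the $\Gamma$'s genuinely carry a length argument, and the paper sets $\Gamma_{k,ab}^{c}:=\Gamma_{ab}^{c}[1+(L-k)]$, so the $k$-dependence just reflects the shrinking size of the remaining right block at step $k$, not any breaking of translation invariance.
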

The proof is given in Appendix~\ref{app:structurethm}. 
 \begin{figure}[htbp]
	\centering
	\includegraphics[width=0.4\textwidth]{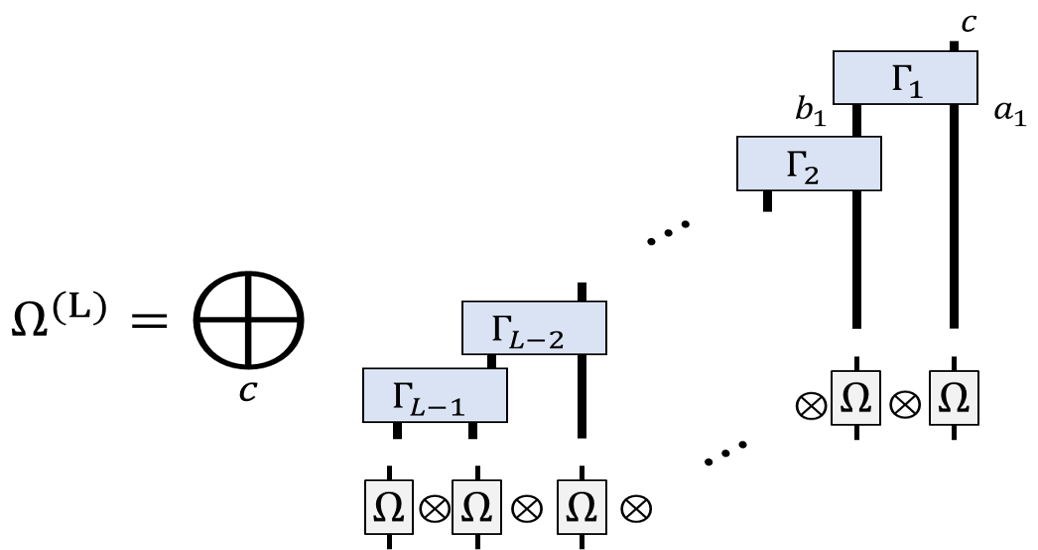}
	\caption{A diagrammatic structure of $\Omega^{(L)}$, where $\Omega:=\bigoplus_a\Omega_a$. It consists of a tree-like tensor of $\Gamma_{k,a,b}^c$ and a product state $\Omega^{\ot L}$, and they commute with each other. In Ref.~\cite{CIRAC2017100}, $\Omega^{\ot L}$ is treated as the local Gibbs state and $\Gamma_{k,a,b}^c$ is absorbed to the global MPO operators, while we consider that $\Gamma_{k,a,b}^c$ is also a part of locally interacting (non-topological) part. This makes the non-topological part truly interacting in our structure theorem.}
	\label{fig:Omega}
\end{figure}

This theorem generalizes the structure theorem for the fixed point~\eqref{eq:RFPMPDO} in two ways. First, it applies to non-fixed-point MPDOs. Second, in our theorem, the global MPO (``topological part" $\bigoplus\lambda_iP_i$) is explicitly identified as an MPO representation of an element of a pre-bialgebra, $\pi^{\ot L}\circ\Delta^{L-1}\bk{w^{(L)}}$. In Ref.~\cite{CIRAC2017100}, $\Gamma_{k,a,b}^c$ is also regarded as part of the topological component together with $\pi^{\ot L}\circ\Delta^{L-1}\bk{w^{(L)}}$ in our notation. This requires each $\Gamma_{k,a,b,}^c$ to be a projector, and the locally interacting part to be non-interacting ${\bigoplus_a\Omega_a}^{\otimes L}$, represented as in Eq.~\eqref{eq:RFPMPDO}~\cite{private}. In our case, we regard $\Gamma_{k,a,b}^c$ as a non-topological part ($\Omega^{(L)}$). Because of this identification, we can accommodate  interacting states, such as the Gibbs state of a commuting local Hamiltonian, as the ``non-topological part". Although we do not have a proof, we expect to $\Omega^{(L)}$ always be ``locally interacting"-- for example, expressible as the Gibbs state of a commuting local Hamiltonian. We can consider the following RG fixed point MPDOs to support this conjecture.

\begin{remark}
If $\cA=\mathbb{C}$, then $\cM^{\min}_l=\mathbb{C}I$. This implies that:
\begin{equation}
    \forall \mu\in \cS, \exists c\in\mathbb{C} \; s.t. \;\mu\rho^{-1}_{l}=cI\Rightarrow \mu=\rho_{l}
\end{equation}
and therefore $\rho^{(L)}=\rho^{\ot L/l}_l$. From the translation invariance, we have $\rho^{(L)}=\Omega^{(L)}=\rho^{\ot L}_1$, i.e., MPDO is a product state.
\end{remark}

\begin{remark}
Let $\sigma\in\cS(\C^d\otimes\C^d)$ be a positive bipartite mixed state. Let 
\begin{equation}
    \sigma=\sum_{\alpha=1}^D F^\alpha\otimes G^{\alpha} 
\end{equation}
be the operator Schmidt decomposition. Define a tensor $M=\sum W^{\alpha\beta}\ot \ketr{\alpha}\brar{\beta}$ by
\begin{equation}
    W^{\alpha\beta}:=G^\alpha\ot \sigma \ot F^\beta\,.
\end{equation}
Each physical site associated with a single $M$ is $\bk{\C^d}^{\ot 4}$. 
The MPDO generated by $M$ is a product of mixed dimers $\sigma$, and interacts with neighboring sites. The h-CF only contains one block.

We label $\C^d$ from the left by $i=1,2,3,...$. The v-CF of $W^{\alpha\beta}[l]$ is then:
\begin{equation*}
    W^{\alpha\beta}[l]\cong (G^\alpha\ot F^\beta) \bk{\bigotimes_{i=1}^{l-1}\sigma_{4i,4i+1}}\bk{\bigotimes_{j=1}^l\sigma_{4j-2,4j-1}}
\end{equation*}
from which we can identify 
\begin{equation*}
    \Omega^{(L)}=\bk{\bigotimes_{i=1}^{l-1}\sigma_{4i,4i+1}}\bk{\bigotimes_{j=1}^L\sigma_{4j-2,4j-1}}\,,
\end{equation*}
where 
\begin{equation*}
    \bigoplus_{{\bf a,b}}\Gamma_{1,a_1b_1}^{c}\ot\Gamma_{2,a_2b_2}^{b_1}\ot\cdots\ot \Gamma_{(L-1),a_{L-1}a_L}^{b_{L-2}}=\bk{\bigotimes_{i=1}^{L-1}\sigma_{4i,4i+1}}\,.
\end{equation*}
$\Omega^{(L)}$ is a Gibbs state $e^{-H_L}$ of a commuting local Hamiltonian $H_L$, whose interaction terms are given by
\begin{equation*}
    H_L:=-\sum_{i=1}^L\ln\sigma_{4i,4i+1}-\sum_{i=1}^{L}\ln\sigma_{4i-2,4i-1}\,.
\end{equation*}
\end{remark}

\subsection{The exact RG-transformation and its inverse}
In this section we show one construction of an exact RG flow and its inverse. 
For a given MPDO satisfying the conditions in Definition~\ref{def:MPDORG}, the first step of the exact RG transformation is to apply the unitary transformation the $l$-consecutive tensor to the v-CF~\eqref{eq:v-CF} 
        \begin{align*}
            U_lW_k^{\alpha_k\beta_k}[l]U_l^\dagger=\bigoplus_{a=1}^{r} W_{k,a}^{\alpha_k\beta_k}[l] \ot \Omega_{a}^{(l)}\,.
        \end{align*}
Each $W_{k,a}^{\alpha_k\beta_k}[l]$ is supported on a $d_a$-dimensional Hilbert space $\C^{d_a}$ for any $l$, due to the condition~\eqref{eq:min=piA}. $\Omega_{a}^{(l)}$ is supported on a large (say, $d_a^R[l]$-dimensional) Hilbert space $\cH_{d_a^R[l]}$ and carries local correlations and entropy irrelevant to the correlation between the blocked sites. Similarly to the Koashi-Imoto decomposition~\eqref{eq:KI-dec}, can be discard $\Omega_{a}^{(l)}$ without losing any relevant correlation. 

Let $\Pi_a$ be the projection onto $\cH_{d_a}\ot\cH_{d_a^R[l]}$. Then the second step of the renormalization transformation is
\begin{align*}
    E_{l\to1}(X):=\bigoplus_{a=1}^{r} \tr_{\cH_{d_a^R[l]}}\bk{\Pi_aX\Pi_a}\ot \frac{\Omega_a^{(1)}}{\tr{\bk{\Omega_a^{(1)}}}}    \,,
\end{align*}
which is a quantum channel from $\cB\bk{(\C^d)^{\ot l}}$ to $\cB\bk{\C^d}$. Finally, we return to the original basis of a single site using $U_1$. In summary, the renormalization transformation on $l$-neighboring sites is 
\begin{equation*}
    \cE_{l\to1}(\cdot):=U_1^\dagger E_{l\to 1}\bk{U_l\cdot U_l^\dagger}U_1\,.
\end{equation*}
After renormalization, the new single site tensor ${\tilde M}$ is defined as 
\begin{align}
    \tilde{M}&=\sum_{k}\sum_{\alpha_k,\beta_k=1} \tilde{W}_k^{\alpha_k\beta_k}\ot\ketr{\alpha_k}\brar{\beta_k}\ot \tilde{N}_k\,,\label{eq:RGMtilde}\\
    \tilde{W}_k^{\alpha_k\beta_k}&:=U_1^\dagger\bk{\bigoplus_{a=1}^r \frac{\tr\bk{\Omega_{a}^{(l)}}}{K_{k,l}\tr\bk{\Omega_{a}^{(1)}}}W^{\alpha_k\beta_k}_{k,a}[l]\ot\Omega^{(1)}_{a}}U_1\,,\nonumber\\
    \tilde{N}_k&:=K_{k,l}N_k^l\,,\nonumber
\end{align}
where $K_{k,l}$ is a constant to normalize 
\begin{equation*}
    \lr{\{\frac{\tr\bk{\Omega_{a}^{(l)}}}{K_{k,l}\tr\bk{\Omega_{a}^{(1)}}}W^{\alpha_k\beta_k}_{k,a}[l]}\}
\end{equation*} 
to form a BNT.

The inverse of the renormalization transformation is constructed using the recovery map of $E_{l\to 1}$, defined as:
\begin{equation*}
    F_{1\to l}(X):=\bigoplus_{a=1}^{r} \tr_{\cH_{d_a^R[1]}}\bk{\Pi_aX\Pi_a}\ot \frac{\Omega_a^{(l)}}{\tr{\bk{\Omega_a^{(l)}}}}\,,
\end{equation*}
and written as:
\begin{equation*}
    \cF_{1\to l}(\cdot):=U_l^\dagger F_{1\to l}(U_1 \cdot U_1^\dagger)U_l\,.
\end{equation*}
One can explicitly check that 
\begin{equation*}
    W_k^{\alpha_k\beta_k}[l]=\cF_{1\to l}\circ\cE_{l\to1}(W_k^{\alpha_k\beta_k}[l])\,,\quad\forall k,\alpha_k,\beta_k\,
\end{equation*}
which guarantees that for any $L\geq l+1$, the MPDO $\rho^{(L)}_{AB}$ with a partition $AB=\Lambda_L$ also satisfies:
\begin{equation*}
    \rho^{(L)}_{AB}=\cF_{1\to B}\circ\cE_{B\to1}\bk{\rho^{(L)}_{AB}}\,.
\end{equation*}

The entire exact RG flow is constructed by layers of parallel applications of $\cE_{l\to 1}$ for different $l$. Let us fix the coarse-graining scale to $l_0=L/L_2$ for some $L_2$. The first step of RG flow is given by $\cE_{l_0\to 1}^{\ot L_2}$, which reduces the number of sites to $L_2$ while keeping the local dimension constant at $d$. 
The second step is again coarse-graining $l_0$-neighboring (coarse-grained) sites into a single site. This coarse-graining operation is constructed using the v-CF or the minimal sufficient subalgebra on the $l_0$ coarse-grained sites. However, the minimal sufficient subalgebra changes isomorphically under reversible quantum channels~\cite{10.1093/qmath/39.1.97,Jencova2006-on}; therefore, one can invert the first step and then apply coarse-graining on the $l_0^2$ sites instead. Thus, the second renormalization step is thus given by:
\begin{equation*}
    \cE^{(2)}_{l_0\to 1}:=\cE_{l_0^2\to1}\circ\cF_{1\to l_0}^{\ot l_0}\,.
\end{equation*}
Note that $\cE^{(2)}_{l_0\to 1}$, which is a quantum channel from $l_0$-sites to a single site, may not necessarily be implemented by the above sequential representation. From this, we see that level-$k$ of the renormalization is equivalent to applying $\cE_{l_0^k\to1}$ to the original tensor, i.e.,
\begin{equation*}
   \cE^{(k)}_{l_0\to 1}\circ\bk{\cE^{(k-1)}_{l_0\to 1}}^{\ot l_0}\circ\cdots\circ\bk{\cE_{l_0\to 1}}^{\ot l_0^{k-1}}( \rho^{(L)})=\cE_{l_0^k\to1}( \rho^{(L)})
\end{equation*}
for any $L\geq l_0^k$, where both maps act on neighboring $l_0^k$ sites.

We remark on alternative choices for the renormalization transformation. Although our construction is not the only option, the existence of an inverse operation imposes a significant constraint on the exact renormalization map. According to the Koashi-Imoto theorem, any change in $W_{k,a}^{\alpha_k\beta_k}[l]$ or in the direct sum structure results in information loss. Thus any exact renormalization transformation must preserve this structure (up to a unitary). Therefore, other possible transformations can only differ by  changes in $\Omega_a^{l}$ or an isometry.  Furthermore, by definition, these exact renormalization transformations are always locally compatible with other CPTP maps. 

\subsection{The fixed points of RG flow}
When the RG flow described in the previous section converges, the fixed points are characterized by 
\begin{equation*}
    {\tilde M}=M\,
\end{equation*}
for $\tilde M$ in Eq.~\eqref{eq:RGMtilde}. 
This condition is met when there exists $c_k\in\C, (k=1,...,g)$ satisfying
\begin{align*}
    N_k^l&=c^{l-1}_kN_k\,,\\
    \tr\bk{\Omega_{a}^{(l)}}W_{k,a}^{\alpha_k\beta_k}[l]&=c_k^{1-l}\tr\bk{\Omega_{a}^{(1)}}W_{k,a}^{\alpha_k\beta_k}
\end{align*}
for any $l$. 
We will present an explicit example in Sec.~\ref{sec:exampleX}. 

As mentioned in Sec.~\ref{sec:RFP1}, these fixed points were first introduced and studied in Ref.~\cite{CIRAC2017100}. A subclass of fixed points constructed from a biconnected $C^*$-weak Hopf algebra,  is discussed in Ref.~\cite{MPOalg2}. Those fixed point MPDOs can be understood as states in boundary theories of 2D tensor networks in topologically ordered phases.

\section{Example of MPDO with a RG flow}\label{sec:exampleX}
In this section, we present an explicit example of an MPDO with an RG flow that is not a fixed point. This is a classical state, and a similar model is also studied in Ref.~\cite{mixedphase1}.

Let $0<p\leq 1$ and consider a MPDO
    \begin{equation*}
        \rho^{(L)}_p=\frac{1}{2^L}\left(I^{\ot L}+p^lX^{\ot L}\right)\,,
    \end{equation*}
    where $X$ is the Pauli $X$ matrix. 
    The corresponding tensor is defined as 
    \begin{equation*}
        M^{ij}=\frac{\delta_{ij}}{2}\ketr{0}\brar{0}+\frac{p\delta_{i{\bar j}}}{2}\ketr{1}\brar{1} \quad(i,j=0,1)\,,
    \end{equation*}
    where ${\bar j}=j+1$ (mod $2$). 
  The h-CF of this tensor is given as  
    \begin{align*}
        M^{ij}&\cong(M_0^{ij}\otimes N_0)\oplus(M_1^{ij}\otimes N_1)\,,\\
        M_0^{ij}\otimes N_0&=\left(\frac{\delta_{ij}}{\sqrt{2}}\ketr{0}\brar{0}\right)\ot\left(\frac{1}{\sqrt{2}}\ketr{0}\brar{0}\right)\,,\\
        M_1^{ij}\otimes N_1&=\left(\frac{\delta_{i{\bar j}}}{\sqrt{2}}\ketr{1}\brar{1}\right)\ot\left(\frac{p}{\sqrt{2}}\ketr{1}\brar{1}\right)\,.
    \end{align*}
By blocking $l$ tensors,
    \begin{align*}
        &\overbrace{M^{i_1j_1}...M^{i_lj_l}}^{l}\\
        &=\lr{[\left(\frac{\prod_{k=1}^l\delta_{i_kj_k}}{\sqrt{2^l}}\ketr{0}\brar{0}\right)\ot N_0^l}]\oplus \lr{[\left(\frac{\prod_{k=1}^l\delta_{i_k{\bar j_k}}}{\sqrt{2}}\ketr{1}\brar{1}\right)\ot N_1^l}]\,.
    \end{align*}
    
We obtain
\begin{align*}
    W_0^{11}[l]=\frac{1}{\sqrt{2^l}}I^{\ot l}\,,\quad W_1^{11}[l]=\frac{1}{\sqrt{2^l}}X^{\ot l}\,
\end{align*}
($D_0^L=D_1^L=1$, therefore the only meaningful case is $\alpha_k,\beta_k=1$). We will omit upper indices $^{11}$ from this point onward. The v-CF for length $l$ is therefore given by 
\begin{align*}
    W_0[l]&\cong \lr{[\frac{1}{\sqrt{2}}\ket{0}\bra{0}\ot\frac{1}{\sqrt{2^{l-1}}}\Pi_{X^{\ot l}=+1} }]\\
    &\quad\oplus\lr{[\frac{1}{\sqrt{2}}\ket{1}\bra{1}\ot\frac{1}{\sqrt{2^{l-1}}}\Pi_{X^{\ot l}=-1} }] \,,\\
    W_1[l]&\cong \lr{[\frac{1}{\sqrt{2}}\ket{0}\bra{0}\ot\frac{1}{\sqrt{2^{l-1}}}\Pi_{X^{\ot l}=+1} }]\\
    &\quad\oplus\lr{[\frac{-1}{\sqrt{2}}\ket{1}\bra{1}\ot\frac{1}{\sqrt{2^{l-1}}}\Pi_{X^{\ot l}=-1} }]\,,
\end{align*}
where $\Pi_{X^{\ot l}=\pm1}$ are the projections of the eigensubspaces of $X^{\ot l}$. The coefficients are determined by the BNT condition.

Let us define the operators $w_0[l], w_1[l]$ and $\Omega_l$ as  
\begin{align*}
    w_0[l]&\:=\ket{0}\bra{0}\oplus\ket{1}\bra{1}\,,\\
    w_1[l]&\:=\ket{0}\bra{0}\oplus\bk{-\ket{1}\bra{1}}\,,\\
    \Omega^{(l)}&:=\lr{[\ket{0}\bra{0}\ot\frac{1}{\sqrt{2^{l}}}\Pi_{X^{\ot l}=+1} }]\oplus\lr{[\ket{1}\bra{1}\ot\frac{1}{\sqrt{2^{l}}}\Pi_{X^{\ot l}=-1} }]\\
    &\,\cong \frac{1}{\sqrt{2^{l}}}I^{\ot l}.
\end{align*}

The corresponding algebra $\cA$ is 
\begin{equation*}
    \cA=\lr{\{ c_0\ket{0}\bra{0}\oplus c_1\ket{1}\bra{1} \:\middle|\: c_0,c_1\in\mathbb{C} }\}
\end{equation*}
and the representation $\pi_l$ is defined as 
\begin{equation*}
    \pi_l(\cA)=\lr{\{ c_0\ket{0}\bra{0}\ot
    \Pi_{X^{\ot l}=+1}\oplus c_1\ket{1}\bra{1} \ot\Pi_{X^{\ot l}=-1}}\}\,.
\end{equation*}
Since the projections on to the eigensubspaces can be decomposed as
\begin{align*}
     \Pi_{X^{\ot2 l}=+1}&= \Pi_{X^{\ot l}=+1}\ot\Pi_{X^{\ot l}=+1}+ \Pi_{X^{\ot l}=-1}\ot\Pi_{X^{\ot l}=-1}\,,\\
     \Pi_{X^{\ot2 l}=-1}&= \Pi_{X^{\ot l}=+1}\ot\Pi_{X^{\ot l}=-1}+ \Pi_{X^{\ot l}=-1}\ot\Pi_{X^{\ot l}=+1}\,,
\end{align*}
the inclusion relation is explicitly written as
\begin{align*}
   \iota_{l+l}:\: &c_0\ket{0}\bra{0}\ot
    \Pi_{X^{\ot 2l}=+1}+c_1\ket{1}\bra{1} \ot\Pi_{X^{\ot 2l}=-1}\in\pi_{2l}(\cA)
   \\\: &\mapsto \\
    & c_0\bk{\ket{0}\bra{0}\ot
    \Pi_{X^{\ot l}=+1}}\ot\bk{\ket{0}\bra{0}\ot
    \Pi_{X^{\ot l}=+1}}\\
    &+c_0\bk{\ket{1}\bra{1}\ot
    \Pi_{X^{\ot l}=-1}}\ot\bk{\ket{1}\bra{1}\ot
    \Pi_{X^{\ot l}=-1}}\\
    &+c_1\bk{\ket{0}\bra{0}\ot
    \Pi_{X^{\ot l}=+1}}\ot\bk{\ket{1}\bra{1}\ot
    \Pi_{X^{\ot l}=-1}}\\
    &+c_1\bk{\ket{1}\bra{1}\ot
    \Pi_{X^{\ot l}=-1}}\ot\bk{\ket{0}\bra{0}\ot
    \Pi_{X^{\ot l}=+1}}\,\\
    &\in\pi_l(\cA)\ot\pi_l(\cA).
\end{align*}
In conclusion, the coproduct is defined as 
\begin{align*}
    \Delta(c)&:=\bk{\pi_l^{-1}\ot\pi_l^{-1}}\circ\iota_{l+l}\circ\pi_{2l}(c)\\
    &\:=c_0 \bk{\ket{0}\bra{0}\ot\ket{0}\bra{0}\oplus\ket{1}\bra{1}\ot\ket{1}\bra{1}}\\
    &\quad\oplus c_1 \bk{\ket{0}\bra{0}\ot\ket{1}\bra{1}\oplus\ket{1}\bra{1}\ot\ket{0}\bra{0}}
\end{align*}
for any $c=c_0\ket{0}\bra{0}\oplus c_1\ket{1}\bra{1}\in \cM$. One can easily verify that $\Delta$ is independent of $l$ (more generally, $\Delta_{l_1+l_2}=\Delta$ also holds).   The basis of $\cA$ is given by $\ket{0}\bra{0},\ket{1}\bra{1}$ and the dual basis in $\cA^*$ is given by $\delta_+:=\tr\bk{\ket{0}\bra{0}\cdot}, \delta_-:=\tr\bk{\ket{1}\bra{1}\cdot}$. 
The multiplication of $\cA^*$ is defined via
\begin{equation*}
    \delta_x\delta_y:=\left(\delta_x\ot\delta_y\right)\circ\Delta\,.
\end{equation*}
From this definition, one can confirm that 
\begin{equation*}
    \delta_x\delta_y(c)=\begin{cases}
c_0=\delta_+(c)& x=y \\
c_1=\delta_-(c) & x\neq y\,
\end{cases}
\end{equation*}
and therefore 
\begin{equation*}
    \delta_+(a\delta_++b\delta_-)=(a\delta_++b\delta_-)\delta_+=(a\delta_++b\delta_-)\,, \forall a,b\in\mathbb{C}\,,
\end{equation*}
i.e., $\delta_+$ is the unit of $\cA^*$ (the counit of $\cA$). By definition, $\Delta$ is multiplicative, and thus, $\cA$ and $\cA^*$ are pre-bialgebras. Moreover, this example satisfies $\Delta(1)=1\ot 1$.

Meanwhile,
\begin{align*}
    \psi(\delta_\pm)&:=\delta_\pm\bk{w_0[l]}\ket{0}\bra{0}\ot I_{D_0^R}\oplus\delta_\pm\bk{w_1[l]}\ket{1}\bra{1}\ot I_{D_1^R}\,\\
    &\:=\ket{0}\bra{0}\ot \ket{0}\bra{0}\oplus\bk{\pm\ket{1}\bra{1}}\ot \ket{1}\bra{1}\,,
\end{align*}
and this provides an injective representation of $\cA^*$.

The MPDO is then
\begin{align*}
    \rho^{(L)}_p&=\frac{1}{\sqrt{2^L}}\pi^{\ot L}\circ\Delta\bk{w^{(L)}}\\
    w^{(L)}&=\frac{1+p^L}{\sqrt{2^L}}\ket{0}\bra{0}+\frac{1-p^L}{\sqrt{2^L}}\ket{1}\bra{1}\,.
\end{align*}
The RG flow converges to (Fig.~\ref{fig:phaseX})
\begin{align}
    \begin{cases}
       \frac{1}{2^L}I^{\ot L} & 0\leq p<1\\
       \frac{1}{2^{L-1}}\Pi_{X^{\ot L}=+1} & p=1\,.
    \end{cases}
\end{align}

 \begin{figure}[htbp]
	\centering
	\includegraphics[width=0.4\textwidth]{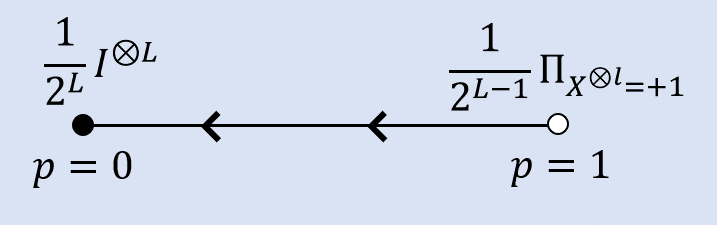}
	\caption{The exact RG flow of $\rho^{(L)}$. Any state with $p<1$ converges to the completely mixed state, while only the $p=1$ state represents another fixed point and remains unchanged. Although there are two distinct fixed points, all MPDOs in $0\leq p\leq1$ belong to the same mixed-state quantum phase~\cite{MPOalg2}.}
	\label{fig:phaseX}
\end{figure}

\section{MPO-algebra symmetry of MPDOs with RG flows}\label{sec:symmetry}
Traditionally, the symmetry of a physical system is described by a group and its action on the system (representation). Group symmetry is a powerful tool for understanding phases of matter through the concept of spontaneous symmetry-breaking. However, certain phases known as topologically ordered phases, cannot be characterized within this group symmetry framework. The symmetries of topologically ordered phases are non-invertible symmetries (categorical symmetries) from algebraic structures, which extend beyond traditional group symmetry. 

For a topologically ordered 2D tensor network state, a non-invertible symmetry is represented by MPOs acting on the virtual degrees of freedom~\cite{MPOinjectivePEPS}. This imposes an MPO-algebra symmetry on the boundary theory, where the state is typically an MPDO~\cite{PEPSboundary,PEPSboundary0,Kastoryano_2019}. 

A straightforward corollary of Theorem~\ref{thm:MPDOstructure} is that any MPDO with an RG flow must obey an abelian MPO-symmetry characterized by the center of the pre-bialgebra. 
\begin{cor}\label{thm:MPOsym}
For any MPDO $\rho^{(L)}$ with an exact RG flow, let $\cZ(\cA)$ be the center of the pre-bialgebra $\cA$. Then, for any $z\in\cZ(\cA)$, there exists an MPO representation of $\cA$ 
\begin{equation*}
    \pi^{\ot L}\circ\Delta^{L-1}\bk{z}=\sum_{\bf i,j} \tr\bk{b_zT^{i_1j_1}...T^{i_Lj_L}}\ket{i_1...i_L}\bra{j_1...j_L}\,,
\end{equation*}
where $b_z$ and $T^{ij}$ are fixed-dimensional matrices, such that
\begin{equation*}
    \lr{[\rho^{(L)}, \pi^{\ot L}\circ\Delta^{L-1}\bk{z}}]=0\,,\quad \forall L\,.
\end{equation*}
\end{cor}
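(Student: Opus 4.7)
The plan is to establish the two assertions separately: the existence of a fixed-bond-dimension MPO representation of $\pi^{\ot L}\circ\Delta^{L-1}(z)$, and its commutation with $\rho^{(L)}$. The commutation is the physically meaningful content and follows almost immediately from the algebra-homomorphism property; the MPO form is a bookkeeping consequence of the pre-bialgebra structure.

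For the MPO representation I would fix a basis $\{e_\alpha\}_{\alpha=1}^{\dim\cA}$ of $\cA$, expand the coproduct in structure constants $\Delta(e_\gamma)=\sum_{\alpha,\beta}C^\gamma_{\alpha\beta}\,e_\alpha\ot e_\beta$, and iterate via coassociativity to write
\begin{equation*}
\Delta^{L-1}(z)=\sum_{\alpha_1,\ldots,\alpha_L}\bigl(v_z^\top M^{\alpha_1}\cdots M^{\alpha_{L-1}}\bigr)_{\alpha_L}\,e_{\alpha_1}\ot\cdots\ot e_{\alpha_L},
\end{equation*}
where $(M^\alpha)_{\gamma\delta}:=C^\gamma_{\alpha\delta}$ are $\dim\cA\times\dim\cA$ virtual matrices and $v_z$ is the coordinate vector of $z=\sum_\gamma z^\gamma e_\gamma$. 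Writing $\pi(e_\alpha)=\sum_{ij}(P_\alpha)_{ij}\ket{i}\bra{j}$ and contracting over $\alpha_k$ on each site produces an MPO with bulk tensor $T^{ij}:=\sum_\alpha (P_\alpha)_{ij}M^\alpha$, and the remaining open endpoints (the initial row index paired with $v_z$ and the final virtual index paired with $P_{\alpha_L}$ on the last site) are absorbed into a single $\dim\cA\times\dim\cA$ boundary matrix $b_z$ that closes the network into the cyclic trace.

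For the commutation I would invoke Proposition~\ref{prop:comultiplication} so that $\Delta$, and hence $\Delta^{L-1}$, is multiplicative, and compose with the representation $\pi$ to conclude that $\pi^{\ot L}\circ\Delta^{L-1}:\cA\to\cB(\cH^{\ot L})$ is an algebra homomorphism. Then for $z\in\cZ(\cA)$ the image $\pi^{\ot L}\circ\Delta^{L-1}(z)$ is central in $\pi^{\ot L}\circ\Delta^{L-1}(\cA)$, hence commutes with $\pi^{\ot L}\circ\Delta^{L-1}(w^{(L)})$. Theorem~\ref{thm:MPDOstructure} provides $\rho^{(L)}=\pi^{\ot L}\circ\Delta^{L-1}(w^{(L)})\,\Omega^{(L)}$ with $[\pi^{\ot L}\circ\Delta^{L-1}(\cA),\Omega^{(L)}]=0$; since both factors of this product commute with $\pi^{\ot L}\circ\Delta^{L-1}(z)$, so does the product $\rho^{(L)}$ itself.

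I expect the main obstacle to be purely notational: recasting the naturally open-boundary matrix product expansion of $\Delta^{L-1}(z)$ into the translation invariant cyclic trace form of the statement with a single boundary matrix $b_z$ encoding the input $z$. Everything conceptual is already supplied by the pre-bialgebra framework and by the structure theorem.
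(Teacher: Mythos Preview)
Your proposal is correct and matches the paper's intended argument: the commutation follows exactly as you say from Theorem~\ref{thm:MPDOstructure} together with the multiplicativity of $\Delta$ (Proposition~\ref{prop:comultiplication}), and the paper states the result as an immediate corollary without further detail. Your structure-constant construction of the MPO tensor is essentially the tensor $O=\sum_x\pi(x)\ot\psi(\delta_x)$ built in the proof of Theorem~\ref{thm:prebialgbra}, with your matrices $M^\alpha$ serving as the left-regular representation of $\cA^*$ where the paper uses the faithful representation $\psi$; the boundary matrix $b_z$ then just picks out the coefficients of $z$ in the chosen basis.
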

Note that this corollary holds for any MPDO with RG flow. Specifically, when the MPDO is the boundary state of a 2D topologically ordered ground state, it is expected that the pre-bialgebra is a weak-Hopf algebra, and its representation category reproduces the fusion category of the topological order.

\section{Classification of 1D mixed-state quantum phases}\label{sec:classification}
Classifying phases of matter is one of the fundamental problems in many-body physics. In the most general sense, two gapped local Hamiltonians are said to be in the same (gapped) quantum phase if they are connected by a smooth deformation of local interactions without closing the energy gap. Phases that do not admit a trivial product state are known as topologically ordered phases. Furthermore, a finer classification by, e.g., imposing symmetries along the path, leading to the classification of SPT phases and symmetry-enriched topologically ordered phases. 

 Two gapped local Hamiltonians are connected by a smooth gapped path if and only if their ground states are (approximately) connected by a short-depth local unitary circuit~\cite{LU1,LU2}. Therefore, quantum phases are associated with equivalence classes of ground states connected by unitary circuits. A natural extension of this concept to finite-temperature, or more generally, to mixed states is achieved by replacing the local unitary circuits with local circuits of general quantum channels. The equivalence classes of mixed states are referred to as mixed-state quantum phases~\cite{PhysRevLett.107.210501,mixedphase3,mixedphase1}. 

The framework of real-space, exact RG transformations is a powerful tool for characterizing  equivalence classes. When the RG flow converges to a fixed point, the flow become an explicit local circuit of unitary or quantum channels connecting the starting point to the fixed point. Since the RG flow is invertible, the starting point and the fixed point must lie within the same quantum phase. Therefore, to classify quantum phases, it is enough to classify the fixed points of the RG flows. 

The fixed points rely on the pre-bialgebra structure, and it is expected that mixed-state quantum phases are classified by pre-bialgebras. In Ref.~\cite{MPOalg2}, it is shown that if the pre-bialgebra satisfies additional conditions and become a $C^*$-Hopf algebra, then the MPDO is in the trivial phase (connected to a product state). The example shown in Sec.~\ref{sec:exampleX} possesses non-trivial MPO symmetry, but the pre-bialgebra $\cA\cong \C^2$ is actually a $C^*$-Hopf algebra, and therefore the MPDO is in the trivial phase. It is conjectured that at least one non-trivial class of 1D mixed-state quantum phase exists within MPDOs~\cite{MPOalg2}, but providing this remains an open problem. 

\section{Conclusion and discussions}\label{sec:conclusion}
In this paper, we investigated real-space exact RG transformations of MPDOs, which consist of short-depth circuits of local quantum channels, requiring the renormalization flow to remain invertible. A key distinction was drawn between MPDOs and MPSs, highlighting that while MPSs always admit well-defined isometric renormalization flows, MPDOs do not always exhibit a converging renormalization flow.

To address this, we introduced a subclass of MPDOs that possess a well-defined renormalization flow, which we analyzed in the context of an MPO representation of a pre-bialgebra structure. This led to the observation that MPDOs in this subclass obey generalized symmetries governed by MPO algebras. Additionally, we examined the fixed points of the renormalization flow, providing insights into the classification of mixed-state quantum phases. Our result partially addresses the question of what kinds of physical states are equivalent to all MPDOs. We conjecture that a general MPDO can be expressed as a product of an MPO-algebra, representing the anomalous part from the boundary theory, and a local Gibbs state. An interesting open question is whether there exists an MPDO that obeys a non-trivial MPO symmetry but cannot be realized by any boundary theory of 2D topologically ordered phases.

To obtain a complete classification of 1D mixed-state quantum phases, the fixed points must first be classified. Our structure theorem suggests that these equivalence classes are labeled by certain classes of pre-bialgebras, though the explicit correspondence remains an open question. We also revealed that, to apply the RG flow strategy to classification, one must consider  an approximate RG flow. This type of RG flow was developed for 2D tensor networks, such as the tensor renormalization group (TRG) method ~\cite{TRG1,TRG2}, b but it cannot be directly applied to 1D mixed states. The approximate invertibility of quantum channels is a mathematically challenging problem, and further analysis is required to determine which degrees of freedom can be truncated without altering the quantum phase.

\section*{Acknowledgement}
The author would like to thank David P\'{e}rez-Garcia, Alberto Ruiz de Alarc\'{o}n and Jos\'{e} Garre Rubio for their helpful comments and fruitful discussions. K. K. acknowledges support from the JSPS Grant-in-Aid for Early-Career Scientists, No. 22K13972; from the MEXT-JSPS Grant-in-Aid for Transformative Research Areas (A) “Extreme Universe,” No. 22H05254; and from the MEXT-JSPS Grant-in-Aid for Transformative Research Areas (B) No. 24H00829.

\bibliography{ref_MPDORG}

\begin{thebibliography}{49}%
\makeatletter
\providecommand \@ifxundefined [1]{%
 \@ifx{#1\undefined}
}%
\providecommand \@ifnum [1]{%
 \ifnum #1\expandafter \@firstoftwo
 \else \expandafter \@secondoftwo
 \fi
}%
\providecommand \@ifx [1]{%
 \ifx #1\expandafter \@firstoftwo
 \else \expandafter \@secondoftwo
 \fi
}%
\providecommand \natexlab [1]{#1}%
\providecommand \enquote  [1]{``#1''}%
\providecommand \bibnamefont  [1]{#1}%
\providecommand \bibfnamefont [1]{#1}%
\providecommand \citenamefont [1]{#1}%
\providecommand \href@noop [0]{\@secondoftwo}%
\providecommand \href [0]{\begingroup \@sanitize@url \@href}%
\providecommand \@href[1]{\@@startlink{#1}\@@href}%
\providecommand \@@href[1]{\endgroup#1\@@endlink}%
\providecommand \@sanitize@url [0]{\catcode `\\12\catcode `\$12\catcode `\&12\catcode `\#12\catcode `\^12\catcode `\_12\catcode `\%12\relax}%
\providecommand \@@startlink[1]{}%
\providecommand \@@endlink[0]{}%
\providecommand \url  [0]{\begingroup\@sanitize@url \@url }%
\providecommand \@url [1]{\endgroup\@href {#1}{\urlprefix }}%
\providecommand \urlprefix  [0]{URL }%
\providecommand \Eprint [0]{\href }%
\providecommand \doibase [0]{http://dx.doi.org/}%
\providecommand \selectlanguage [0]{\@gobble}%
\providecommand \bibinfo  [0]{\@secondoftwo}%
\providecommand \bibfield  [0]{\@secondoftwo}%
\providecommand \translation [1]{[#1]}%
\providecommand \BibitemOpen [0]{}%
\providecommand \bibitemStop [0]{}%
\providecommand \bibitemNoStop [0]{.\EOS\space}%
\providecommand \EOS [0]{\spacefactor3000\relax}%
\providecommand \BibitemShut  [1]{\csname bibitem#1\endcsname}%
\let\auto@bib@innerbib\@empty
\bibitem [{\citenamefont {White}(1992)}]{DMRG1}%
  \BibitemOpen
  \bibfield  {author} {\bibinfo {author} {\bibfnamefont {S.~R.}\ \bibnamefont {White}},\ }\href {\doibase 10.1103/PhysRevLett.69.2863} {\bibfield  {journal} {\bibinfo  {journal} {Phys. Rev. Lett.}\ }\textbf {\bibinfo {volume} {69}},\ \bibinfo {pages} {2863} (\bibinfo {year} {1992})}\BibitemShut {NoStop}%
\bibitem [{\citenamefont {\"Ostlund}\ and\ \citenamefont {Rommer}(1995)}]{DMRG2}%
  \BibitemOpen
  \bibfield  {author} {\bibinfo {author} {\bibfnamefont {S.}~\bibnamefont {\"Ostlund}}\ and\ \bibinfo {author} {\bibfnamefont {S.}~\bibnamefont {Rommer}},\ }\href {\doibase 10.1103/PhysRevLett.75.3537} {\bibfield  {journal} {\bibinfo  {journal} {Phys. Rev. Lett.}\ }\textbf {\bibinfo {volume} {75}},\ \bibinfo {pages} {3537} (\bibinfo {year} {1995})}\BibitemShut {NoStop}%
\bibitem [{\citenamefont {Affleck}\ \emph {et~al.}(2004)\citenamefont {Affleck}, \citenamefont {Kennedy}, \citenamefont {Lieb},\ and\ \citenamefont {Tasaki}}]{affleck2004rigorous}%
  \BibitemOpen
  \bibfield  {author} {\bibinfo {author} {\bibfnamefont {I.}~\bibnamefont {Affleck}}, \bibinfo {author} {\bibfnamefont {T.}~\bibnamefont {Kennedy}}, \bibinfo {author} {\bibfnamefont {E.~H.}\ \bibnamefont {Lieb}}, \ and\ \bibinfo {author} {\bibfnamefont {H.}~\bibnamefont {Tasaki}},\ }\href@noop {} {\bibfield  {journal} {\bibinfo  {journal} {Condensed Matter Physics and Exactly Soluble Models: Selecta of Elliott H. Lieb}\ ,\ \bibinfo {pages} {249}} (\bibinfo {year} {2004})}\BibitemShut {NoStop}%
\bibitem [{\citenamefont {Perez-Garcia}\ \emph {et~al.}(2007)\citenamefont {Perez-Garcia}, \citenamefont {Verstraete}, \citenamefont {Wolf},\ and\ \citenamefont {Cirac}}]{mps07}%
  \BibitemOpen
  \bibfield  {author} {\bibinfo {author} {\bibfnamefont {D.}~\bibnamefont {Perez-Garcia}}, \bibinfo {author} {\bibfnamefont {F.}~\bibnamefont {Verstraete}}, \bibinfo {author} {\bibfnamefont {M.}~\bibnamefont {Wolf}}, \ and\ \bibinfo {author} {\bibfnamefont {J.}~\bibnamefont {Cirac}},\ }\href@noop {} {\bibfield  {journal} {\bibinfo  {journal} {Quantum Inf. Comput.}\ ,\ \bibinfo {pages} {407}} (\bibinfo {year} {2007})}\BibitemShut {NoStop}%
\bibitem [{\citenamefont {Hastings}(2007)}]{Hastings_2007}%
  \BibitemOpen
  \bibfield  {author} {\bibinfo {author} {\bibfnamefont {M.~B.}\ \bibnamefont {Hastings}},\ }\href {\doibase 10.1088/1742-5468/2007/08/P08024} {\bibfield  {journal} {\bibinfo  {journal} {J. Stat. Mech.: Theory Exp.}\ }\textbf {\bibinfo {volume} {2007}},\ \bibinfo {pages} {P08024} (\bibinfo {year} {2007})}\BibitemShut {NoStop}%
\bibitem [{\citenamefont {Fannes}\ \emph {et~al.}(1992)\citenamefont {Fannes}, \citenamefont {Nachtergaele},\ and\ \citenamefont {Werner}}]{fannes1992finitely}%
  \BibitemOpen
  \bibfield  {author} {\bibinfo {author} {\bibfnamefont {M.}~\bibnamefont {Fannes}}, \bibinfo {author} {\bibfnamefont {B.}~\bibnamefont {Nachtergaele}}, \ and\ \bibinfo {author} {\bibfnamefont {R.~F.}\ \bibnamefont {Werner}},\ }\href@noop {} {\bibfield  {journal} {\bibinfo  {journal} {Commun. Math. Phys.}\ }\textbf {\bibinfo {volume} {144}},\ \bibinfo {pages} {443} (\bibinfo {year} {1992})}\BibitemShut {NoStop}%
\bibitem [{\citenamefont {Pollmann}\ \emph {et~al.}(2010)\citenamefont {Pollmann}, \citenamefont {Turner}, \citenamefont {Berg},\ and\ \citenamefont {Oshikawa}}]{PhysRevB.81.064439}%
  \BibitemOpen
  \bibfield  {author} {\bibinfo {author} {\bibfnamefont {F.}~\bibnamefont {Pollmann}}, \bibinfo {author} {\bibfnamefont {A.~M.}\ \bibnamefont {Turner}}, \bibinfo {author} {\bibfnamefont {E.}~\bibnamefont {Berg}}, \ and\ \bibinfo {author} {\bibfnamefont {M.}~\bibnamefont {Oshikawa}},\ }\href {\doibase 10.1103/PhysRevB.81.064439} {\bibfield  {journal} {\bibinfo  {journal} {Phys. Rev. B}\ }\textbf {\bibinfo {volume} {81}},\ \bibinfo {pages} {064439} (\bibinfo {year} {2010})}\BibitemShut {NoStop}%
\bibitem [{\citenamefont {Gu}\ \emph {et~al.}(2009)\citenamefont {Gu}, \citenamefont {Levin}, \citenamefont {Swingle},\ and\ \citenamefont {Wen}}]{Pepslevinwen1}%
  \BibitemOpen
  \bibfield  {author} {\bibinfo {author} {\bibfnamefont {Z.-C.}\ \bibnamefont {Gu}}, \bibinfo {author} {\bibfnamefont {M.}~\bibnamefont {Levin}}, \bibinfo {author} {\bibfnamefont {B.}~\bibnamefont {Swingle}}, \ and\ \bibinfo {author} {\bibfnamefont {X.-G.}\ \bibnamefont {Wen}},\ }\href {\doibase 10.1103/PhysRevB.79.085118} {\bibfield  {journal} {\bibinfo  {journal} {Phys. Rev. B}\ }\textbf {\bibinfo {volume} {79}},\ \bibinfo {pages} {085118} (\bibinfo {year} {2009})}\BibitemShut {NoStop}%
\bibitem [{\citenamefont {Soejima}\ \emph {et~al.}(2020)\citenamefont {Soejima}, \citenamefont {Siva}, \citenamefont {Bultinck}, \citenamefont {Chatterjee}, \citenamefont {Pollmann},\ and\ \citenamefont {Zaletel}}]{Pepslevinwen2}%
  \BibitemOpen
  \bibfield  {author} {\bibinfo {author} {\bibfnamefont {T.}~\bibnamefont {Soejima}}, \bibinfo {author} {\bibfnamefont {K.}~\bibnamefont {Siva}}, \bibinfo {author} {\bibfnamefont {N.}~\bibnamefont {Bultinck}}, \bibinfo {author} {\bibfnamefont {S.}~\bibnamefont {Chatterjee}}, \bibinfo {author} {\bibfnamefont {F.}~\bibnamefont {Pollmann}}, \ and\ \bibinfo {author} {\bibfnamefont {M.~P.}\ \bibnamefont {Zaletel}},\ }\href {\doibase 10.1103/PhysRevB.101.085117} {\bibfield  {journal} {\bibinfo  {journal} {Phys. Rev. B}\ }\textbf {\bibinfo {volume} {101}},\ \bibinfo {pages} {085117} (\bibinfo {year} {2020})}\BibitemShut {NoStop}%
\bibitem [{\citenamefont {Schuch}\ \emph {et~al.}(2010)\citenamefont {Schuch}, \citenamefont {Cirac},\ and\ \citenamefont {Pérez-García}}]{SCHUCH20102153}%
  \BibitemOpen
  \bibfield  {author} {\bibinfo {author} {\bibfnamefont {N.}~\bibnamefont {Schuch}}, \bibinfo {author} {\bibfnamefont {I.}~\bibnamefont {Cirac}}, \ and\ \bibinfo {author} {\bibfnamefont {D.}~\bibnamefont {Pérez-García}},\ }\href {\doibase https://doi.org/10.1016/j.aop.2010.05.008} {\bibfield  {journal} {\bibinfo  {journal} {Ann. Phys. (N. Y.)}\ }\textbf {\bibinfo {volume} {325}},\ \bibinfo {pages} {2153} (\bibinfo {year} {2010})}\BibitemShut {NoStop}%
\bibitem [{\citenamefont {Şahinoğlu}\ \emph {et~al.}(2021)\citenamefont {Şahinoğlu}, \citenamefont {Williamson}, \citenamefont {Bultinck}, \citenamefont {Mariën}, \citenamefont {Haegeman}, \citenamefont {Schuch},\ and\ \citenamefont {Verstraete}}]{MPOinjectivePEPS}%
  \BibitemOpen
  \bibfield  {author} {\bibinfo {author} {\bibfnamefont {M.~B.}\ \bibnamefont {Şahinoğlu}}, \bibinfo {author} {\bibfnamefont {D.}~\bibnamefont {Williamson}}, \bibinfo {author} {\bibfnamefont {N.}~\bibnamefont {Bultinck}}, \bibinfo {author} {\bibfnamefont {M.}~\bibnamefont {Mariën}}, \bibinfo {author} {\bibfnamefont {J.}~\bibnamefont {Haegeman}}, \bibinfo {author} {\bibfnamefont {N.}~\bibnamefont {Schuch}}, \ and\ \bibinfo {author} {\bibfnamefont {F.}~\bibnamefont {Verstraete}},\ }\href {\doibase 10.1007/s00023-020-00992-4} {\bibfield  {journal} {\bibinfo  {journal} {Ann. Henri Poincare}\ }\textbf {\bibinfo {volume} {22}},\ \bibinfo {pages} {563–592} (\bibinfo {year} {2021})}\BibitemShut {NoStop}%
\bibitem [{\citenamefont {Hastings}\ and\ \citenamefont {Wen}(2005)}]{LU1}%
  \BibitemOpen
  \bibfield  {author} {\bibinfo {author} {\bibfnamefont {M.~B.}\ \bibnamefont {Hastings}}\ and\ \bibinfo {author} {\bibfnamefont {X.-G.}\ \bibnamefont {Wen}},\ }\href {\doibase 10.1103/PhysRevB.72.045141} {\bibfield  {journal} {\bibinfo  {journal} {Phys. Rev. B}\ }\textbf {\bibinfo {volume} {72}},\ \bibinfo {pages} {045141} (\bibinfo {year} {2005})}\BibitemShut {NoStop}%
\bibitem [{\citenamefont {Chen}\ \emph {et~al.}(2010)\citenamefont {Chen}, \citenamefont {Gu},\ and\ \citenamefont {Wen}}]{LU2}%
  \BibitemOpen
  \bibfield  {author} {\bibinfo {author} {\bibfnamefont {X.}~\bibnamefont {Chen}}, \bibinfo {author} {\bibfnamefont {Z.-C.}\ \bibnamefont {Gu}}, \ and\ \bibinfo {author} {\bibfnamefont {X.-G.}\ \bibnamefont {Wen}},\ }\href {\doibase 10.1103/PhysRevB.82.155138} {\bibfield  {journal} {\bibinfo  {journal} {Phys. Rev. B}\ }\textbf {\bibinfo {volume} {82}},\ \bibinfo {pages} {155138} (\bibinfo {year} {2010})}\BibitemShut {NoStop}%
\bibitem [{\citenamefont {Verstraete}\ \emph {et~al.}(2005)\citenamefont {Verstraete}, \citenamefont {Cirac}, \citenamefont {Latorre}, \citenamefont {Rico},\ and\ \citenamefont {Wolf}}]{PhysRevLett.94.140601}%
  \BibitemOpen
  \bibfield  {author} {\bibinfo {author} {\bibfnamefont {F.}~\bibnamefont {Verstraete}}, \bibinfo {author} {\bibfnamefont {J.~I.}\ \bibnamefont {Cirac}}, \bibinfo {author} {\bibfnamefont {J.~I.}\ \bibnamefont {Latorre}}, \bibinfo {author} {\bibfnamefont {E.}~\bibnamefont {Rico}}, \ and\ \bibinfo {author} {\bibfnamefont {M.~M.}\ \bibnamefont {Wolf}},\ }\href {\doibase 10.1103/PhysRevLett.94.140601} {\bibfield  {journal} {\bibinfo  {journal} {Phys. Rev. Lett.}\ }\textbf {\bibinfo {volume} {94}},\ \bibinfo {pages} {140601} (\bibinfo {year} {2005})}\BibitemShut {NoStop}%
\bibitem [{\citenamefont {Chen}\ \emph {et~al.}(2011)\citenamefont {Chen}, \citenamefont {Gu},\ and\ \citenamefont {Wen}}]{PhysRevB.83.035107}%
  \BibitemOpen
  \bibfield  {author} {\bibinfo {author} {\bibfnamefont {X.}~\bibnamefont {Chen}}, \bibinfo {author} {\bibfnamefont {Z.-C.}\ \bibnamefont {Gu}}, \ and\ \bibinfo {author} {\bibfnamefont {X.-G.}\ \bibnamefont {Wen}},\ }\href {\doibase 10.1103/PhysRevB.83.035107} {\bibfield  {journal} {\bibinfo  {journal} {Phys. Rev. B}\ }\textbf {\bibinfo {volume} {83}},\ \bibinfo {pages} {035107} (\bibinfo {year} {2011})}\BibitemShut {NoStop}%
\bibitem [{\citenamefont {Schuch}\ \emph {et~al.}(2011)\citenamefont {Schuch}, \citenamefont {P\'erez-Garc\'{\i}a},\ and\ \citenamefont {Cirac}}]{MPSphase2}%
  \BibitemOpen
  \bibfield  {author} {\bibinfo {author} {\bibfnamefont {N.}~\bibnamefont {Schuch}}, \bibinfo {author} {\bibfnamefont {D.}~\bibnamefont {P\'erez-Garc\'{\i}a}}, \ and\ \bibinfo {author} {\bibfnamefont {I.}~\bibnamefont {Cirac}},\ }\href {\doibase 10.1103/PhysRevB.84.165139} {\bibfield  {journal} {\bibinfo  {journal} {Phys. Rev. B}\ }\textbf {\bibinfo {volume} {84}},\ \bibinfo {pages} {165139} (\bibinfo {year} {2011})}\BibitemShut {NoStop}%
\bibitem [{\citenamefont {Hastings}(2011)}]{PhysRevLett.107.210501}%
  \BibitemOpen
  \bibfield  {author} {\bibinfo {author} {\bibfnamefont {M.~B.}\ \bibnamefont {Hastings}},\ }\href {\doibase 10.1103/PhysRevLett.107.210501} {\bibfield  {journal} {\bibinfo  {journal} {Phys. Rev. Lett.}\ }\textbf {\bibinfo {volume} {107}},\ \bibinfo {pages} {210501} (\bibinfo {year} {2011})}\BibitemShut {NoStop}%
\bibitem [{\citenamefont {König}\ and\ \citenamefont {Pastawski}(2014)}]{mixedphase0}%
  \BibitemOpen
  \bibfield  {author} {\bibinfo {author} {\bibfnamefont {R.}~\bibnamefont {König}}\ and\ \bibinfo {author} {\bibfnamefont {F.}~\bibnamefont {Pastawski}},\ }\href {\doibase 10.1103/physrevb.90.045101} {\bibfield  {journal} {\bibinfo  {journal} {Phys. Rev. B}\ }\textbf {\bibinfo {volume} {90}} (\bibinfo {year} {2014}),\ 10.1103/physrevb.90.045101}\BibitemShut {NoStop}%
\bibitem [{\citenamefont {Coser}\ and\ \citenamefont {Pérez-García}(2019)}]{mixedphase3}%
  \BibitemOpen
  \bibfield  {author} {\bibinfo {author} {\bibfnamefont {A.}~\bibnamefont {Coser}}\ and\ \bibinfo {author} {\bibfnamefont {D.}~\bibnamefont {Pérez-García}},\ }\href {\doibase 10.22331/q-2019-08-12-174} {\bibfield  {journal} {\bibinfo  {journal} {Quantum}\ }\textbf {\bibinfo {volume} {3}},\ \bibinfo {pages} {174} (\bibinfo {year} {2019})}\BibitemShut {NoStop}%
\bibitem [{\citenamefont {Sang}\ \emph {et~al.}()\citenamefont {Sang}, \citenamefont {Zou},\ and\ \citenamefont {Hsieh}}]{mixedphase1}%
  \BibitemOpen
  \bibfield  {author} {\bibinfo {author} {\bibfnamefont {S.}~\bibnamefont {Sang}}, \bibinfo {author} {\bibfnamefont {Y.}~\bibnamefont {Zou}}, \ and\ \bibinfo {author} {\bibfnamefont {T.~H.}\ \bibnamefont {Hsieh}},\ }\href {https://arxiv.org/abs/2310.08639} {\bibinfo  {journal} {arXiv:2310.08639 [quant-ph]}\ }\BibitemShut {NoStop}%
\bibitem [{\citenamefont {Xue}\ \emph {et~al.}()\citenamefont {Xue}, \citenamefont {Lee},\ and\ \citenamefont {Bao}}]{mixedphase2}%
  \BibitemOpen
\bibfield  {journal} {  }\bibfield  {author} {\bibinfo {author} {\bibfnamefont {H.}~\bibnamefont {Xue}}, \bibinfo {author} {\bibfnamefont {J.~Y.}\ \bibnamefont {Lee}}, \ and\ \bibinfo {author} {\bibfnamefont {Y.}~\bibnamefont {Bao}},\ }\href {https://arxiv.org/abs/2403.17069} {\bibinfo  {journal} {arXiv:2403.17069 [cond-mat.str-el]}\ }\BibitemShut {NoStop}%
\bibitem [{\citenamefont {Ruiz-de Alarcón}\ \emph {et~al.}(2024)\citenamefont {Ruiz-de Alarcón}, \citenamefont {Garre-Rubio}, \citenamefont {Molnár},\ and\ \citenamefont {Pérez-García}}]{MPOalg2}%
  \BibitemOpen
\bibfield  {journal} {  }\bibfield  {author} {\bibinfo {author} {\bibfnamefont {A.}~\bibnamefont {Ruiz-de Alarcón}}, \bibinfo {author} {\bibfnamefont {J.}~\bibnamefont {Garre-Rubio}}, \bibinfo {author} {\bibfnamefont {A.}~\bibnamefont {Molnár}}, \ and\ \bibinfo {author} {\bibfnamefont {D.}~\bibnamefont {Pérez-García}},\ }\href {\doibase 10.1007/s11005-024-01778-z} {\bibfield  {journal} {\bibinfo  {journal} {Lett. Math. Phys.}\ }\textbf {\bibinfo {volume} {114}} (\bibinfo {year} {2024}),\ 10.1007/s11005-024-01778-z}\BibitemShut {NoStop}%
\bibitem [{\citenamefont {Hastings}(2006)}]{MPOGibbs1}%
  \BibitemOpen
  \bibfield  {author} {\bibinfo {author} {\bibfnamefont {M.~B.}\ \bibnamefont {Hastings}},\ }\href {\doibase 10.1103/PhysRevB.73.085115} {\bibfield  {journal} {\bibinfo  {journal} {Phys. Rev. B}\ }\textbf {\bibinfo {volume} {73}},\ \bibinfo {pages} {085115} (\bibinfo {year} {2006})}\BibitemShut {NoStop}%
\bibitem [{\citenamefont {Gondolf}\ \emph {et~al.}()\citenamefont {Gondolf}, \citenamefont {Scalet}, \citenamefont {de~Alarcon}, \citenamefont {Alhambra},\ and\ \citenamefont {Capel}}]{MPOGibbs2}%
  \BibitemOpen
  \bibfield  {author} {\bibinfo {author} {\bibfnamefont {P.}~\bibnamefont {Gondolf}}, \bibinfo {author} {\bibfnamefont {S.~O.}\ \bibnamefont {Scalet}}, \bibinfo {author} {\bibfnamefont {A.~R.}\ \bibnamefont {de~Alarcon}}, \bibinfo {author} {\bibfnamefont {A.~M.}\ \bibnamefont {Alhambra}}, \ and\ \bibinfo {author} {\bibfnamefont {A.}~\bibnamefont {Capel}},\ }\href {https://arxiv.org/abs/2402.18500} {\bibinfo  {journal} {arXiv:2402.18500 [quant-ph]}\ }\BibitemShut {NoStop}%
\bibitem [{\citenamefont {Cui}\ \emph {et~al.}(2015)\citenamefont {Cui}, \citenamefont {Cirac},\ and\ \citenamefont {Bañuls}}]{MPOdissipative}%
  \BibitemOpen
\bibfield  {journal} {  }\bibfield  {author} {\bibinfo {author} {\bibfnamefont {J.}~\bibnamefont {Cui}}, \bibinfo {author} {\bibfnamefont {J.~I.}\ \bibnamefont {Cirac}}, \ and\ \bibinfo {author} {\bibfnamefont {M.~C.}\ \bibnamefont {Bañuls}},\ }\href {\doibase 10.1103/physrevlett.114.220601} {\bibfield  {journal} {\bibinfo  {journal} {Phys. Rev. Lett.}\ }\textbf {\bibinfo {volume} {114}} (\bibinfo {year} {2015}),\ 10.1103/physrevlett.114.220601}\BibitemShut {NoStop}%
\bibitem [{\citenamefont {Chen}\ \emph {et~al.}()\citenamefont {Chen}, \citenamefont {Kato},\ and\ \citenamefont {Brandão}}]{MPDOCMI}%
  \BibitemOpen
  \bibfield  {author} {\bibinfo {author} {\bibfnamefont {C.-F.}\ \bibnamefont {Chen}}, \bibinfo {author} {\bibfnamefont {K.}~\bibnamefont {Kato}}, \ and\ \bibinfo {author} {\bibfnamefont {F.~G. S.~L.}\ \bibnamefont {Brandão}},\ }\href {https://arxiv.org/abs/2010.14682} {\bibinfo  {journal} {arXiv:2010.14682 [quant-ph]}\ }\BibitemShut {NoStop}%
\bibitem [{\citenamefont {Cirac}\ \emph {et~al.}(2011)\citenamefont {Cirac}, \citenamefont {Poilblanc}, \citenamefont {Schuch},\ and\ \citenamefont {Verstraete}}]{PEPSboundary0}%
  \BibitemOpen
\bibfield  {journal} {  }\bibfield  {author} {\bibinfo {author} {\bibfnamefont {J.~I.}\ \bibnamefont {Cirac}}, \bibinfo {author} {\bibfnamefont {D.}~\bibnamefont {Poilblanc}}, \bibinfo {author} {\bibfnamefont {N.}~\bibnamefont {Schuch}}, \ and\ \bibinfo {author} {\bibfnamefont {F.}~\bibnamefont {Verstraete}},\ }\href {\doibase 10.1103/PhysRevB.83.245134} {\bibfield  {journal} {\bibinfo  {journal} {Phys. Rev. B}\ }\textbf {\bibinfo {volume} {83}},\ \bibinfo {pages} {245134} (\bibinfo {year} {2011})}\BibitemShut {NoStop}%
\bibitem [{\citenamefont {Schuch}\ \emph {et~al.}(2013)\citenamefont {Schuch}, \citenamefont {Poilblanc}, \citenamefont {Cirac},\ and\ \citenamefont {P\'erez-Garc\'{\i}a}}]{PEPSboundary}%
  \BibitemOpen
  \bibfield  {author} {\bibinfo {author} {\bibfnamefont {N.}~\bibnamefont {Schuch}}, \bibinfo {author} {\bibfnamefont {D.}~\bibnamefont {Poilblanc}}, \bibinfo {author} {\bibfnamefont {J.~I.}\ \bibnamefont {Cirac}}, \ and\ \bibinfo {author} {\bibfnamefont {D.}~\bibnamefont {P\'erez-Garc\'{\i}a}},\ }\href {\doibase 10.1103/PhysRevLett.111.090501} {\bibfield  {journal} {\bibinfo  {journal} {Phys. Rev. Lett.}\ }\textbf {\bibinfo {volume} {111}},\ \bibinfo {pages} {090501} (\bibinfo {year} {2013})}\BibitemShut {NoStop}%
\bibitem [{\citenamefont {Cirac}\ \emph {et~al.}(2017)\citenamefont {Cirac}, \citenamefont {Pérez-García}, \citenamefont {Schuch},\ and\ \citenamefont {Verstraete}}]{CIRAC2017100}%
  \BibitemOpen
  \bibfield  {author} {\bibinfo {author} {\bibfnamefont {J.}~\bibnamefont {Cirac}}, \bibinfo {author} {\bibfnamefont {D.}~\bibnamefont {Pérez-García}}, \bibinfo {author} {\bibfnamefont {N.}~\bibnamefont {Schuch}}, \ and\ \bibinfo {author} {\bibfnamefont {F.}~\bibnamefont {Verstraete}},\ }\href {\doibase https://doi.org/10.1016/j.aop.2016.12.030} {\bibfield  {journal} {\bibinfo  {journal} {Ann. Phys. (N. Y.)}\ }\textbf {\bibinfo {volume} {378}},\ \bibinfo {pages} {100} (\bibinfo {year} {2017})}\BibitemShut {NoStop}%
\bibitem [{\citenamefont {Koashi}\ and\ \citenamefont {Imoto}(2002)}]{KIdecomposition}%
  \BibitemOpen
  \bibfield  {author} {\bibinfo {author} {\bibfnamefont {M.}~\bibnamefont {Koashi}}\ and\ \bibinfo {author} {\bibfnamefont {N.}~\bibnamefont {Imoto}},\ }\href@noop {} {\bibfield  {journal} {\bibinfo  {journal} {Phys. Rev. A}\ }\textbf {\bibinfo {volume} {66}},\ \bibinfo {pages} {022318} (\bibinfo {year} {2002})}\BibitemShut {NoStop}%
\bibitem [{\citenamefont {Hayden}\ \emph {et~al.}(2004)\citenamefont {Hayden}, \citenamefont {Jozsa}, \citenamefont {Petz},\ and\ \citenamefont {Winter}}]{Hayden2004-os}%
  \BibitemOpen
  \bibfield  {author} {\bibinfo {author} {\bibfnamefont {P.}~\bibnamefont {Hayden}}, \bibinfo {author} {\bibfnamefont {R.}~\bibnamefont {Jozsa}}, \bibinfo {author} {\bibfnamefont {D.}~\bibnamefont {Petz}}, \ and\ \bibinfo {author} {\bibfnamefont {A.}~\bibnamefont {Winter}},\ }\href@noop {} {\bibfield  {journal} {\bibinfo  {journal} {Commun. Math. Phys.}\ }\textbf {\bibinfo {volume} {246}},\ \bibinfo {pages} {359} (\bibinfo {year} {2004})}\BibitemShut {NoStop}%
\bibitem [{\citenamefont {Kato}()}]{kato2023}%
  \BibitemOpen
  \bibfield  {author} {\bibinfo {author} {\bibfnamefont {K.}~\bibnamefont {Kato}},\ }\href {https://arxiv.org/abs/2309.07434} {\bibinfo  {journal} {arXiv:2309.07434 [quant-ph]}\ }\BibitemShut {NoStop}%
\bibitem [{\citenamefont {Fisher}(1922)}]{Fisher1}%
  \BibitemOpen
\bibfield  {journal} {  }\bibfield  {author} {\bibinfo {author} {\bibfnamefont {R.~A.}\ \bibnamefont {Fisher}},\ }\href@noop {} {\bibfield  {journal} {\bibinfo  {journal} {Philos. Trans. R. Soc. Lond. A}\ }\textbf {\bibinfo {volume} {222}},\ \bibinfo {pages} {309} (\bibinfo {year} {1922})}\BibitemShut {NoStop}%
\bibitem [{\citenamefont {Cover}\ and\ \citenamefont {Thomas}(2006)}]{Cover2006}%
  \BibitemOpen
  \bibfield  {author} {\bibinfo {author} {\bibfnamefont {T.~M.}\ \bibnamefont {Cover}}\ and\ \bibinfo {author} {\bibfnamefont {J.~A.}\ \bibnamefont {Thomas}},\ }\href@noop {} {\emph {\bibinfo {title} {Elements of Information Theory 2nd Edition (Wiley Series in Telecommunications and Signal Processing)}}}\ (\bibinfo  {publisher} {Wiley-Interscience},\ \bibinfo {year} {2006})\BibitemShut {NoStop}%
\bibitem [{\citenamefont {Petz}(1986)}]{petz1986sufficient}%
  \BibitemOpen
  \bibfield  {author} {\bibinfo {author} {\bibfnamefont {D.}~\bibnamefont {Petz}},\ }\href@noop {} {\bibfield  {journal} {\bibinfo  {journal} {Commun. Math. Phys.}\ }\textbf {\bibinfo {volume} {105}},\ \bibinfo {pages} {123} (\bibinfo {year} {1986})}\BibitemShut {NoStop}%
\bibitem [{\citenamefont {Petz}(1988)}]{10.1093/qmath/39.1.97}%
  \BibitemOpen
  \bibfield  {author} {\bibinfo {author} {\bibfnamefont {D.}~\bibnamefont {Petz}},\ }\href {\doibase 10.1093/qmath/39.1.97} {\bibfield  {journal} {\bibinfo  {journal} {Q. J. Math.}\ }\textbf {\bibinfo {volume} {39}},\ \bibinfo {pages} {97} (\bibinfo {year} {1988})}\BibitemShut {NoStop}%
\bibitem [{\citenamefont {Molnar}\ \emph {et~al.}()\citenamefont {Molnar}, \citenamefont {de~Alarcón}, \citenamefont {Garre-Rubio}, \citenamefont {Schuch}, \citenamefont {Cirac},\ and\ \citenamefont {Pérez-García}}]{MPOalg1}%
  \BibitemOpen
  \bibfield  {author} {\bibinfo {author} {\bibfnamefont {A.}~\bibnamefont {Molnar}}, \bibinfo {author} {\bibfnamefont {A.~R.}\ \bibnamefont {de~Alarcón}}, \bibinfo {author} {\bibfnamefont {J.}~\bibnamefont {Garre-Rubio}}, \bibinfo {author} {\bibfnamefont {N.}~\bibnamefont {Schuch}}, \bibinfo {author} {\bibfnamefont {J.~I.}\ \bibnamefont {Cirac}}, \ and\ \bibinfo {author} {\bibfnamefont {D.}~\bibnamefont {Pérez-García}},\ }\href {https://arxiv.org/abs/2204.05940} {\bibinfo  {journal} {arXiv:2204.05940 [quant-ph]}\ }\BibitemShut {NoStop}%
\bibitem [{\citenamefont {Schumacher}(1995)}]{PhysRevA.51.2738}%
  \BibitemOpen
\bibfield  {journal} {  }\bibfield  {author} {\bibinfo {author} {\bibfnamefont {B.}~\bibnamefont {Schumacher}},\ }\href {\doibase 10.1103/PhysRevA.51.2738} {\bibfield  {journal} {\bibinfo  {journal} {Phys. Rev. A}\ }\textbf {\bibinfo {volume} {51}},\ \bibinfo {pages} {2738} (\bibinfo {year} {1995})}\BibitemShut {NoStop}%
\bibitem [{Note1()}]{Note1}%
  \BibitemOpen
  \bibinfo {note} {In the statement in Theorem 4.15. in Ref.~\cite {CIRAC2017100}, the state is written as $\DOTSB \tsum \slimits@ _{i=1}^d\lambda _iP_i^{(L)}e^{-H_L}$ where $H_N$ is a nearest-neighbor commuting Hamiltonian. However, the explicit construction in the proof tells us that $e^{-H_L}$ is given as $\mu ^{\otimes L}$ for $\mu >0$ (in this paper we use $\Omega $ instead of $\mu $), which is not interacting~\cite {private}.}\BibitemShut {Stop}%
\bibitem [{\citenamefont {Mosonyi}\ and\ \citenamefont {Petz}(2004)}]{Mosonyi2004-wo}%
  \BibitemOpen
  \bibfield  {author} {\bibinfo {author} {\bibfnamefont {M.}~\bibnamefont {Mosonyi}}\ and\ \bibinfo {author} {\bibfnamefont {D.}~\bibnamefont {Petz}},\ }\href@noop {} {\bibfield  {journal} {\bibinfo  {journal} {Lett. Math. Phys.}\ }\textbf {\bibinfo {volume} {68}},\ \bibinfo {pages} {19} (\bibinfo {year} {2004})}\BibitemShut {NoStop}%
\bibitem [{\citenamefont {Jenčová}\ and\ \citenamefont {Petz}(2006)}]{Jencova2006-on}%
  \BibitemOpen
  \bibfield  {author} {\bibinfo {author} {\bibfnamefont {A.}~\bibnamefont {Jenčová}}\ and\ \bibinfo {author} {\bibfnamefont {D.}~\bibnamefont {Petz}},\ }\href@noop {} {\bibfield  {journal} {\bibinfo  {journal} {Commun. Math. Phys.}\ }\textbf {\bibinfo {volume} {263}},\ \bibinfo {pages} {259} (\bibinfo {year} {2006})}\BibitemShut {NoStop}%
\bibitem [{Note2()}]{Note2}%
  \BibitemOpen
  \bibinfo {note} {When we speak about the minimal sufficient subalgebra for a partition $AB~\Lambda _L$, we implicitly assume that ${\protect \mathcal H}_B$ is restricted to ${\protect \rm supp}(\rho _B)$ so that $\rho _B>0$.}\BibitemShut {Stop}%
\bibitem [{\citenamefont {Leifer}\ and\ \citenamefont {Poulin}(2008)}]{leifer2008quantum}%
  \BibitemOpen
  \bibfield  {author} {\bibinfo {author} {\bibfnamefont {M.~S.}\ \bibnamefont {Leifer}}\ and\ \bibinfo {author} {\bibfnamefont {D.}~\bibnamefont {Poulin}},\ }\href@noop {} {\bibfield  {journal} {\bibinfo  {journal} {Ann. Phys. (N. Y.)}\ }\textbf {\bibinfo {volume} {323}},\ \bibinfo {pages} {1899} (\bibinfo {year} {2008})}\BibitemShut {NoStop}%
\bibitem [{\citenamefont {Brown}\ and\ \citenamefont {Poulin}()}]{brown2012}%
  \BibitemOpen
  \bibfield  {author} {\bibinfo {author} {\bibfnamefont {W.}~\bibnamefont {Brown}}\ and\ \bibinfo {author} {\bibfnamefont {D.}~\bibnamefont {Poulin}},\ }\href {https://arxiv.org/abs/1206.0755} {\bibinfo  {journal} {arXiv:1206.0755 [quant-ph]}\ }\BibitemShut {NoStop}%
\bibitem [{\citenamefont {Etingof}\ \emph {et~al.}()\citenamefont {Etingof}, \citenamefont {Nikshych},\ and\ \citenamefont {Ostrik}}]{etingof2017fusioncategories}%
  \BibitemOpen
\bibfield  {journal} {  }\bibfield  {author} {\bibinfo {author} {\bibfnamefont {P.}~\bibnamefont {Etingof}}, \bibinfo {author} {\bibfnamefont {D.}~\bibnamefont {Nikshych}}, \ and\ \bibinfo {author} {\bibfnamefont {V.}~\bibnamefont {Ostrik}},\ }\href {https://arxiv.org/abs/math/0203060} {\bibinfo  {journal} {arXiv:math/0203060 [math.QA]}\ }\BibitemShut {NoStop}%
\bibitem [{\citenamefont {Perez-Garcia}(2024)}]{private}%
  \BibitemOpen
\bibfield  {journal} {  }\bibfield  {author} {\bibinfo {author} {\bibfnamefont {D.}~\bibnamefont {Perez-Garcia}},\ }\href@noop {} {\bibfield  {journal} {\bibinfo  {journal} {{private communication}}\ } (\bibinfo {year} {2024})}\BibitemShut {NoStop}%
\bibitem [{\citenamefont {Kastoryano}\ \emph {et~al.}(2019)\citenamefont {Kastoryano}, \citenamefont {Lucia},\ and\ \citenamefont {Perez-Garcia}}]{Kastoryano_2019}%
  \BibitemOpen
  \bibfield  {author} {\bibinfo {author} {\bibfnamefont {M.~J.}\ \bibnamefont {Kastoryano}}, \bibinfo {author} {\bibfnamefont {A.}~\bibnamefont {Lucia}}, \ and\ \bibinfo {author} {\bibfnamefont {D.}~\bibnamefont {Perez-Garcia}},\ }\href {\doibase 10.1007/s00220-019-03404-9} {\bibfield  {journal} {\bibinfo  {journal} {Commun. Math. Phys.}\ }\textbf {\bibinfo {volume} {366}},\ \bibinfo {pages} {895–926} (\bibinfo {year} {2019})}\BibitemShut {NoStop}%
\bibitem [{\citenamefont {Levin}\ and\ \citenamefont {Nave}(2007)}]{TRG1}%
  \BibitemOpen
  \bibfield  {author} {\bibinfo {author} {\bibfnamefont {M.}~\bibnamefont {Levin}}\ and\ \bibinfo {author} {\bibfnamefont {C.~P.}\ \bibnamefont {Nave}},\ }\href {\doibase 10.1103/physrevlett.99.120601} {\bibfield  {journal} {\bibinfo  {journal} {Phys. Rev. Lett.}\ }\textbf {\bibinfo {volume} {99}} (\bibinfo {year} {2007}),\ 10.1103/physrevlett.99.120601}\BibitemShut {NoStop}%
\bibitem [{\citenamefont {Gu}\ \emph {et~al.}(2008)\citenamefont {Gu}, \citenamefont {Levin},\ and\ \citenamefont {Wen}}]{TRG2}%
  \BibitemOpen
  \bibfield  {author} {\bibinfo {author} {\bibfnamefont {Z.-C.}\ \bibnamefont {Gu}}, \bibinfo {author} {\bibfnamefont {M.}~\bibnamefont {Levin}}, \ and\ \bibinfo {author} {\bibfnamefont {X.-G.}\ \bibnamefont {Wen}},\ }\href {\doibase 10.1103/physrevb.78.205116} {\bibfield  {journal} {\bibinfo  {journal} {Phys. Rev. B}\ }\textbf {\bibinfo {volume} {78}} (\bibinfo {year} {2008}),\ 10.1103/physrevb.78.205116}\BibitemShut {NoStop}%
\end{thebibliography}%

\appendix
\section{Proofs of Lemmas}
 \setcounter{lem}{0} 
\subsection{Proof of Lemma~\ref{lem:Mldefinition}}
    \begin{lem}
    Let $AB$ be a bipartition of $\Lambda_L$ such that $B$ is connected and $|B|=l$. Then, $\cM_\cS^{\min}$ of an MPDO $\rho_{AB}^{(L)}$ on $B$ is the same for all $L\geq l+1$.
    \end{lem}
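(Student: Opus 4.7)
The plan is to write the conditional marginals $\mu_B=\tr_A(O_A\rho_{AB}^{(L)})$ explicitly in terms of the MPDO tensor, isolate the $L$-dependence, and then use the v-CF of $W_k^{\alpha\beta}[l]$---defined intrinsically on $l$ sites and hence independent of $L$---to pin down an $L$-independent ambient algebra on $\cH_B$ containing $\cM_\cS^{\min}$. Saturation of this algebra already at $L=l+1$ then yields $L$-independence of $\cM_\cS^{\min}$ for every $L\ge l+1$.

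First, I would apply the h-CF $M^{ij}=\bigoplus_k M_k^{ij}\otimes N_k$ to the cyclic trace defining $\rho^{(L)}$ and partial-trace $A$ against $O_A$, arriving at
\begin{equation*}
\tr_A(O_A\rho_{AB}^{(L)})=\sum_{k=1}^g\tr(N_k^L)\sum_{\alpha,\beta}(X_{A,k}(O_A))_{\beta\alpha}\,W_k^{\alpha\beta}[l],
\end{equation*}
where $X_{A,k}(O_A)\in\cB(\C^{D_k^L})$ is the boundary matrix built from $O_A$ and the tensor on the $L-l$ sites complementary to $B$. The physical-space operators $W_k^{\alpha\beta}[l]$ and the label set $\{(k,\alpha,\beta)\}$ depend only on $l$ and on $M$; the entire $L$-dependence is carried by the scalar weights $\tr(N_k^L)$ and by the family of achievable boundary matrices. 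Setting $O_A=I$ yields $\rho_B^{(L)}$ in the same form.

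Next, I would invoke the v-CF $U_l W_k^{\alpha\beta}[l]U_l^\dagger=\bigoplus_a W_{k,a}^{\alpha\beta}[l]\otimes\Omega_{k,a}^{(l)}$ for every $k$, which collectively induce an $L$-independent unitary $U_l$ and a direct-sum decomposition $\cH_B\cong\bigoplus_c\cH_{B_c^L}\otimes\cH_{B_c^R}$ on which both $\mu_B$ and $\rho_B^{(L)}$ factorize block-by-block as $\bigoplus_c Y_c(O_A,L)\otimes\Omega_c^{(l)}$ with the right-hand factor $\Omega_c^{(l)}$ fixed independently of $O_A$ and $L$. Consequently $\mu_B^{it}\rho_B^{-it}$ acts trivially on each $\cH_{B_c^R}$, giving the inclusion $\cM_\cS^{\min}\subseteq\bigoplus_c\cB(\cH_{B_c^L})\otimes I_{\cH_{B_c^R}}=:\cM$, and $\cM$ depends only on $l$.

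Finally, for the reverse inclusion I would show that $\cM$ is already generated at $L=l+1$. Using the normality of the BNT $\{W_{k,a}^{\alpha\beta}[l]\}_a$ in the v-CF together with the biCF of $M$, I would argue that the family $\{X_{A,k}(O_A):0\le O_A\le I\}$ at $|A|=1$ is rich enough that, combined with the physical-space operators $W_k^{\alpha\beta}[l]$, the resulting operators $\mu_B^{it}\rho_B^{-it}$ span every block $\cB(\cH_{B_c^L})\otimes I$. Enlarging $|A|$ can only enrich the set of boundary matrices, so the generated algebra stabilizes at $\cM$ for all $L\ge l+1$. The main technical obstacle is precisely this last step: combining the h-CF biCF (injectivity after blocking) with the v-CF BNT normality to rule out any accidental refinement of the direct-sum decomposition is where the representation-theoretic content of the lemma is concentrated.
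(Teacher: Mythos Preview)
Your approach is correct but takes a different route from the paper. The paper never invokes the v-CF in proving this lemma: it simply shows, via the biCF condition, that for every $L\ge l+1$ the linear span of the steering set $\cS$ coincides with $\spann\{W_k^{\alpha\beta}[l]\}$ (any boundary matrix $X=\bigoplus_k X_k$ arising from $O_A$ on $|A|=L-l$ sites can be reproduced by a single-site operator $C_{l+1}(X)$, and conversely), and then appeals to the general fact that equal spans of steering ensembles force equal Koashi--Imoto decompositions and hence equal minimal sufficient subalgebras. Your route instead identifies $\cM_\cS^{\min}$ explicitly as the v-CF algebra $\cM=\bigoplus_c\cB(\cH_{B_c^L})\otimes I$, which is manifestly $L$-independent; this effectively merges the paper's Lemma~\ref{lem:Mldefinition} and Lemma~\ref{vCF=KI} into one argument. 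The paper's approach is more economical for this lemma alone and entirely sidesteps the ``main technical obstacle'' you flag (ruling out accidental refinements via BNT normality), deferring that representation-theoretic work to the separate Lemma~\ref{vCF=KI}. Your approach buys more: it delivers the explicit identification of $\cM_\cS^{\min}$ in one shot, at the cost of doing the harder saturation argument up front.
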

\begin{proof}
The MPDO is given as
     \begin{equation*}
            \rho^{(L)}=\sum_k\tr\bk{N_k^L}\sum_{\bf i,j} \tr\bk{M^{i_1j_1}_k...M_k^{i_Lj_L}}\ket{i_1...i_L}\bra{j_1...j_L}\,.
        \end{equation*}
    Denote the set~\eqref{eq:cS} for $\rho^{(L)}$, defined on the subsystem $B$, by $\cS$ and that of $\rho^{(l+1)}$ by $\cS'$.     
    We first show that the linear span of $\cS$  is independent of $L\geq l+1$ and matches the linear span of $\cS'$. By definition, each element $\mu_B\in\cS$ takes the form  
     \begin{align*}
            \mu_B&\propto\sum_k\tr\bk{N_k^L}\sum_{\bf i,j} \tr\bk{M^{i_1j_1}_k...M_k^{i_Lj_L}}\ket{i_1...i_l}\bra{j_1...j_l}\,\\
            &\quad\times\bra{j_{l+1}...j_L}O_A\ket{i_{l+1}...i_L}.
        \end{align*}
        Let $X=\bigoplus_kX_k$ be a boundary matrix defined by
            \begin{equation*}
            X_k:=\tr\bk{N_k^L}\sum_{\bf i,j}\bra{j_{l+1}...j_L}O_A\ket{i_{l+1}...i_L}M_k^{i_{l+1}j_{l+1}}...M_k^{i_Lj_L}.
        \end{equation*}
        Then 
         \begin{align*}
            \mu_B&\propto\sum_k\sum_{\bf i,j} \tr\bk{X_kM^{i_1j_1}_k...M_k^{i_lj_l}}\ket{i_1...i_l}\bra{j_1...j_l}\,.
        \end{align*}
        The biCF condition for $M_k^{ij}$ implies that there exists a matrix on $l+1$th site $C_{l+1}(X):=(c_{ji}(X))_{ij}$ such that
        \begin{align*}
           \tr_{l+1}\bk{C_{l+1}(X)\rho^{(l+1)}}\propto \mu_B\,,
        \end{align*}
        and therefore $\cS\subset{\rm span}\cS'$ ($C_{l+1}(X)$ is not necessarily be positive semi-definite). The same arguments hold for $\cS'\subset{\rm span}\cS$, concluding ${\rm span}\cS={\rm span}\cS'$.

        Any state in $\cS$ has the Koashi-Imoto decomposition with respect to the minimal sufficient subalgebra $\cM_{\cS}^{\min}$. However, since it is also an element of ${\rm span}\cS'$, it can also be decomposed according to $\cM_{\cS'}^{\min}$. Due to the minimality of the Koashi-Imoto decomposition, these two decompositions must coincide (up to a unitary transformation). Since the minimal sufficient subalgebras are generated by operators such as $\mu_B^{it}\rho_B^{-it}$, these subalgebras must also be equivalent.   
\end{proof}

\subsection{Proof of Lemma~\ref{vCF=KI}}
    \begin{lem}
    Let $AB$ be a bipartition of $\Lambda_L$ such that $B$ is connected and $|B|=l$. Consider the v-CF of the matrix $W_k^{\alpha_k\beta_k}[l]$
            \begin{align}\label{apeq:vcf1}
           W_k^{\alpha_k\beta_k}[l]\cong\bigoplus_{a'=1}^{r'} W_{k,a'}^{\alpha_k\beta_k}[l] \ot \Omega_{a'}^{(l)}\,
        \end{align}
        with the decomposition of the Hilbert space 
        \begin{equation*}
            \cH_B\cong\bigoplus_{a'=1}^{r'}\cH_{B_{a'}^L}\otimes \cH_{B_{a'}^R}\,.
        \end{equation*}
        Then we have
        \begin{equation}\label{apeq:kidec}
                        \cM^{\min}_l\cong \bigoplus_{a'=1}^{r'}\cB\bk{\cH_{B_{a'}^L}}\ot I_{\cH_{B_{a'}^R}}\,.
        \end{equation}
        Conversely, in the basis such that Eq.~\eqref{apeq:kidec} holds, $W_k^{\alpha_k\beta_k}[l]$ is in the v-CF~\eqref{apeq:kidec}.
    \end{lem}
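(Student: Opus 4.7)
The plan is to establish both directions by identifying the linear span of the set $\cS$ defining $\cM_l^{\min}$ with the linear span of the vertical matrices $W_k^{\alpha_k\beta_k}[l]$, and then invoking the uniqueness statement in the Koashi--Imoto theorem.

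First, I would reuse the calculation from the proof of Lemma~\ref{lem:Mldefinition}: for any $O_A$, the induced operator $\mu_B \propto \tr_A(O_A\rho_{AB}^{(L)})$ can be written, via the MPDO structure and the biCF condition on $M_k^{ij}$, as a linear combination $\sum_{k,\alpha_k,\beta_k}c_{k,\alpha_k,\beta_k}(O_A)W_k^{\alpha_k\beta_k}[l]$, where the coefficients are controlled by boundary matrices obtained by contracting $O_A$ against tensors on $A$. Conversely, by varying $O_A$ suitably one can realize any assignment of these coefficients, so $\spann\,\cS$ equals $\spann\{W_k^{\alpha_k\beta_k}[l]\}_{k,\alpha_k,\beta_k}$.

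For the forward direction (v-CF $\Rightarrow$ minimal subalgebra), applying the unitary $U_l$ that brings the $W_k^{\alpha_k\beta_k}[l]$ into v-CF turns every element of $\spann\,\cS$ into the form $\bigoplus_{a'}X_{a'}\otimes\Omega_{a'}^{(l)}$. In particular every $\mu_B\in\cS$ has right factors proportional to the fixed $\Omega_{a'}^{(l)}$, which is exactly the shape of the Koashi--Imoto decomposition~\eqref{eq:KI-dec}, so $\cM_l^{\min}\subseteq\bigoplus_{a'}\cB(\cH_{B_{a'}^L})\ot I_{\cH_{B_{a'}^R}}$. The reverse inclusion reduces to showing that the left factors $W_{k,a'}^{\alpha_k\beta_k}[l]$ already generate the full matrix algebra $\cB(\cH_{B_{a'}^L})$ as $(k,\alpha_k,\beta_k)$ and the boundary data from $O_A$ vary; this is where the BNT property is used, together with the standard observation that a BNT spans the ambient matrix algebra after blocking a few sites, combined with the biCF freedom on the $A$-side to choose the boundary matrices.

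For the converse direction, assume the minimal sufficient subalgebra has the form~\eqref{eq:kidec}. The Koashi--Imoto theorem then forces every state in $\cS$---hence every operator in $\spann\,\cS$, in particular every $W_k^{\alpha_k\beta_k}[l]$---to decompose, in the basis of~\eqref{eq:kidec}, as $\bigoplus_{a'}X_{a'}\otimes\omega_{a',B^R}$ with the right factors $\omega_{a',B^R}$ the same across all such operators. Identifying $\Omega_{a'}^{(l)}$ with these fixed right factors (up to normalization) yields the v-CF form, and the last thing to verify is that the resulting left factors form a BNT: any further reducibility of the $W_{k,a'}^{\alpha_k\beta_k}[l]$ would refine the direct-sum/tensor decomposition and strictly shrink $\cM_l^{\min}$, contradicting the minimality of the Koashi--Imoto decomposition. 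The main obstacle I anticipate is rigorously bridging the BNT property---an ergodicity-type condition on the transfer operators $\cE_k$---and the minimality of the Koashi--Imoto decomposition: showing that a BNT together with biCF supplies enough freedom on the boundary to generate all of $\bigoplus_{a'}\cB(\cH_{B_{a'}^L})$ is delicate and may require auxiliary blocking arguments, and additional bookkeeping is needed to combine the contributions of several $k$-sectors of the h-CF into the single $\cM_l^{\min}$ attributed to the full MPDO.
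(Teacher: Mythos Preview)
Your outline matches the paper's proof closely: both begin by establishing $\spann\cS=\spann\{W_k^{\alpha_k\beta_k}[l]\}$ via the biCF condition, and both obtain the inclusion $\cM_l^{\min}\subseteq\bigoplus_{a'}\cB(\cH_{B_{a'}^L})\ot I_{\cH_{B_{a'}^R}}$ from the v-CF structure (the paper makes this explicit by computing $\mu_B^{it}\rho_B^{-it}$ blockwise). The one substantive difference is in the equality step. You propose to show directly that the left factors $W_{k,a'}^{\alpha_k\beta_k}[l]$ generate all of $\cB(\cH_{B_{a'}^L})$, and you correctly flag this as the delicate point. The paper instead argues by contradiction: if the inclusion were strict, $\cM_l^{\min}$ would carry a strictly finer Koashi--Imoto block structure, and then every $\mu\in\cS$---hence every $W_{k,a'}^{\alpha_k\beta_k}[l]$ via the span equality---would split along that finer structure, producing a nontrivial invariant subspace for the CP map attached to the $a'$-block and contradicting condition~(iii) in the BNT definition. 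This is exactly the ``further reducibility contradicts minimality'' mechanism you sketch in your converse paragraph; invoking it already in the forward direction dissolves the obstacle you anticipated, since no generation or auxiliary blocking argument is needed.
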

Before proceeding with the proof, we first establish
\begin{equation}
    {\rm span}\{\mu_B\}={\rm span}\left\{W_k^{\alpha_k\beta_k}[l]\right\}\,.\label{lem:mu=W}
\end{equation}
It can be written as 
\begin{equation*}
\rho_{AB}=\sum_k\tr\bk{N_k^L}    W_k^{\beta_k\alpha_k}[L-l]\ot W_k^{\alpha_k\beta_k}[l]\,.
\end{equation*}
${\rm span}\{\mu_B\}\subset{\rm span}\left\{W_k^{\alpha_k\beta_k}[l]\right\}$ holds because \begin{equation}
        \mu_{B}\propto\sum_k\tr\bk{N_k^{L}}\sum_{a,b}\tr\bk{O_AW_k^{\beta_k\alpha_k}[L-l]}W_k^{\alpha_k\beta_k}[l],
\end{equation}
where $O_A$ is the operator in Eq.~\eqref{eq:cS}. 
We now show the converse for $L=l+1$ (the generalization is straightforward). The MPDO can be rewritten as
\begin{align*}
\rho_{AB}&=\sum_k\tr\bk{N_k^L}    \sum_{i_{l+1},j_{l+1}}\bra{\beta_k}M^{i_{l+1}j_{l+1}}_k\ket{\alpha_k}\\
&\times\ket{i_{l+1}}\bra{j_{l+1}}\ot W_k^{\alpha_k\beta_k}[l]\,.
\end{align*}
Then, by the biCF condition, there exists a matrix $C_{l+1}(|\beta_k\rangle\langle\alpha_k|)$ such that
\begin{equation*}
    \tr_{l+1}\bk{C_{l+1}(|\beta_k\rangle\langle\alpha_k|)\rho_{AB}}\propto W_k^{\alpha_k\beta_k}[l],
\end{equation*}
which implies ${\rm span}\{\mu_B\}\supset{\rm span}\left\{W_k^{\alpha_k\beta_k}[l]\right\}$.  
\begin{proof}
Let $\cH_B\cong\bigoplus_x\cH_{B_x^L}\ot\cH_{B_x^R}$ be the decomposition induced by the v-CF of $W_k^{\alpha_k\beta_k}[l]$. Eq.~\eqref{lem:mu=W} show that, based on the v-CF of $W_k^{\alpha_k\beta_k}[l]$, any $\mu_B\in \cS$ can be decomposed as
\begin{equation}
    \mu_B=\bigoplus_x{\hat \mu}_{B_x^L}\ot\Omega_{B_x^R}\,,
\end{equation} 
where ${\hat \mu}_B\geq0$ acts on $\cH_{B_x^L}$ and $\Omega_{B_x^R}>0$ acts on $\cH_{B_x^R}$. In particular,
\begin{equation}
    \rho_B=\bigoplus_x{\hat \rho}_{B_x^L}\ot\Omega_{B_x^R}\,,
\end{equation} 
with ${\hat \rho}_{B_x^L}>0$. Therefore,
\begin{equation}
    \mu_B^{it}\rho_B^{-it}=\bigoplus_x{\hat \mu}_{B_x^L}^{it}{\hat \rho}_{B_x^L}^{-it}\ot I_{B_x^R}\,,
\end{equation} 
and
\begin{equation}
\cM_B^{\min}=\alg\left\{\mu^{it}_B\rho^{-it}_B, \mu\in\cS, t\in\mathbb{R} \right\} \subset \bigoplus_x\cB\left(\cH_{B_x^L}\right)\ot I_{\cH_{B_x^R}}\,.
\end{equation}
Now, suppose that the inclusion is strict. Then, there exists at least one $x$ such that $\cH_{B_x^L}=\bigoplus_{v_x}\cH_{B_{v_x}^L}$ and
\begin{equation}
  {\hat \mu}_{B_x^L}=\bigoplus_{v_x}  {\hat \mu}_{B_{v_x}^L}
\end{equation}
for all $\mu_B\in\cS$. However, Eq.~\eqref{lem:mu=W} implies that 
\begin{equation}
    W_{k,x}^{\alpha_k\beta_k}[l]=\bigoplus_{v_x}W_{k,v_x}^{\alpha_k\beta_k}[l]\,,
\end{equation}
which contradicts the fact that the CP-map $\cF_x$ from the definition of BNT does not have nontrivial invariant subspaces. Therefore,
\begin{equation}
\cM_B^{\min}= \bigoplus_x\cB\left(\cH_{B_x^L}\right)\ot I_{\cH_{B_x^R}}\,.
\end{equation}
\end{proof}

\subsection{Proof of Lemma~\ref{lem:minalginclusion}}
    \begin{lem}
    Let $l_1+l_2=l$ and $K_{l_1,l_2}$ be the kernel of $\rho_{l_1+l_2}$ in $\supp(\rho_{l_1}\ot\rho_{l_2})$. Then the following holds:
    \begin{equation}
        \cM^{\min}_{l_1+l_2}\oplus0_{K_{l_1,l_2}}\subset\cM_{l_1}^{\min}\ot\cM_{l_2}^{\min}\,.
    \end{equation}
    \end{lem}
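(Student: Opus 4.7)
The plan is to use the Connes-cocycle characterization of the minimal sufficient subalgebra (Eq.~\eqref{eq:defminsufalg}) and show that every generator $\mu^{it}\rho_{l_1+l_2}^{-it}$, with $\mu$ in the set $\cS$ associated to $B=\Lambda_{l_1+l_2}$, lies in $\cM_{l_1}^{\min}\ot\cM_{l_2}^{\min}$ after extension by zero on $K_{l_1,l_2}$. Since $\cM^{\min}_{l_1+l_2}$ is by definition generated by such cocycles, this will suffice.

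First I would apply the Koashi--Imoto theorem to $\rho^{(L)}$ across the cut $B_1\,|\,AB_2$: by Lemma~\ref{lem:Mldefinition} the relevant minimal sufficient subalgebra on $B_1$ is $\cM_{l_1}^{\min}$, so the theorem gives the decomposition $\cH_{l_1}\cong\bigoplus_{a_1}\cH_{B_{1,a_1}^L}\ot\cH_{B_{1,a_1}^R}$ with
\begin{equation*}
\rho^{(L)}=\bigoplus_{a_1} p_{a_1}\,\rho^{(a_1)}_{AB_2 B_{1,a_1}^L}\ot \omega_{B_{1,a_1}^R}.
\end{equation*}
An identical analysis across $B_2\,|\,AB_1$ yields the analogous decomposition indexed by $a_2$ with fixed states $\omega_{B_{2,a_2}^R}$. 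Because the $B_1$- and $B_2$-block projectors act on disjoint subsystems and hence commute, I would combine the two into a joint refinement
\begin{equation*}
\rho^{(L)}=\bigoplus_{a_1,a_2} p_{a_1,a_2}\,\tau^{(a_1,a_2)}_A\ot \omega_{B_{1,a_1}^R}\ot\omega_{B_{2,a_2}^R}.
\end{equation*}

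Given this joint block structure, every $\mu\in\cS$ is of the form $\mu\propto\tr_A(O_A\rho^{(L)})$ and therefore inherits the same block decomposition with identical $\omega$-factors; the same applies to $\rho_{l_1+l_2}=\tr_A\rho^{(L)}$. Evaluating $\mu^{it}\rho_{l_1+l_2}^{-it}$ block by block on $\supp(\rho_{l_1+l_2})$, the $\omega$-factors cancel and the result lands in $\bigoplus_{a_1,a_2}\cB(\cH_{B_{1,a_1}^L}\ot\cH_{B_{2,a_2}^L})\ot I=\cM_{l_1}^{\min}\ot\cM_{l_2}^{\min}$. Padding with zero on the blocks where $p_{a_1,a_2}=0$, which form exactly $K_{l_1,l_2}$, and closing under algebra operations then yields the claimed inclusion.

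The hard part will be the combination step, where I must ensure that iterating KI on $B_1$ first and then on $B_2$ within each $a_1$-block produces right-factors $\omega_{B_{2,a_2}^R}$ independent of $a_1$. The resolution uses Lemma~\ref{lem:Mldefinition} in an essential way: $\cM^{\min}_{l_2}$ depends only on $\rho_{l_2}$, and the $B_2$-marginal of each $\rho^{(a_1)}_{AB_2 B_{1,a_1}^L}$ is proportional to $\tr_{AB_1}(P_{a_1}\rho^{(L)})$ with $P_{a_1}$ the $a_1$-block projector on $B_1$, so it lies in the set $\cS$ for region $B_2$ and already carries the $\omega_{B_{2,a_2}^R}$-factorisation dictated by $\cM^{\min}_{l_2}$. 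Uniqueness of the KI decomposition up to unitary within isomorphic blocks then forces alignment across $a_1$, producing the required joint structure.
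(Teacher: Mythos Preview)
Your approach is correct in outline and takes a genuinely different route from the paper. The paper proceeds more abstractly: it uses the identification $\spann\cS=\spann\{W_k^{\alpha_k\beta_k}[l]\}$ (established around Lemma~\ref{vCF=KI}) together with the concatenation identity $W_k^{\alpha_k\beta_k}[l_1+l_2]=\sum_{\gamma_k}W_k^{\alpha_k\gamma_k}[l_1]\ot W_k^{\gamma_k\beta_k}[l_2]$ to obtain the span inclusion $\spann\{\mu_{l_1+l_2}\}\subset\spann\{\mu_{l_1}\ot\mu'_{l_2}\}$, and then invokes a stand-alone monotonicity lemma (if $\cS'\subset\cS$ then $\cM_{\cS'}^{\min}\oplus 0_K\subset\cM_{\cS}^{\min}$) to conclude. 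Your argument instead builds the joint Koashi--Imoto structure on $B_1B_2$ directly from the two single-region decompositions and reads off the cocycle inclusion; this is more geometric and avoids the $W$-matrix machinery altogether, at the price of a more delicate combination step. The paper's route has the advantage that its auxiliary lemma is reusable in other contexts; yours has the advantage of not relying on the MPDO tensor representation at all.

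That combination step is where your justification is thin. Knowing only that the $B_2$-\emph{marginal} of each $\rho^{(a_1)}$ lies in the set $\cS$ for region $B_2$, and hence carries the $\omega_{B_{2,a_2}^R}$-factorisation, does not by itself force the full bipartite state $\rho^{(a_1)}_{AB_2B_{1,a_1}^L}$ to be block-diagonal in $a_2$ with those common right-factors; invoking ``uniqueness of KI'' does not bridge this, since the KI decomposition of $\rho^{(a_1)}$ on $B_2$ could a priori be strictly coarser than $\cM_{l_2}^{\min}$. The clean fix is to drop the iteration and apply the two KI decompositions of $\rho^{(L)}$ (for $B_1$ and for $B_2$) independently: since the block projectors $P_{a_1}$ and $Q_{a_2}$ act on disjoint factors and commute, $P_{a_1}Q_{a_2}\rho^{(L)}Q_{a_2}P_{a_1}$ is simultaneously of the form $(\cdot)\ot\omega_{B_{1,a_1}^R}$ and $(\cdot)\ot\omega_{B_{2,a_2}^R}$, and one partial trace shows it equals $\tau^{(a_1,a_2)}\ot\omega_{B_{1,a_1}^R}\ot\omega_{B_{2,a_2}^R}$. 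With this in place your cocycle computation goes through unchanged.
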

    This lemma follows directly from the next lemma.
\begin{lem}\label{lem:mSmS}
Suppose that $S'$ is dominated by $\rho_1\geq0$, and $S$ is dominated (i.e., $\forall\mu\in\cS, \supp(\mu)\subset\supp(\rho_2)$) by $\rho_2>0$. Denote $K:={\rm Ker}(\rho_1)\cap\supp(\rho_2)$. If $\cS'\subset\cS$, 
\begin{equation}
\cM_{\cS'}^{\min}\oplus0_K\subset\cM_\cS^{\min}\,.    
\end{equation}
\end{lem}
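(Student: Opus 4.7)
The strategy is to realise $\cM_\cS^{\min}$ via its Koashi--Imoto block structure and then verify on generators that every element of $\cM_{\cS'}^{\min}$, once padded by $0_K$, lies inside it.

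I would first apply the decomposition theorem recalled in Sec.~\ref{sec:prelim} to the larger set $\cS$ with dominating state $\rho_2>0$. This produces a direct-sum decomposition $\cH\cong\bigoplus_a\cH_a^L\ot\cH_a^R$ with
\[
\cM_\cS^{\min}\cong\bigoplus_a\cB(\cH_a^L)\ot I_{\cH_a^R}\,,
\]
a fixed family of full-rank right factors $\omega_a^R$ independent of $\mu$, and a block form $\mu=\bigoplus_a q_a(\mu)\,\zeta_a^L(\mu)\ot\omega_a^R$ for every $\mu\in\cS$. Since $\cS'\subset\cS$, every $\mu\in\cS'$ inherits this block form. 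I would then choose (without loss of generality, as the minimal sufficient subalgebra is intrinsic to the set and only the support of the dominating state matters) $\rho_1$ to be a convex combination of elements of $\cS'$ with maximal support, so that $\rho_1=\bigoplus_a p_a\,\sigma_a^L\ot\omega_a^R$ with $\sigma_a^L$ supported on some subspace $\cK_a^L\subseteq\cH_a^L$.

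For $\mu\in\cS'$, the generator $\mu^{it}\rho_1^{-it}$ of $\cM_{\cS'}^{\min}$ then decomposes block by block. The right factors cancel because $(\omega_a^R)^{it}(\omega_a^R)^{-it}=I_{\cH_a^R}$ on $\cH_a^R=\supp(\omega_a^R)$, leaving
\[
\mu^{it}\rho_1^{-it}=\bigoplus_a (q_a(\mu)/p_a)^{it}\,(\zeta_a^L(\mu))^{it}(\sigma_a^L)^{-it}\ot I_{\cH_a^R}\,,
\]
viewed as an operator on $\supp(\rho_1)=\bigoplus_a\cK_a^L\ot\cH_a^R$. Padding with $0_K$ extends each left factor from $\cK_a^L$ to $\cH_a^L$ by zero on its complement, giving an element of $\bigoplus_a\cB(\cH_a^L)\ot I_{\cH_a^R}=\cM_\cS^{\min}$. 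Taking $C^*$-algebras generated on both sides yields $\cM_{\cS'}^{\min}\oplus 0_K\subset\cM_\cS^{\min}$, and Lemma~\ref{lem:minalginclusion} follows by specialising to the natural pair of sets of conditioned reductions on $\Lambda_{l_1+l_2}$.

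The step I expect to be the main obstacle is the invariance of $\cM_{\cS'}^{\min}$ under a change of dominating state of equal support, which is what licenses the passage to a $\rho_1$ enjoying the Koashi--Imoto block form. This is standard in the sufficient-statistics literature but needs a short verification: if $\rho_1$ and $\rho_1'$ share support, then the modular cocycle $\rho_1^{it}(\rho_1')^{-it}$ itself is expressible in terms of elements of $\cS'$ and hence lies in the generated $C^*$-algebra, so the two choices produce identical algebras. Once this invariance is established, the block computation and the $0_K$-padding are routine.
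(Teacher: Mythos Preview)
Your proposal is correct and follows essentially the same route as the paper: apply the Koashi--Imoto block structure to $\cM_\cS^{\min}$, observe that every $\mu\in\cS'\subset\cS$ inherits the block form with common right factors $\omega_a^R$, compute $\mu^{it}\rho_1^{-it}$ blockwise so that the right factors cancel to identities, and pad by $0_K$ to land in $\bigoplus_a\cB(\cH_a^L)\ot I_{\cH_a^R}$. The paper's proof is nearly identical; the only difference is that you explicitly justify replacing $\rho_1$ by a convex combination from $\cS'$ via invariance of $\cM_{\cS'}^{\min}$ under a change of dominating state with equal support, whereas the paper simply asserts $\supp(\rho_1)=\bigoplus_x\cH'^L_x\ot\cH_x^R$ directly (tacitly using that in the intended application $\rho_1$ itself lies in $\cS$).
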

\begin{proof}
Let
\begin{equation}
    \cM_S^{\min}=\bigoplus_x\cB(\cH^L_x)\ot I_{\cH^R_x}\,.
\end{equation}
Any element $\mu\in \cS$ can be decomposed as
\begin{equation}
    \mu=\bigoplus_x{\hat \mu}_{\cH^L_x}\ot\omega_{\cH^R_x}\,,
\end{equation}
where $\omega_{\cH^R_x}>0$. Hence we have
\begin{equation}
    \supp(\rho_1)=\bigoplus_x \cH'^L_x\ot\cH^R_x\,,
\end{equation}
where $\cH'^L_x\subset \cH^L_x$. 
This implies one can define
\begin{equation}
    \rho_1^{-it}=\bigoplus_x{\hat \rho}_{\cH'^L_x}^{-it}\ot\omega^{-it}_{\cH^R_x}\,,
\end{equation}
on $\supp(\rho_1)$ and therefore 
\begin{equation}
    \mu'^{it}\rho_1^{-it}|_{\supp(\rho_1)}=\bigoplus_x{\hat {\mu'}}^{it}{\hat \rho}^{-it}_{\cH'^L_x}\ot I_{\cH^R_x}
\end{equation}
for any $\mu'\in S'$. On $\supp(\rho_2)\supset\supp(\rho_1)$, we have $\mu'^{it}\rho_1^{-it}|_{\supp(\rho_1)}\oplus0_K\in\cM^S$ and
\begin{equation}
    \alg\{\mu'^{it}\rho_1^{-it}|_{\supp(\rho_1)}\oplus0_K\}=\cM^{\min}_{\cS'}\oplus0_K\subset\cM_\cS^{\min}\,.
\end{equation}

\end{proof}

\begin{proof}
From Eq.~\eqref{lem:mu=W}, we have
    \begin{align*}
       \spann\{\mu_{2l}\}&=\spann\{W_k^{\alpha_k\beta_k}[2l]\}\\
       &\subset \spann\{W_k^{\alpha_k\beta_k}[l]\ot W_k^{\gamma_k\delta_k}[l] \}=\spann\{\mu_{l}\ot\mu'_l\}\,.
    \end{align*}
The minimal sufficient subalgebra for $\{\mu_l\ot \mu'_l\}$ is $\cM^S_{l}\ot\cM^S_{l}$, 
and thus, by Lemma~\ref{lem:mSmS}, we have 
\begin{equation}
    \cM^S_{2l}\oplus0_K\subset \cM^S_{l}\ot\cM^S_{l}\,.
\end{equation}
\end{proof}

\section{Proof of theorems}
 \setcounter{prop}{0} 
\subsection{Proof of Proposition~\ref{prop:comultiplication}}
\begin{prop}$\Delta:\cA\mapsto \cA\ot\cA$ is a multiplicative co-multiplication on $\cA$, i.e., it satisfies 
    \begin{equation*}
        \Delta(xy)=\Delta(x)\Delta(y) \quad \forall x,y\in\cA\,
    \end{equation*}
    and
\begin{equation*}
    \bk{\id\ot\Delta}\circ\Delta=\bk{\Delta\ot \id}\circ\Delta\,=:\Delta^2\,.
\end{equation*}
\end{prop}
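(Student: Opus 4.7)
The plan is to unpack the definition
\begin{equation*}
    \Delta = \bk{\pi_{l_1}^{-1}\otimes\pi_{l_2}^{-1}}\circ\iota_{l_1+l_2}\circ\pi_{l_1+l_2},
\end{equation*}
valid for any split $l_1+l_2$ by Definition~\ref{def:MPDORG}, and to read off each property from the fact that $\Delta$ is a composition of $*$-homomorphisms between finite-dimensional $C^*$-algebras.

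For multiplicativity, I would observe that each of the three constituent maps preserves products. The representation $\pi_{l_1+l_2}:\cA\to\cB(\cH_{\Lambda_{l_1+l_2}})$ is multiplicative by definition. The inclusion $\iota_{l_1+l_2}$ furnished by Lemma~\ref{lem:minalginclusion} is the identity embedding of the $C^*$-subalgebra $\cM^{\min}_{l_1+l_2}\oplus 0_{K_{l_1,l_2}}$ into the larger $C^*$-subalgebra $\cM^{\min}_{l_1}\otimes\cM^{\min}_{l_2}$ of the same ambient $\cB(\cH_{\Lambda_{l_1}})\otimes\cB(\cH_{\Lambda_{l_2}})$, so it too preserves products. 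Finally $\pi_{l_1}^{-1}\otimes\pi_{l_2}^{-1}$, well-defined on $\pi_{l_1}(\cA)\otimes\pi_{l_2}(\cA)$ by injectivity of each $\pi_{l_i}$, is the tensor product of two algebra isomorphisms. Composing immediately yields $\Delta(xy)=\Delta(x)\Delta(y)$.

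For coassociativity, I would fix $l_1,l_2,l_3$ and expand each side using the split-independence from Definition~\ref{def:MPDORG}. Writing the outer $\Delta$ as $\Delta_{l_1+(l_2+l_3)}$ or $\Delta_{(l_1+l_2)+l_3}$ and the inner $\Delta$ correspondingly, both $(\id\otimes\Delta)\circ\Delta$ and $(\Delta\otimes\id)\circ\Delta$ reduce, after cancelling the middle $\pi\circ\pi^{-1}$ pair (this cancellation is legitimate because $\pi_{l_1+l_2}$ and $\pi_{l_2+l_3}$ are injective), to
\begin{equation*}
\bk{\pi_{l_1}^{-1}\otimes\pi_{l_2}^{-1}\otimes\pi_{l_3}^{-1}}\circ J\circ\pi_{l_1+l_2+l_3},
\end{equation*}
where $J$ is respectively $(\id\otimes\iota_{l_2+l_3})\circ\iota_{l_1+(l_2+l_3)}$ or $(\iota_{l_1+l_2}\otimes\id)\circ\iota_{(l_1+l_2)+l_3}$.

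The main step, and where I expect the most care, is to verify that these two iterated inclusions coincide as maps $\cM^{\min}_{l_1+l_2+l_3}\to\cM^{\min}_{l_1}\otimes\cM^{\min}_{l_2}\otimes\cM^{\min}_{l_3}$. But each $\iota_*$ is by construction an identity embedding inside the common ambient $\cB(\cH_{\Lambda_{l_1}})\otimes\cB(\cH_{\Lambda_{l_2}})\otimes\cB(\cH_{\Lambda_{l_3}})$, so both composites send any $x\in \cM^{\min}_{l_1+l_2+l_3}$ to the same operator (up to irrelevant zero summands on the respective kernels, which were absorbed in Lemma~\ref{lem:minalginclusion}). Associativity of the tensor product of $C^*$-algebras then gives equality of the two $J$'s, and applying $\pi_{l_1}^{-1}\otimes\pi_{l_2}^{-1}\otimes\pi_{l_3}^{-1}$ on both sides concludes $(\id\otimes\Delta)\circ\Delta=(\Delta\otimes\id)\circ\Delta$.
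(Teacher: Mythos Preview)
Your proposal is correct and follows essentially the same approach as the paper's proof: multiplicativity is obtained by noting $\Delta$ is a composition of $*$-homomorphisms, and coassociativity is reduced to the equality of two iterated inclusion maps inside the common ambient operator algebra, which commute because each $\iota$ is an identity embedding. The only cosmetic difference is that the paper fixes a single $l_0$ and phrases the argument as a commutative diagram of inclusions $\pi_{3l_0}(\cA)\subset\pi_{l_0}(\cA)^{\otimes 3}$, whereas you work with general $l_1,l_2,l_3$ and spell out the cancellation of the intermediate $\pi\circ\pi^{-1}$ explicitly.
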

\begin{proof}
By definition, $\Delta=(\pi^{-1}_{l_1}\otimes\pi^{-1}_{l_1})\circ\iota_{l_1+l_2}\circ\pi_{l_1+l_2}$, which is a composition of $*$-homomorphisms and an inclusion homomorphism, is multiplicative. 

Let $l_0$ be some number. Consider two inclusion relations:
\begin{align*}
    \pi_{3l_0}(\cA)&\subset\pi_{l_0}(\cA)\ot\pi_{2l_0}(\cA)\subset\pi_{l_0}(\cA)\ot\left(\pi_{l_0}(\cA)\ot\pi_{l_0}(\cA)\right)\,,\\
    \pi_{3l_0}(\cA)&\subset\pi_{2l_0}(\cA)\ot\pi_{l_0}(\cA)\subset\left(\pi_{l_0}(\cA)\ot\pi_{l_0}(\cA)\right)\ot\pi_{l_0}(\cA)\,.
\end{align*}
These relations imply the following commutative diagram:
\begin{equation*}
\xymatrix{
&\pi_{2l_0}(\cA)\ot\pi_{l_0}(\cA)\ar[rd]^-{\iota_{l_0+l_0}\ot\id}\ar@{}[dd]|{\circlearrowright}&\\
\pi_{3l_0}(\cA)\ar[ur]_-{\iota_{(2l_0)+l_0}}\ar[dr]_-{\iota_{l_0+(2l_0)}}&&\pi_{l_0}(\cA)^{\otimes 3}\\
&\pi_{l_0}(\cA)\ot\pi_{2l_0}(\cA)\ar[ur]_-{\id\ot\iota_{l_0+l_0}}&
}
\end{equation*}
which yields 
\begin{equation*}
\left(\Delta_{l_0+l_0}\ot\id\right)\circ\Delta_{2l_0+l_0}=\left(\id\ot\Delta_{l_0+l_0}\right)\circ\Delta_{l_0+2l_0}\,.
\end{equation*}
The proposition follows from the assumption that  $\Delta_{l+l'}=\Delta$ for any $l,l'$.
\end{proof}

\section{Proof of Theorem~\ref{thm:prebialgbra}}\label{app:prebialgebra}
We have already established that $\cA$ has a comultiplication $\Delta$, so the dual space $\cA^*$ inherits a multiplication defined as 
\begin{equation}
    \delta_x\delta_y(A):=(\delta_x\ot\delta_y)\circ\Delta(A)\quad \forall A\in\cA\,.
\end{equation}
It is known that $\cA$ has a counit if and only if $\cA^*$ has a unit for this multiplication~\cite{MPOalg1}.  
Now, consider $\cA^*_+=\{(a,z)|a\in\cA,z\in\C\}$, the unitization of $\cA^*$ with a multiplication
\begin{equation*}
    (a,z_1)(b,z_2):=(ab+z_2a+z_1b,z_1z_2)\,.
\end{equation*} 
 $\cA_+^*$ is a finite-dimensional unital $C^*$-algebra, and is therefore isomorphic to a matrix subalgebra on a finite-dimensional Hilbert space.  

Denote the injective representation induced by this isomorphism by $\psi$, and consider a tensor
\begin{equation*}
    O:=\sum_{x\in\cB_\cA}\pi(x)\ot\psi(\delta_x)\,,
\end{equation*}
where $\cB_\cA=\{x\}$ is a basis of $\cA$ and $\{\delta_x\}$ is the dual basis of $\cA^*$. 
By concatenating the tensors, we obtain
\begin{align*}
\sum_{xy}\pi(x)\ot\pi(y)\ot&\psi\bk{\delta_x}\psi\bk{\delta_y}\\
&=\bk{\pi^{\ot 2}\ot\psi}\bk{\sum_{xy}(x\ot y)\ot\delta_x\delta_y}\\
    &=\sum_z\pi^{\ot2}\circ\Delta(z)\ot\psi\bk{\delta_z}\,
\end{align*}
and thus the tensor generates MPOs:
\begin{equation*}
    \underbrace{OO...O}_L=\sum_{x\in\cB_\cA}\pi^{\ot L}\circ\Delta^{L-1}(x)\ot\psi(\delta_x)\,.
\end{equation*}

The standard form of the tensor $O$ is
\begin{equation*}
    O=\sum_{i,j=1}^d\ket{i}\bra{j}\ot O^{ij}\,,
\end{equation*}
and comparison to the definition yields 
\begin{equation*}
    O^{ij}=\psi\bk{f^{ij}}\,,\quad f^{ij}:=\sum_x\bra{i}\pi(x)\ket{j}\delta_x\,.
\end{equation*}
The h-CF of $O$ implies the existence of $X>0$ such that
\begin{equation*}
    XO^{ij}X^{-1}=\bigoplus_s O^{ij}_s\ot K_s\,.
\end{equation*}
We redefine $\psi(\cdot):=X\psi(\cdot)X^{-1}$.

For any element $Y=\sum_{ij}y_{ij}\ket{i}\bra{j}\in\cB(\C^d)$,
\begin{equation*}
   \sum_{i,j}{\bar y}_{ij}f^{ij}= \sum_{i,j}{\bar y}_{ij}\sum_x\bra{i}\pi(x)\ket{j}\delta_x=\lr{\langle Y,\pi(x)}\rangle_{HS}\cdot\delta_x\,,
\end{equation*}
where $\langle A,B\rangle_{HS}:=\tr\bk{A^\dagger B}$. 
Since $\pi$ is an injective representation, for any $y\in\cB_\cA$, there is $Y\in\pi(\cA)\subset\cB(\C^d)$ such that
\begin{equation*}
    \lr{\langle Y,\pi(x)}\rangle_{HS}=\delta_y(x)\,.
\end{equation*}
Therefore, it holds that
\begin{align}
    {\rm span}\lr{\{f^{ij}\:\middle|\: i,j=1,...,d }\}&={\rm span}\lr{\{\delta_x\:|\:\delta_x\in\cB_{\cA^*}}\}\nonumber\\
    &=\cA^*\,.\label{eq:spanAstar}
\end{align}
Eq.~\eqref{eq:spanAstar} implies for any $i,j,kl,$ there is $\alpha_{nm}^{ijkl}$ such that  $f^{ij}f^{kl}=\sum_{nm}\alpha^{ijkl}_{nm}f^{nm}$, and thus 
\begin{align*}
    \psi\bk{f^{ij}f^{kl}}&=\bigoplus_s\bk{\sum_{nm}\alpha^{ijkl}_{nm} O^{nm}_s}\ot K_s\,,\\
    \psi\bk{f^{ij}}\psi\bk{f^{kl}}&=\bigoplus_s O^{ij}_sO^{kl}_s\ot K_s^2\,.
\end{align*}
Thus, $\psi\bk{f^{ij}f^{kl}}=\psi\bk{f^{ij}}\psi\bk{f^{kl}}$ only holds when $K_s^2=K_s$. 
Since the subspaces can always be restricted so that $K_s>0$ without altering the MPOs, $K_s$ act as the identity on the subspace denoted by $I_s$.  Note that due to this restriction, $\psi(\cA)$ may be supported on a smaller subspace than $\psi(\cA^*_+)$. We denote the support of $\psi(\cA)$ by $\C^{\tilde D}$ for some ${\tilde D}\in\mathbb{N}$. 

Finally, $\{O_s^{ij}\}$ is a BNT, and there exists a finite length for which the biCF condition~\eqref{eq:biCF} is satisfied. This implies that
\begin{equation*}
   I_{\tilde D} \in \psi(\cA)\,,
\end{equation*}
and thus $\epsilon_{\cA^*}:=\psi^{-1}(I_{\tilde D})$ serves as the unit of $\cA^*$, which is the counit of $\cA$.

Hence, $\cA$ has a multiplicative comultiplication $\Delta$ and a counit, confirming that $\cA$ is a pre-bialgebra.

\subsection{Proof of Theorem~\ref{thm:MPDOstructure}}\label{app:structurethm}
    \begin{lem}\label{lem:KI=vCF}
    The Koashi-Imoto (KI)-decomposition is equivalent to the decomposition in the vCF.
    \end{lem}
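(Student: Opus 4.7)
The plan is to reduce Lemma~\ref{lem:KI=vCF} to the already-established Lemma~\ref{vCF=KI}, together with the uniqueness part of the Koashi–Imoto theorem. The strategy is to identify the minimal sufficient subalgebra $\cM_l^{\min}$ (which fixes the Koashi–Imoto block structure via Theorem~(Koashi–Imoto), part (i)) with the block structure produced by the v-CF of the vertical tensors $W_k^{\alpha_k\beta_k}[l]$. Since the KI decomposition is, up to unitary equivalence, uniquely determined by $\cM_l^{\min}$, matching block structures automatically forces matching decompositions of the reduced state.

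First, I would invoke the span identity
\begin{equation*}
\spann\{\mu_B\}=\spann\{W_k^{\alpha_k\beta_k}[l]\}
\end{equation*}
established in the proof of Lemma~\ref{vCF=KI}. This is the main bridge: it says that the vertical matrices obtained from the MPDO tensor generate exactly the same operator subspace on $B$ as the dual states $\mu_B\in\cS$ that define the minimal sufficient subalgebra via~\eqref{eq:defminsufalg}. Consequently, the operators $\mu_B^{it}\rho_B^{-it}$ used to build $\cM_l^{\min}$ can be expressed in the basis that diagonalizes the v-CF, and conversely.

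Second, I would apply Lemma~\ref{vCF=KI} directly. That lemma already states that in the basis realizing the v-CF decomposition
\begin{equation*}
\cH_B\cong\bigoplus_{a'=1}^{r'}\cH_{B_{a'}^L}\otimes\cH_{B_{a'}^R},
\end{equation*}
the minimal sufficient subalgebra takes the form $\cM_l^{\min}\cong\bigoplus_{a'}\cB(\cH_{B_{a'}^L})\otimes I_{\cH_{B_{a'}^R}}$, and vice versa. By part~(i) of the Koashi–Imoto theorem, this is precisely the block structure appearing in the KI decomposition~\eqref{eq:KI-dec}. The correlated factors $\rho_{AB_{a'}^L}$ in the KI decomposition are then read off from the $W_{k,a'}^{\alpha_k\beta_k}[l]$ factors in the v-CF (which carry the only legs entangled with the complement), while the invariant factors $\omega_{B_{a'}^R}$ are the (normalized) $\Omega_{a'}^{(l)}$.

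The main obstacle I anticipate is verifying that the Hilbert-space refinement found in the v-CF is neither coarser nor finer than that of the KI decomposition. The coarser direction follows from the fact that $\Omega_{a'}^{(l)}>0$ together with the BNT property of $\{W_{k,a'}^{\alpha_k\beta_k}[l]\}_{a'}$: if two v-CF blocks could be merged into one KI block, the BNT condition (no nontrivial invariant subspace) would be violated, exactly as argued at the end of the proof of Lemma~\ref{vCF=KI}. The finer direction follows from minimality of $\cM_l^{\min}$: any further refinement of the KI decomposition would produce a proper sub-$C^*$-algebra still containing all $\mu_B^{it}\rho_B^{-it}$, contradicting minimality. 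Combining the two directions, the v-CF block structure and the KI block structure coincide, and so the associated decompositions of $\rho_{AB}$ agree up to the unitary $U_{B\to B^LB^R}$ in~\eqref{eq:KI-dec}, proving the lemma.
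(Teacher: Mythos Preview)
Your proposal is correct and matches the paper's own treatment: the paper states Lemma~\ref{lem:KI=vCF} without a separate proof, treating it as an immediate consequence of Lemma~\ref{vCF=KI} (the block structure of $\cM_l^{\min}$ coincides with that of the v-CF) together with the uniqueness of the Koashi--Imoto decomposition. Your write-up simply makes explicit the two-sided refinement argument that the paper leaves implicit.
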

This lemma, together with Eq.~\eqref{eq:min=piA}, implies that for any $l$, the vCF can be  rewritten as
\begin{equation*}
    U_lW_k^{\alpha_k\beta_k}[l]U_l^\dagger=\pi_l\bk{{\hat w}_k^{\alpha_k\beta_k}[l]}\Omega^{(l)}\,,\quad {\hat w}_k^{\alpha_k\beta_k}[l]\in\cA\,,
\end{equation*}
where
\begin{align*}
    \pi_l\bk{{\hat w}_k^{\alpha_k\beta_k}[l]}&:=\bigoplus_{a=1}^{r}W_{k,a}^{\alpha_k\beta_k}[l] \ot I_{d_a^R[l]}\,\in\pi_l(\cA)\\
    \Omega^{(l)}&:=\bigoplus_{a=1}^{r} I_{d_a} \ot \Omega_{a}^{(l)}\,\in\pi_l(\cA)'\,.
\end{align*}
Note that the above definitions of $ \pi_l\bk{{\hat w}_k^{\alpha_k\beta_k}[l]}$ and $\Omega^{(l)}$ are not unique as one can always consider $z\pi_l\bk{{\hat w}_k^{\alpha_k\beta_k}[l]}z^{-1}\Omega^{(l)}$ for any invertible $z\in \pi_l(\cA)\cap\pi_l(\cA)'$ and redefine them accordingly. 
Expanding ${\hat w}^{\alpha_k\beta_k}_k[l]$ in a basis $\cB_\cA:=\{x\}$ of $\cA$ as ${\hat w}^{\alpha_k\beta_k}_k[l]=\sum_{x\in\cB_\cA}\delta_x\bk{{\hat w}^{\alpha_k\beta_k}_k[l]}x$, where $\{\delta_x\}$ is the dual basis of $\cA^*$, the $l$-concatenated tensor can be written as
\begin{equation*}
    \underbrace{MM...M}_{l}=\sum_{x\in \cB_\cA}\pi^{\ot l}\circ\Delta^{l-1}(x)\Omega^{(l)}\ot\psi_l(\delta_x)N^l\,,
\end{equation*}
with
\begin{align*}
N&:=\bigoplus_kI_{D_k^L}\ot N_k\,, \\
\psi_l(\delta_x)&=\bigoplus_k\psi_l^{(k)}(\delta_x)\ot I_{D_k^R}\,,
\end{align*}
where 
\begin{equation*}
    \psi_l^{(k)}(\delta_x):=\sum_{\alpha_k,\beta_k=1}\delta_x\bk{{\hat w}_k^{\alpha_k\beta_k}[l]}\ket{\alpha_k}\bra{\beta_k}\,.
\end{equation*}
From this, we obtain an expression of the MPDO $\rho^{(L)}$
\begin{equation*}
    \rho^{(L)}=\pi^{\ot L}\circ\Delta^{L-1}\bk{w^{(L)}}\Omega^{(L)}\,,
\end{equation*}
where 
\begin{align*}
w^{(L)}&:=\sum_{x\in\cB_\cA}\tr\bk{\psi_L(\delta_x)N^L}x\\
    &=\sum_k\tr\bk{N_k^L}\sum_{\alpha_k}{\hat w}_k^{\alpha_k\alpha_k}[L]\,\in\cA\,.
\end{align*}

Next, we show the structure of $\Omega^{(L)}$. Recall that $W_k^{\alpha_k\beta_k}[l_1+l_2]=\sum_{\gamma_k}W_k^{\alpha_k\gamma_k}[l_1]\ot W_k^{\gamma_k\beta_k}[l_2]$. Considering the v-CF on both sides implies
\begin{align}
 \bigoplus_{a,b=1}^{r}\sum_{\gamma_k} W_{k,a}^{\alpha_k\gamma_k}[l_1]&\ot W_{k,b}^{\gamma_k\beta_k}[l_2]  \ot \Omega_{a}^{(l_1)}\ot \Omega_{b}^{(l_2)} \nonumber \\
    &\cong\bigoplus_{c=1}^{r} W_{k,c}^{\alpha_k\beta_k}[l_1+l_2] \ot \Omega_{c}^{(l_1+l_2)}\,,\label{eq:W2=WW}
\end{align}
where $\cong$ denotes equivalence up to an isometry, from which we derive the following lemma.
\begin{lem}\label{lem:Vab}For any $l_1,l_2\in\mathbb{N}$, there exist isometries $V_{a,b}$ such that 
\begin{align}
  V_{a,b}&\bk{\sum_{\gamma_k}W_{k,a}^{\alpha_k\gamma_k}[l_1]\ot W_{k,b}^{\gamma_k\beta_k}[l_2]}V_{a,b}^\dagger\nonumber\\
  &\hspace{1cm}= \bigoplus_cW_{k,c}^{\alpha_k\beta_k}[l_1+l_2]\otimes \Gamma_{ab}^c[l_1+l_2]\,,\label{eq:Vab}
\end{align}
where each $\Gamma_{ab}^c[l_1+l_2]$ is a diagonal matrix, taking the value $\Gamma_{ab}^c[l_1+l_2]>0$ or $\Gamma_{ab}^c[l_1+l_2]=0$ depending on the choice of $(a,b,c)$. Moreover, $\Gamma_{ab}^c[l_1+l_2]$ is symmetric in the sense that
\begin{equation}
    \Gamma_{ab}^c[l_1+l_2]\cong\Gamma_{ab}^c[l_2+l_1]\,.
\end{equation}
\end{lem}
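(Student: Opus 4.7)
The plan is to compare the two direct-sum decompositions in Eq.~\eqref{eq:W2=WW} block-by-block and to exploit the multimatrix inclusion from Lemma~\ref{lem:minalginclusion}. By Lemma~\ref{lem:KI=vCF}, both sides of \eqref{eq:W2=WW} realize the Koashi--Imoto decomposition of the same operator $W_k^{\alpha_k\beta_k}[l_1+l_2]$, but with respect to two different sufficient subalgebras: the $(a,b)$-blocks on the LHS come from $\cM_{l_1}^{\min}\ot\cM_{l_2}^{\min}$, while the $c$-blocks on the RHS come from the minimal subalgebra $\cM_{l_1+l_2}^{\min}$. By Lemma~\ref{lem:minalginclusion} the latter embeds into the former, and, restricted to the $(a,b)$-summand, this inclusion of finite-dimensional $C^*$-algebras induces an isomorphism $\cH_a^L\ot\cH_b^L\cong\bigoplus_c\cH_c^L\ot m_{ab}^c$, where $m_{ab}^c$ is a multiplicity space of some non-negative dimension $N_{ab}^c$.

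The crucial observation is that as $\alpha_k,\beta_k$ vary, the operator $X_{ab}^{\alpha_k\beta_k}:=\sum_{\gamma_k}W_{k,a}^{\alpha_k\gamma_k}[l_1]\ot W_{k,b}^{\gamma_k\beta_k}[l_2]$ lies in the image of $\cM_{l_1+l_2}^{\min}$ embedded in the $(a,b)$-summand, since the linear span of $\{W_k^{\alpha_k\beta_k}[l_1+l_2]\}$ generates $\cM_{l_1+l_2}^{\min}$ by the argument leading to Eq.~\eqref{lem:mu=W}. The inclusion structure above therefore forces, after conjugation by a partial isometry $V_{a,b}^L:\cH_a^L\ot\cH_b^L\to\bigoplus_c\cH_c^L\ot m_{ab}^c$,
\begin{equation*}
V_{a,b}^L\,X_{ab}^{\alpha_k\beta_k}\,(V_{a,b}^L)^\dagger=\bigoplus_c W_{k,c}^{\alpha_k\beta_k}[l_1+l_2]\ot A_{ab}^c,
\end{equation*}
with $A_{ab}^c$ an $(\alpha_k,\beta_k)$-independent operator on $m_{ab}^c$. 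To obtain the asserted diagonal positive form $\Gamma_{ab}^c$, I would invoke the residual unitary freedom on the "right part" of the v-CF: through Eq.~\eqref{eq:W2=WW}, the factor $\Omega_a^{(l_1)}\ot\Omega_b^{(l_2)}\ot A_{ab}^c$ must match a sub-block of $\Omega_c^{(l_1+l_2)}$, and the spectral theorem then lets me simultaneously diagonalize the remaining positive operator on each $m_{ab}^c$, producing a diagonal positive $\Gamma_{ab}^c$ (or $\Gamma_{ab}^c=0$ whenever $N_{ab}^c=0$). Absorbing the left-part and right-part isometries into a single $V_{a,b}$ yields the claimed identity.

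The symmetry $\Gamma_{ab}^c[l_1+l_2]\cong\Gamma_{ab}^c[l_2+l_1]$ then follows because the inclusion $\cM_{l_1+l_2}^{\min}\subset\cM_{l_1}^{\min}\ot\cM_{l_2}^{\min}$ is carried by the flip automorphism of the tensor product to $\cM_{l_1+l_2}^{\min}\subset\cM_{l_2}^{\min}\ot\cM_{l_1}^{\min}$; the multiplicity spaces and the positive diagonal data are intertwined by this flip, yielding isometrically equivalent $\Gamma_{ab}^c$ in the two orderings. The main obstacle in this plan is the diagonalization step above: after separating the left-parts, the multiplicity-space operator $A_{ab}^c$ is entangled with $\Omega_a^{(l_1)}\ot\Omega_b^{(l_2)}$, and one has to verify that the residual unitary freedom on the right-parts (compatible with the already-fixed BNT structure on the left-parts by Lemma~\ref{vCF=KI}) is sufficient to put the resulting positive operator into simultaneous diagonal form on every $m_{ab}^c$ without disturbing the identification of the left-blocks.
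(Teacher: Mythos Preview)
Your route is different from the paper's. The paper does not give a self-contained argument: it simply states that the proof is ``a slight modification of that of Theorem~4.14 in Ref.~\cite{CIRAC2017100},'' the only change being that the BNT elements now carry a length label $l_1,l_2$. That external argument works directly with MPO canonical forms: one regards $X_{ab}^{\alpha_k\beta_k}:=\sum_{\gamma_k}W_{k,a}^{\alpha_k\gamma_k}[l_1]\ot W_{k,b}^{\gamma_k\beta_k}[l_2]$ as an MPO tensor in the virtual indices $(\alpha_k,\beta_k)$, notes that the operators it generates lie in the span of the $W_k^{\alpha_k\beta_k}[l_1+l_2]$, and invokes uniqueness of the BNT $\{W_{k,c}[l_1+l_2]\}_c$ to force the block form with multiplicity matrices $\Gamma_{ab}^c$; positivity (or vanishing) of $\Gamma_{ab}^c$ is then read off from the positivity of the $\Omega$'s on the right parts.

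Your approach via Lemma~\ref{lem:minalginclusion} and the Bratteli data of the multimatrix inclusion $\cM_{l_1+l_2}^{\min}\subset\cM_{l_1}^{\min}\ot\cM_{l_2}^{\min}$ is a legitimate alternative, and has the virtue of staying inside the paper's own sufficient-subalgebra framework rather than importing the canonical-form machinery of \cite{CIRAC2017100}. One imprecision worth fixing: the $X_{ab}^{\alpha_k\beta_k}$ are not elements of (the image of) $\cM_{l_1+l_2}^{\min}$; they are the ``state-like'' left parts $\hat\mu$ in a KI decomposition, not algebra elements $\hat\mu\ot I$. What you actually need is that the two KI decompositions of $W_k[l_1+l_2]$---one with respect to the minimal $\cM_{l_1+l_2}^{\min}$, the other with respect to the larger $\cM_{l_1}^{\min}\ot\cM_{l_2}^{\min}$---are compatible in the sense that the coarser $(a,b)$-blocks refine to the finer $c$-blocks via the inclusion; this is what the Bratteli multiplicities $N_{ab}^c$ encode, and it does give $A_{ab}^c$ independent of $(\alpha_k,\beta_k)$ once phrased this way. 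The diagonalization obstacle you flag is genuine but is the same step as in the Cirac et al.\ proof: comparing right parts yields $\Omega_c^{(l_1+l_2)}\cong\bigoplus_{a,b}A_{ab}^c\ot\Omega_a^{(l_1)}\ot\Omega_b^{(l_2)}$, and positivity of $\Omega_c^{(l_1+l_2)}$ together with $\Omega_a,\Omega_b>0$ forces $A_{ab}^c\geq0$, after which the spectral theorem produces the diagonal $\Gamma_{ab}^c$.
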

The proof of this lemma is a slight modification of that of Theorem 4.14 in Ref.~\cite{CIRAC2017100}. The only distinction is that each BNT element now depends on the length $l_1,l_2$. 

By inserting Eq.~\eqref{eq:Vab} into Eq.~\eqref{eq:W2=WW}, we obtain
\begin{align}\label{eq:Omegadecomposition}
    \Omega_c^{(l_1+l_2)}&\cong\bigoplus_{a,b=1}^r\Gamma_{ab}^c[l_1+l_2]\ot \Omega_{a}^{(l_1)}\ot \Omega_{b}^{(l_2)}\,.
\end{align}
We can iteratively apply Eq.~\eqref{eq:Omegadecomposition} to obtain that
\begin{align*}
    \Omega_c^{(l)}&\cong\bigoplus_{a_1,b_1=1}^r\Gamma_{a_1b_1}^{c}[1+(l-1)]\ot\Omega_{a_1}\otimes\Omega_{b_1}^{(l-1)}\\
    &\cong\bigoplus_{{\bf a},{\bf b}}\Gamma_{a_1b_1}^{c}[1+(l-1)]\ot\Gamma_{a_2b_2}^{b_1}[1+(l-2)]\\
    &\quad\ot\Omega_{a_1}\ot\Omega_{a_2}\ot\Omega_{b_2}^{(l-2)}\\
    &\cong\cdots\\
    &\cong\Gamma^c_l\Omega^{\ot l}\,,
\end{align*}
where
\begin{align*}
    \Gamma^c_l&:=\bigoplus_{\bf a,b}\Gamma_{a_1b_1}^{c}[1+(l-1)]\ot\Gamma_{a_2b_2}^{b_1}[1+(l-2)]\ot\\
    &\qquad\cdots\ot \Gamma_{a_{l-1}a_l}^{b_{l-2}}[1+1]\bk{\bigotimes_{k=1}^L I_{d_{a_k}^R}}\,,\\
    \Omega&:=\bigoplus_a\Omega_a \,,
\end{align*}
which completes the proof by renaming as $\Gamma_{k,ab}^{c}:=\Gamma_{a_{l-1}a_l}^{b_{l-2}}[1+(l-k)]$. 

\end{document}